\documentclass[11pt,letterpaper]{article}

\usepackage{amsmath,amsthm,nicefrac}
\usepackage{amsfonts, amstext}
\usepackage{bm}

\usepackage{subcaption}
\usepackage{hhline}
\usepackage{soul}
\usepackage[table,dvipsnames]{xcolor}

\usepackage[letterpaper,top=1in,bottom=2cm,left=1in,right=1in,marginparwidth=1.75cm,footskip=1cm]{geometry}

\usepackage[font={small,it}]{caption}

\usepackage{setspace}

\usepackage[shortlabels]{enumitem}
\usepackage[breaklinks]{hyperref}
\hypersetup{colorlinks=true,
            citebordercolor={.6 .6 .6},linkbordercolor={.6 .6 .6},
citecolor=blue,urlcolor=blue,linkcolor=blue,pagecolor=black,breaklinks}

\usepackage[nameinlink]{cleveref}
\Crefname{algocf}{Algorithm}{Algorithms}
\crefname{algocfline}{line}{lines}
\Crefname{invariant}{Invariant}{Invariants}
\Crefname{claim}{Claim}{Claims}
\Crefname{corollary}{Corollary}{Corollaries}
\Crefname{subclaim}{Subclaim}{Subclaims}

\usepackage{epsfig}
\usepackage{amsthm,amssymb}

\usepackage{mathrsfs}
\usepackage{xspace}
\usepackage{soul}
\usepackage{latexsym}
\usepackage{bbm}
\usepackage{dsfont}

\usepackage{accents}

\usepackage{framed}

\definecolor{DarkGray}{rgb}{0.66, 0.66, 0.66}
\definecolor{DarkPowderBlue}{rgb}{0.0, 0.2, 0.6}
\definecolor{fluorescentyellow}{rgb}{0.8, 1.0, 0.0}
\definecolor{cerulean}{rgb}{0.0, 0.48, 0.65}
\definecolor{bleudefrance}{rgb}{0.19, 0.55, 0.91}

\usepackage[ruled,vlined,linesnumbered,algonl,procnumbered]{algorithm2e}
\SetEndCharOfAlgoLine{}
\SetKwComment{Comment}{\footnotesize$\triangleright$\ }{}

\SetCommentSty{mycommfont}

\usepackage{thmtools,thm-restate}

  \usepackage[skip=5pt plus 3pt minus 1pt,parfill=0pt]{parskip}
\makeatletter
  \def\thm@space@setup{\thm@preskip=\parskip \thm@postskip=0pt
  }
  \makeatother

\allowdisplaybreaks

\newtheorem{theorem}{Theorem}[section]
\newtheorem{lemma}[theorem]{Lemma}
\newtheorem{claim}[theorem]{Claim}

\newtheorem{observation}[theorem]{Observation}

\theoremstyle{definition}
\newtheorem{defn}[theorem]{Definition}

\theoremstyle{remark}
\newtheorem{remark}[theorem]{Remark}

\usepackage{tikz}
\usepackage{pgfplots}
\pgfplotsset{compat=newest}
\usetikzlibrary{arrows.meta,calc,decorations.pathreplacing,positioning}

\definecolor{freshgreen}{HTML}{009e74}\definecolor{freshknown}{HTML}{56b3e9}\definecolor{ongoingred}{HTML}{d55e00}
\definecolor{processblue}{RGB}{38,86,210}
\definecolor{processfill}{HTML}{e69f00}
\definecolor{gridgray}{RGB}{215,215,215}

\definecolor{col1}{HTML}{56b3e9}
\definecolor{col2}{HTML}{e69f00}
\definecolor{col3}{HTML}{009e74}
\definecolor{col4}{HTML}{cc79a7}
\definecolor{col5}{HTML}{d55e00}
\definecolor{col6}{HTML}{0071b2}

\providecommand{\hdots}{\ldots}

\tikzset{
vertex/.style={circle, draw, fill=black, black, inner sep=0pt, minimum width=9pt},
asg/.style={line width=1pt},
algCell/.style={draw=gridgray,line width=.35pt},
algProcCell/.style={fill=processfill!50!white,fill opacity=.62,line width=.85pt},
algProcDot/.style={circle,draw=processblue,line width=.9pt,inner sep=0pt,minimum size=7.2pt},
algStayArrow/.style={dashed,draw=black!45,line width=.55pt},
}

\newcommand{\algFreshDot}[4]{\fill[freshgreen] ($ (#1,#2)+(#3,#4) $) circle[radius=3pt];}
\newcommand{\algFreshKnownDot}[4]{\fill[freshknown] ($ (#1,#2)+(#3,#4) $) circle[radius=3pt];}
\newcommand{\algOngoingDot}[4]{\fill[ongoingred] ($ (#1,#2)+(#3,#4) $) circle[radius=3pt];}
\newcommand{\algOptionalDot}[5]{\draw[ongoingred,dashed,line width=.8pt] ($ (#1,#2)+(#3,#4) $) circle[radius=2.55pt];
  \node[anchor=west,inner sep=1pt,text=black!70] at ($ (#1,#2)+(#3+.10,#4-.15) $) {\tiny $#5$};
}

\usepackage{comment}

\usepackage[
 backend=biber,
 style=alphabetic,
 citetracker,
 hyperref=auto,
 maxcitenames=5,
 sortcites,
 sorting=nyt,
 maxbibnames=12,
 date=year,
 isbn=false,
 url=false,
 doi=false,
 eprint=false,
]{biblatex}

\newbibmacro{string+doiurlisbn}[1]{
  \iffieldundef{doi}{
    \iffieldundef{url}{
      #1
    }{
      \href{\thefield{url}}{#1}
    }
  }{
    \href{http://dx.doi.org/\thefield{doi}}{#1}
  }
}

\DeclareFieldFormat
[article,inbook,incollection,inproceedings,patent,thesis,unpublished]
  {title}{\usebibmacro{string+doiurlisbn}{#1}}

\usepackage{titlesec}
\usepackage{lipsum}
\usepackage{float}
\usepackage{csquotes}

\usepackage[normalem]{ulem}

\newcommand{\eps}{\varepsilon}

\newcommand{\ZZ}{\mathbb{Z}}

\newcommand{\opt}{\ensuremath{\mathrm{OPT}}\xspace}
\newcommand{\OPT}{\opt}

\renewcommand{\emptyset}{\varnothing}

\newcommand{\floor}[1]{\lfloor#1\rfloor}

\newcommand{\ts}{{t^*}}

\newcommand{\afull}{\smash{A^{\mathrm{fresh}}}}
\newcommand{\apart}{\smash{A^{\mathrm{on}}}}
\newcommand{\aon}{\smash{A^{\mathrm{on}}}}

\newcommand{\fink}{k^{\mathrm{f}}}
\newcommand{\fint}{t^{\mathrm{f}}}
\newcommand{\redp}{p^{\mathrm{e}}}
\newcommand{\redk}{k^{\mathrm{e}}} 
\newcommand{\redex}{\mathrm{ex}^{\mathrm{e}}}
\newcommand{\ex}{\mathrm{ex}}
\newcommand{\lppart}{\mathrm{LP}}
\newcommand{\dppart}{\mathrm{DP}}

\newcommand{\cafull}{\smash{\mathrm{Q}^{\mathrm{fresh}}}}
\newcommand{\capart}{\smash{\mathrm{Q}^{\mathrm{on}}}}
\newcommand{\caon}{\smash{\mathrm{Q}^{\mathrm{on}}}}

\newcommand{\EX}{\mathbb{E}}
\newcommand{\VAR}{\mathbb{V}}

\newcommand{\prob}{\mathbb{P}}

\def\marrow{\marginpar[\hfill$\longrightarrow$]{$\longleftarrow$}}

\newcommand{\nf}{\nicefrac}

\usepackage{xcolor}
\definecolor{job1}{HTML}{6D9DC5}
\definecolor{job2}{HTML}{E0CC74}
\definecolor{job3}{HTML}{649884}
\definecolor{job4}{HTML}{C9809C}
\definecolor{job5}{HTML}{F39B6D}
\definecolor{job6}{HTML}{547781}
\definecolor{job7}{HTML}{1c8591}

\newcommand\machine{}
\def\machine[#1](#2,#3)(#4,#5){
  \draw[black,#1,line width=2pt] (#4,#3) -- (#2,#3) -- (#2,#5) -- (#4,#5);
}

\newcommand\interval{}
\def\interval[#1](#2,#3){
  \draw[line width=1pt,dotted,black,#1] (#2,0) -- (#2,#3);
}

\newcommand\rectjob{}
\def\rectjob[#1](#2,#3)(#4,#5){
  \draw[black,fill=#1,line width=0.7pt] (#2,#3) rectangle ++(#4,#5);
}

\newcommand\rectjobT{}
\def\rectjobT[#1](#2,#3)(#4,#5)(#6){
  \draw[black,fill=#1,line width=0.7pt] (#2,#3) rectangle ++(#4,#5);
  \node[#1,black] at (#2+#4/2.0,#3+#5/2.0) {{#6}};
}

\newcommand\matchingjob{}
\def\matchingjob(#1,#2)(#3,#4){
  \draw[black,fill=none,line width=1.2pt] (#1,#2) rectangle ++(#3,#4);
}

\newcommand\partialjob{}
\def\partialjob(#1,#2)(#3,#4){
  \draw[black,dashed,fill=none,line width=1.2pt] (#1,#2) rectangle ++(#3,#4);
}

\newcommand\rectjobnob{}
\def\rectjobnob[#1](#2,#3)(#4,#5){
  \draw[white,fill=#1,line width=0.0pt] (#2,#3) rectangle ++(#4,#5);
}

\def\marrow{\marginpar[\hfill$\longrightarrow$]{$\longleftarrow$}}
\def\anupam#1{\textsc{\color{magenta} (Anupam says: }\marrow\textsf{\color{magenta} #1)}}

\title{Better Late Than Never: \\ Online Flow Time Scheduling with Online Estimates}

\author{
   Anupam Gupta\thanks{
       New York University, \texttt{anupam.g@nyu.edu}. 
   } \and Haim Kaplan\thanks{
       Tel Aviv University,
       \texttt{haimk@tau.ac.il}
   }  \and Alexander Lindermayr\thanks{
       Institut für Mathematik, Technische Universität Berlin,
       \texttt{alexander.lindermayr@tu-berlin.de}
   }  \and Jens Schlöter\thanks{
       Department of Mathematics and Computer
Science, University of Southern Denmark,
       \texttt{schloter@imada.sdu.dk}
   } \and
        Sorrachai Yingchareonthawornchai\thanks{
       Institute for Theoretical Studies, ETH Zürich,
       \texttt{sorrachai.yingchareonthawornchai@eth-its.ethz.ch}
   }}

\date{}
\begin{document}

\maketitle

\begin{abstract}

In the classical online flow-time scheduling problem on a single machine, 
jobs arrive over time and must be processed in some order to minimize the total time they spend in the system: for over fifty years, we have known that SRPT is an optimal online algorithm. But this algorithm requires exactness in two different ways: (a) job sizes must be known exactly, and (b) they must be revealed as soon as the job arrives. Recent work relaxed each of these assumptions separately: there are algorithms based on knowing approximate sizes (given when the job arrives), or based on knowing (exact) sizes at some point before the remaining size gets too small. (The latter setting models the phenomenon that the algorithm may ``learn'' job sizes by processing them---but asking for learning algorithms to output exact sizes is unreasonable.) Nonetheless, prior to this work, there was no known approach to relax both of these assumptions simultaneously.

In this work, we consider a model that demands much less: jobs arrive over time. When a job arrives, we are notified. As we process it,
at some point in time between when we complete an $\eps$-fraction and
a $(1-\eps)$-fraction of its unknown processing requirement, we are
informed that the job is ``somewhere in the middle''. Finally, when the
job has received its desired amount of processing, we are informed of
its completion. No other information is shared about the job. We give an $O(1/\eps^2)$-competitive algorithm for this model.

Slightly more generally, we assume that 
an algorithm receives a $\mu$-approximate estimate of each job's processing time at some time before we complete a $(1-\eps)$-fraction of its processing.   
Our algorithm is $O(\mu/\varepsilon)$-competitive, and we show that this is asymptotically optimal. (The previous ``three bit'' model can be solved using this algorithm with $\mu = 1/\eps$.) Our algorithm is a surprisingly natural variant of the multilevel feedback algorithm (MLF), which is widely used in practice, and is parameter-oblivious: it does not need to know $\mu$ or $\varepsilon$ upfront. The core analytical contribution is to generalize and robustify the dual-fitting framework for this problem to handle jobs for which we have not yet received estimates.

 \end{abstract}

\thispagestyle{empty}

\newpage

\setcounter{page}{1}

\section{Introduction}
\label{sec:introduction}

We consider the problem of minimizing the total flow time of jobs
arriving online on a single machine. This is a classical problem for
which the textbook algorithm \emph{Shortest Remaining Processing Time
  (SRPT)} is optimal \cite{Schrage68} (i.e., $1$-competitive). As the name suggests, at
each time the algorithm processes a job which has the least amount of
processing remaining to be done. However, this algorithm relies on two
assumptions:
\begin{enumerate}[label=(A\arabic*)]
\item \label{item:1} \emph{the precise job size $p_j$ for each job $j$ is known}, and 
\item \label{item:2} \emph{the job size is revealed as soon as the job arrives}.
\end{enumerate}
While this is reasonable for some settings, other settings may violate
one or both of these assumptions. The nonclairvoyant model captures
an extreme version of this uncertainty: it assumes that size is
revealed only when the job completes. While SRPT is no longer
applicable, other natural algorithms, like \emph{Round-Robin},
\emph{Shortest Elapsed Time First (SETF)}, or its close cousin,
\emph{Multilevel Feedback (MLF)} are all applicable. However, proving
constant-competitiveness in this model is impossible:
$\Omega(n^{1/3})$-competitive deterministic algorithms and
$\Omega(\log n)$-competitive randomized algorithms being the best
possible \cite{MotwaniPT94}, where $n$ is the total number of
jobs. Given the bleakness of these competitive ratios, we ask: is
there a little more information we can use to get better algorithms?
In the past decades, numerous
works~\cite{BecchettiLMP04,YingchareonthawornchaiT17,AzarLT21,AzarLT22,AzarPT22,GuptaKLSY25,BenomarCLS25,GuptaKPW26,Lindermayr26}
have sought to address precisely this question. However, each relaxes
either \ref{item:1} or \ref{item:2}, but none address both simultaneously.

We propose the following simple \emph{$\eps$-signaling} model:
\begin{quote}
  Jobs arrive over time. When a job arrives, we know nothing about it
  (yet). As we process it, we receive a \emph{signal}. The timing of
  the signal is allowed to be at any time (adversarially) between the times
  when we complete an $\eps$-fraction and a $(1-\eps)$-fraction of its
  unknown processing requirement. Of course, we can measure the
  elapsed time ourselves, and we get to know when we finish processing the job.
  Those are the only pieces of information we get about the job---the information that a job has arrived, the information that we are currently 
  ``somewhere in the middle'' of its processing, and finally, the information
   that the job has just completed. 
\end{quote}

This model requires little clairvoyance from the scheduling
algorithm. Given a weak job profiler (or job estimator) that only
tells us whether we have processed between $\eps$ and $1-\eps$ of the
job's size at some point, can we give good algorithms in this setting,
especially one that does not 
need to know $\eps$ either? To our surprise, we show:

\begin{theorem}[Algorithm for Signaling]
  \label{thm:signal}
  There exists an algorithm for total flow time
  minimization in the $\eps$-signaling model that, for any $\eps \in (0,\nf12]$,  ensures a total flow
  time within a factor of $O(\nf{1}{\eps^2})$ of the optimal value. The
  algorithm does not need any knowledge of $\eps$. 
\end{theorem}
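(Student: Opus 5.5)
The plan is to derive \Cref{thm:signal} as a corollary of the more general estimate model from the abstract. There, each job $j$ reveals a $\mu$-approximate estimate $\hat p_j$ of $p_j$ at some point before a $(1-\eps)$-fraction of $p_j$ has been processed. For that model the paper's MLF-variant is claimed to be $O(\mu/\eps)$-competitive and oblivious to $\mu$ and $\eps$. The reduction turns each signal into such an estimate. If the signal for $j$ arrives after the algorithm has processed $j$ for $e_j$ units, then $\eps p_j \le e_j \le (1-\eps)p_j$. Setting $\hat p_j := e_j$ gives $\eps p_j \le \hat p_j \le p_j$. So $\hat p_j$ is a $(1/\eps)$-approximate estimate (up to the usual normalization, e.g.\ using $e_j/\sqrt{\eps}$ for a symmetric factor $\mu=1/\sqrt\eps$ if the definition requires it). It is revealed before a $(1-\eps)$-fraction is completed. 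Plugging in $\mu = 1/\eps$ gives $O(1/\eps^2)$, and obliviousness to $\mu,\eps$ carries over. The real content is therefore the general theorem, which I would prove as follows.

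\textbf{Algorithm.} The algorithm is a multilevel-feedback variant. A job without an estimate (a ``fresh'' job) sits in MLF level $i$ when its elapsed processing lies in $[2^{i-1},2^i)$. Once an estimate arrives, the job becomes an ``ongoing'' job. Its priority is then determined by its estimated remaining work $\hat p_j - e_j(t)$, suitably rounded to powers of $2$ and floored at a constant fraction of $\hat p_j$, or alternatively by its level $\lceil\log \hat p_j\rceil$. The machine always serves the lowest level, using FIFO or SRPT-style tie-breaking within a level.

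\textbf{Analysis by dual fitting.} I would use the standard flow-time LP with variables $x_{jt}$ and the constraint $\sum_t x_{jt}/p_j \ge 1$. The objective is $\sum_{j,t} x_{jt}\,(t-r_j)/p_j$ plus a term $\sum_{j,t} x_{jt}$. The dual has variables $\alpha_j$ and $\beta_t$, with constraints $\alpha_j/p_j - \beta_t \le (t-r_j)/p_j + 1$ for $t\ge r_j$. I would set $\beta_t$ proportional to the number of alive jobs in the algorithm at time $t$, scaled by $\Theta(\eps/\mu)$. I would set $\alpha_j$ to $j$'s flow time plus the delay that $j$ imposes on others. The dual objective then equals $\Omega(\eps/\mu)\cdot\mathrm{ALG}$, and I need to verify feasibility. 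The constraint for $j$ at time $t$ amounts to charging the jobs alive at $t$ that will delay $j$ to the interval $[r_j,t]$ or to $p_j$. Ongoing jobs behave as in the clairvoyant dual fitting with $\mu$-approximate sizes. Their levels are within $O(\log\mu)$ of the truth, and the geometric volume per level gives a factor of $\mu$. Fresh jobs behave as in the MLF analysis. For those, a job at level $i$ that outlives $j$ has received at most about $\min(2^i, p_j)$ processing.

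\textbf{Main obstacle.} The hardest part is the interaction between the two regimes. A job may remain fresh until almost $(1-\eps)p_j$, so MLF's doubling levels can let a nearly finished large job be preempted by many small ones. Conversely, the first $\eps p_j$ of processing has already been spent blindly. The key step I expect to need is a charging lemma: the blind processing of $j$ before its estimate costs at most $O(1/\eps)$ times $j$'s remaining work. This holds because the remaining work is at least $\eps p_j$. Via $\beta_t$, this should give the $1/\eps$ factor, while the estimate quality gives the $\mu$ factor. I would first try to prove the dual constraint by splitting the alive jobs at time $t$ into fresh and ongoing jobs and bounding each class separately. If cross-terms such as an ongoing job blocking a fresh one do not fit, I would move part of $\alpha_j$ into an amortized potential.
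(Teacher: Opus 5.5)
Your reduction is correct and is exactly how the paper proves this theorem (\Cref{app:reduction}). The estimate $\hat p_j:=e_j(s_j)$ satisfies $\eps p_j\le\hat p_j\le p_j$ and arrives before a $(1-\eps)$-fraction is done. It is therefore a valid online estimate with $\mu_1=1$, $\mu_2=\nf{1}{\eps}$, and \Cref{thm:main} gives $O(\mu_1\mu_2/\eps)=O(\nf{1}{\eps^2})$ without knowing $\eps$. Drop the $e_j/\sqrt{\eps}$ variant: the model already allows asymmetric distortion, and rescaling by a function of $\eps$ would destroy obliviousness.

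The gap is in the part you call the real content. Your own algorithm for the general model has no fresh/ongoing balancing rule, and ``always serve the lowest level'' fails already in the signaling model. Take all sizes large, and fix a constant $C$ depending on $\eps$ and on your rounding/floor constant. Release $A_0$. Whenever your algorithm has brought $A_i$ down to remaining processing $1$, release $A_{i+1}$ with $p_{i+1}\le p_i/C$.
\begin{itemize}
\item While unknown, $A_{i+1}$ sits at MLF level at most $\lceil\log_2 p_{i+1}\rceil$, and its eventual estimate is at most $p_{i+1}$.
\item The older jobs $A_0,\dots,A_i$ are all known, with estimates at least $\eps p_i$.
\item Hence $A_{i+1}$ is strictly the lowest level until it, too, has remaining $1$.
\end{itemize}
After $L=\Theta(\log P/\log C)$ rounds you hold $L+1$ nearly finished jobs. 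The offline schedule that finishes each $A_i$ before turning to $A_{i+1}$ (its deficit grows by one unit per round) holds a single job. Following this with a bombardment of small jobs (as in \Cref{lem:bombardment}) gives ratio $\Omega(\log P)$ for fixed $\eps$. This pile-up of partially processed jobs is exactly what the paper's rule prevents: a fresh job is made ongoing only if at least a quarter of the active jobs are fresh. That rule yields \Cref{lem:non-greedy,lem:many-fresh}, and both the bound on $|A(\ts)|$ and the strong excess lemma depend on them.

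The analysis you sketch has a second gap. The $(\alpha_j,\beta_t)$ fitting of the time-indexed LP is the speed-augmentation technique: there, positivity of $\sum_j\alpha_j-\sum_t\beta_t$ comes from the extra speed, and you give nothing to replace it. The paper instead proves local competitiveness at a fixed time $\ts$ against the excess-based LP of \Cref{sec:lp-relaxation}.

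The obstacle you single out, that blind processing costs at most $O(\nf{1}{\eps})$ times the remaining work, is also not the crux. The crux is that fresh jobs are not full. After exploration and promotion, a fresh job's remaining volume can be far below what its class suggests, so excess arguments comparing remaining volume with class break down, and the reduced-instance trick of earlier work no longer preserves the schedule. The paper resolves this with three ingredients:
\begin{itemize}
\item effective processing times $\redp_j=e_j(\ts)+\eps\min\{p_j,\Lambda_{\redk_j}\}$ and effective classes $\redk_j$ (\Cref{def:effective:time,def:effective:class});
\item excess lemmas that hold for the effective LP regardless of how the algorithm would run on it;
\item a pruning step that discards jobs becoming fresh exactly at $t_{\ge k}+1$ while keeping at least half of the fresh jobs (\Cref{lem:many-in-bar-F}).
\end{itemize}
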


Previously, an $O(\nf{1}{\eps})$-competitive algorithm was shown in the setting where a job's \emph{exact} size is revealed when \emph{exactly} a $(1-\eps)$-fraction of its processing time is done~\cite{GuptaKLSY25}. Our result successfully removes the impractical assumption that a job profiler has to be precise. 

This model and theorem will be a consequence of our actual model of
interest: in the \emph{$(\eps,\mu_1,\mu_2)$-online estimate model},
for each job $j$, the adversary reveals a size estimate
$\hat{p}_j \in [\nf{p_j}{\mu_2}, \mu_1\cdot p_j]$ to the algorithm at some
point in time while at least an $\eps$-fraction of the job $j$ still remains to
be processed. Our main result is as follows:

\begin{theorem}[Algorithm for Online Estimates] \label{thm:main} There is an algorithm
for total flow time minimization in the online estimate model that,
for every $\mu_1,\mu_2 \geq 1$ and $\eps \in (0,1]$, ensures a total flow
  time within a factor of $O(\frac{\mu_1 \mu_2}{\eps})$ of the optimal value.
  It does not require $\mu_1, \mu_2$, or $\eps$ as inputs.
\end{theorem}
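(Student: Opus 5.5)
We prove \Cref{thm:main} by designing a variant of MLF and analyzing it via dual fitting on the standard time-indexed LP relaxation for total flow time. Geometric levels are indexed by powers of two. A job $j$ without an estimate is placed in the level given by its elapsed processing $q_j(t)$, as in MLF. Once $\hat p_j$ arrives, $j$ is placed in the level of $\hat p_j$, and it is promoted only when its elapsed processing exceeds that level. The algorithm always processes the lowest nonempty level, and within a level it uses FIFO (or the earliest-estimated job). Because the algorithm only ever compares quantities it observes ($q_j(t)$ and $\hat p_j$), it needs no knowledge of $\mu_1,\mu_2,\eps$.

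We use the LP $\min \sum_j\int_{r_j}^\infty \frac{t-r_j}{p_j}x_{jt}\,dt$ subject to $\int x_{jt}\,dt\ge p_j$ and $\sum_j x_{jt}\le 1$. Its dual has variables $\alpha_j,\beta_t\ge 0$ with $\alpha_j - p_j\beta_t \le t-r_j$ for all $t\ge r_j$. We set $\beta_t = c\cdot|A(t)|$, where $A(t)$ is the set of alive jobs and $c=\Theta(\eps/(\mu_1\mu_2))$. Then $\sum_t\beta_t$ is a $c$-fraction of the algorithm's cost. We take $\alpha_j$ to be the total waiting-plus-processing that $j$ ``charges'' to jobs that delay it. The goal is $\sum_j\alpha_j - \sum_t\beta_t \ge \Omega(\mathrm{ALG})$ together with feasibility. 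Weak duality then gives the ratio $O(1/c)=O(\mu_1\mu_2/\eps)$.

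\textbf{Defining $\alpha_j$ and checking feasibility.} We set $\alpha_j=\sum_{k}\min\{\text{work done on $k$ while $j$ is alive},\ \text{cap}_{jk}\}$, charging only jobs $k$ that the algorithm prefers over $j$. The caps are chosen so that $\sum_j \alpha_j \ge \Omega(\mathrm{ALG})$ through the usual pairwise-delay identity $\mathrm{ALG}=\sum_j p_j+\sum_{j}\sum_{k}\delta_{jk}$.

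Feasibility at time $t$ splits into the jobs $k$ processed before $t$ and those processed after $t$. Work charged before $t$ is at most $t-r_j$. Work charged after $t$ comes only from jobs alive at $t$ or arriving later. Each such job $k$ is preferred over $j$ only while it sits at a level not higher than $j$'s eventual level. Its charge is therefore at most about $2\max(\text{level}(j))$, and we must show that this level is $O(\mu_1\mu_2 p_j/\eps)$ relative to $p_j$.

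Arrivals after $t$ are handled by the standard MLF argument: a new job can delay $j$ by at most its per-level budget. Hence the after-$t$ contribution is at most $|A(t)|\cdot O(\mu_1\mu_2/\eps)\,p_j = p_j\beta_t$ for the chosen $c$.

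\textbf{Main obstacle: jobs without estimates, or with inaccurate ones.} A job $k$ without an estimate rises through levels by elapsed time. Before its estimate arrives it may consume up to $(1-\eps)p_k$, and after the estimate it may sit at a level as low as $p_k/\mu_2$ or as high as $\mu_1 p_k$.

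For an underestimated $k$ (low level), it can repeatedly preempt $j$. However, it is promoted once its elapsed processing exceeds $\hat p_k$, and from then on it behaves like plain MLF. The excess is bounded because the remaining work after the estimate is at least $\eps p_k$. We charge this remaining work against $\beta$ of the timesteps where $k$ is alive, which costs a factor $\mu_2/\eps$.

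For an overestimated $j$ (high level), $j$ waits behind jobs up to size $\mu_1 p_j$. We pay for this by the factor $\mu_1$ in the cap.

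The key robustified claim is the following. For any job $j$ and time $t$, the total future work on jobs that the algorithm ranks ahead of $j$, per alive job, is $O(\mu_1\mu_2 p_j/\eps)$. We prove it by case analysis on whether $j$ and $k$ have received estimates at $t$. Jobs without estimates are treated as having ``virtual size'' $q_k(t)/(1-\eps)$, which is within the same factors. This is where we expect most of the work, in particular making the charging independent of the adversarial timing of the estimate.
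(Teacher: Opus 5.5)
\textbf{There is a genuine gap, and it is in the algorithm, not only in the analysis.} The rule ``always run the lowest nonempty level, FIFO within a level'' places no limit on how many partially processed jobs can pile up.

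Take $\eps=\mu_1=\mu_2=1$, so $q_j=0$ and $\hat p_j=p_j$; your algorithm is then Shortest-Class-First. Release a job of size $2^K$ at time $0$. Then, for $i=K,K-1,\dots,K/2+1$, release a job of size $2^{i-1}$ at the moment the job of size $2^i$ has exactly one unit left. Each newcomer sits in a strictly lower level and preempts, and no job is ever promoted. So your algorithm ends up holding $K/2+1$ jobs with one unit left each.

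SRPT instead finishes each nearly done job first. It carries a lag of at most $K/2+1<2^{K/2}$ units and holds a single job. Now release a unit job at every integer time for $M\gg 2^K$ steps. Level $0$ always wins, so the stuck jobs are never touched. Your algorithm pays at least $M(K/2+1)$, while SRPT pays $2M+O(2^K)$. That is a ratio of $\Omega(K)=\Omega(\log P)$, with $P=2^K$ the ratio of largest to smallest size, in a setting where the theorem demands $O(1)$.

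Hence no choice of caps or $\alpha_j$ can make your dual both feasible and of value $\Omega(c\cdot\mathrm{ALG})$, and your ``key robustified claim'' cannot hold for this algorithm. Nearly finished jobs inflate $|A(t)|$, and with it $\sum_t\beta_t$, while carrying almost no remaining work that a lower bound on OPT could charge. Separately, bounding the work of jobs arriving \emph{after} $t$ by $|A(t)|\cdot O(\mu_1\mu_2/\eps)\,p_j$ does not follow: the number of future arrivals has nothing to do with $|A(t)|$.

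\textbf{The missing idea is the balancing (``special'') rule of Balanced MLF.} Active jobs are split into fresh and ongoing. A fresh job of smaller class is made ongoing (i.e., allowed to preempt) only if at least a quarter of the active jobs are fresh. A job becomes fresh again whenever its class increases. This gives at most one ongoing job per class (\Cref{lem:stack-monotone}), the special-rule property (\Cref{lem:non-greedy}), and $|\afull(t)|\ge\frac14|A(t)|-1$ (\Cref{lem:many-fresh}). So only fresh jobs have to be paid for, and in the instance above the number of partially processed jobs stays bounded by a constant.

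Even then, fresh jobs are not full: they may have received a lot of exploratory processing before their last promotion. To handle this, the paper still needs several further ingredients:
\begin{itemize}
\item the excess-based LP with effective processing times $e_j(\ts)\le\redp_j\le p_j$ and effective classes;
\item the bound $\redp_j(\ts)\ge\frac{\eps}{\mu_1}\lambda_k$ for $j\in F(k)$;
\item the strong and weak excess lemmas for $S(t_{\ge k}+1)$ and $S(t_{>k}+1)$;
\item the pruning of jobs refreshed exactly at $t_{\ge k}+1$.
\end{itemize}
Your sketch has no counterpart to any of these steps.
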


(For completeness, in \Cref{app:reduction} we prove that \Cref{thm:main}
implies \Cref{thm:signal}.) The algorithm of \Cref{thm:main}, which we call Balanced MLF, is
``hyperparameter-oblivious'': it does not need to know any of
$\mu_1, \mu_2$, or $\eps$. It only performs $O(\sum_j (1+\log_2 p_j))$
preemptions overall (the proof is in \Cref{app:preemptions}). Moreover, it is conceptually simple, close to the
widely-used MLF algorithm, and easy to describe and implement, as we
will see next.

\paragraph{Balanced MLF (informal).}
There are multiple queues (classes) $Q_0,Q_1,\ldots$ of jobs. A job can be
\emph{fresh} or \emph{ongoing}.  When a job arrives, it is fresh and
placed in $Q_0$. Whenever it has received $2^{k}$ units of processing, and its estimate has not yet arrived, it is promoted to queue
$Q_{k+1}$. When the estimate $\hat p_j$ of job $j$ arrives while it
is in queue $Q_k$, it is promoted to queue $Q_{k'}$ with
$k' = \lfloor \log_2 \hat p_j \rfloor$ if $k' > k$; otherwise it stays
ongoing in its current queue.\footnote{We assume w.l.o.g.\ that the estimate arrives only when the algorithm processes a job. Otherwise, the algorithm can defer using the estimate until it processes a job.} Whenever a job is promoted, it becomes
fresh again.  At each time step, if at least a $\nicefrac14$
fraction of active jobs is fresh, make the first job of the
smallest-index nonempty queue ongoing if it is currently fresh;
process the ongoing job with the smallest-index queue for one unit of
time. (We defer a discussion about intuition of the algorithm to
\Cref{sec:tech-overview} and a formal description to \Cref{sec:alg}.)

Besides our algorithmic results, we view the definition of the
\emph{online estimate model} as a central contribution of this
work.   Indeed, it captures and unifies several models for robustness
and noisy/partial information that are popular in the literature:
\begin{enumerate}[label=(\alph*)]
\item The \emph{robust flow time} or \emph{semi-clairvoyant} models
  considered
  by~\cite{BecchettiLMP04,AzarLT21,AzarLT22,AzarPT22,GuptaKPW26},
  which are special cases of our model where $\eps = 1$, and the
  algorithm receives a noisy estimate as soon as the job arrives. Our algorithm (Balanced MLF) is identical to their algorithm (Balance) in \cite{GuptaKPW26} when we restrict $\eps = 1$. 

\item The \emph{$\eps$-clairvoyance model}
  of~\cite{YingchareonthawornchaiT17,GuptaKLSY25,BenomarCLS25}, which
  is another special case of our model where $\mu_1 = \mu_2 = 1$, and
  the algorithm receives an exact size-estimate before the remaining
  fraction of its processing drops below $\eps$.  Balanced MLF is also asymptotically optimal in this model.\item Moreover, algorithms in the $\eps$-clairvoyance model imply
  nonclairvoyant algorithms in the
  \emph{$(1+\eps)$-speed-augmentation} model (see \cite{GuptaKLSY25} for a reduction)
  which gives a connection for yet another well-studied model~\cite{KalyanasundaramP00,ImMP11,Edmonds00}. \end{enumerate}
The first two models relax one of the assumptions \ref{item:1} and
\ref{item:2} we stated at the beginning of the paper. In this work, we
manage to relax both at once. The last model is nonclairvoyant, but
gives a bi-criteria guarantee, comparing apples to oranges. Given that our
model subsumes these previous ones, our algorithm (Balanced MLF) is
asymptotically optimal for these models simultaneously.

Finally, our matching lower bound
shows that the competitiveness is asymptotically optimal:

\begin{theorem}[Lower Bound] \label{thm:lower-bound} For every
  $\mu_1,\mu_2 \geq 1$ and $\eps \in (0,1]$, every (randomized)
  algorithm has a competitive ratio in
  $\Omega(\frac{\mu_1 \mu_2}{\eps})$ for minimizing the total flow
  time on a single machine in the $(\eps,\mu_1,\mu_2)$-online
  estimate model, even if $\mu_1,\mu_2$, and $\eps$ are given up-front to the
  algorithm.
\end{theorem}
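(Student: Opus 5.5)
The plan is to combine the two known sources of hardness into one instance family: the $\Omega(\mu)$ bound for noisy estimates known at arrival, and the $\Omega(\nf1\eps)$ bound for exact sizes revealed only when an $\eps$-fraction remains. Write $\mu = \mu_1\mu_2$. By Yao's principle it suffices to build a distribution over instances on which every \emph{deterministic} algorithm has expected flow time $\Omega(\nf{\mu}{\eps})$ times the expected optimum. All instances will be legal for the same $\mu_1,\mu_2,\eps$, so giving these parameters up front does not help. I would first normalize the estimate: every job $j$ will receive the same estimate value $\hat p$, at the same processed amount, for both of its possible sizes. Whenever the two sizes lie in an interval $[x,\mu x]$, choosing $\hat p = \mu_2 x$ is consistent with both, since $x \ge \hat p/\mu_2$ and $\mu x \le \mu_1\hat p$. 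So the estimate carries no information.

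The construction has two phases.

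\textbf{Batch phase.} At time $0$ release a batch of $m$ jobs. Each job is, independently or with a uniformly random hidden subset, either \emph{small} with size $x$ or \emph{big} with size $L = \mu x/\eps$.

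\textbf{Information control.} For a small job, the estimate is released at time $0$, which is legal since the whole job remains. For a big job, the estimate is released only once $(1-\eps)L$ units are processed. The crucial point is to arrange the sizes so that, until a job has received about $x$ units, small and big jobs look identical. I would take $x$ and $L$ so that a single identical-looking estimate is available for both sizes. If the $\mu$ and $\eps$ gaps cannot be realized in one step with two sizes, I would instead use a three-size ladder $x,\ \mu x,\ \mu x/\eps$, handling the $\mu$-gap by estimate noise and the $\eps$-gap by lateness.

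\textbf{Stream phase.} After time $T \approx m x$, release a long stream of unit-size jobs, one per unit of time for $S \gg m L$ steps. Each stream job gets its exact estimate at arrival. This is the standard trick that converts ``unfinished jobs at time $T$'' into flow time: every job still alive at $T$ pays about $S$ in flow time, or else the algorithm delays stream jobs and pays as much.

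The accounting goes as follows. OPT knows the sizes. It finishes all small jobs by $T$ and postpones the big ones until after the stream, so its cost is about $(\#\text{big})\cdot S$ plus lower-order terms. The algorithm cannot tell small from big until it has invested about $x$ (or, with the ladder, about $\mu x$) units in a job. Hence, by symmetry over the random hidden subset, by time $T$ it has either wasted its budget on big jobs or left many small jobs unfinished. The parameters $m$, the fraction of big jobs, and $T$ will be tuned so that, in expectation, the algorithm has $\Omega(\nf{\mu}{\eps})$ times more alive jobs at time $T$ than OPT. I would aim for OPT having $O(1)$ alive jobs and ALG having $\Omega(\nf{\mu}{\eps})$. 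Multiplying by $S$ then gives the ratio.

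The main obstacle is this counting step: showing that any deterministic, adaptive algorithm is left with $\Omega(\nf\mu\eps)$ alive jobs at $T$. The difficulty is that the algorithm learns from completions and from the timing of estimates, and the bound must hold for every adaptive strategy. I would first try a potential argument in which each unit of processing spent on an unidentified job reduces the expected number of alive small jobs by at most the posterior probability that the job is small. I would choose the prior (e.g., a single small job hidden among $\Theta(\nf\mu\eps)$ big ones, with the other sizes scaled accordingly) so that this probability stays small throughout. If the two-size version fails to give the product, I would fall back to the ladder, proving the $\mu$ and $\nf1\eps$ losses separately at consecutive levels and multiplying them.
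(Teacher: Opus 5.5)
There is a genuine gap, and it sits exactly in the counting step you flag as the main obstacle. For a batch with two (or any constant number of) job sizes, the claimed $\Omega(\nf{\mu}{\eps})$ gap at time $T$ is false, so no potential argument can establish it.

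\textbf{Two sizes give only a constant ratio.} Take $s$ small jobs of size $x$ and $B$ big jobs of size $L\ge 2x$. Consider the deterministic rule ``run an unidentified batch job for $x$ units; if it does not finish, set it aside as big; serve stream jobs on arrival''. Suppose all big jobs come first in this order. At time $ix\le mx$ this prober holds $m-\max\{0,i-B\}$ jobs. SPT, which in your accounting keeps the big jobs, holds $m-\min\{i,s\}$. A four-case check bounds the ratio by $2$. Letting OPT finish some big jobs early changes this only by a constant, since $L\ge 2x$. So the stream phase only amplifies a constant ratio.

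The reason is structural: a failed probe wastes only $x$ units, and it wastes them on a job that OPT is also still holding. Your suggested prior (one small job among $\Theta(\nf{\mu}{\eps})$ big ones) shows this directly: OPT then holds $\Theta(\nf{\mu}{\eps})$ jobs itself. The ladder $x,\mu x,\nf{\mu x}{\eps}$ fails for the same reason, since probing level by level loses only a constant factor per level.

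\textbf{The estimate schedule leaks the hidden subset.} If small jobs receive their estimate at time $0$ and big jobs only later, the algorithm sees at time $0$ which jobs are small. Moreover, no common estimate exists for sizes $x$ and $\nf{\mu x}{\eps}$: the constraints $\hat p\le \mu_1 x$ and $\hat p\ge L/\mu_2$ force $L\le \mu x$. (Also, under the paper's convention $\hat p_j\in[p_j/\mu_2,\mu_1 p_j]$, the common estimate for sizes in $[x,\mu x]$ is $\mu_1 x$, not $\mu_2 x$.)

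\textbf{The missing idea.} You need $\Theta(\nf{\mu}{\eps})$ size levels that the algorithm cannot tell apart, arranged memorylessly, together with exponentially many jobs. The resulting gap is the randomized nonclairvoyant $\Omega(\log n)$ bound, which is only logarithmic in $n$.

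The paper does this as follows. It releases $n=2^{P/8}$ jobs at time $0$, with $P=\nf{\mu}{\eps}$ and sizes $P_j=Y_j+1$ for geometric $Y_j$, conditioned on $\max_j P_j\le P$. It sets $q_j=\min\{P_j-1,\nf1\eps-1\}$ and $\hat p_j=\min\{P_j,\nf1\eps\}$. This is legal:
\begin{itemize}
\item for $P_j\le\nf1\eps$ we have $\eps P_j\le 1$, so the estimate may arrive one unit before completion;
\item for larger jobs, the constant estimate $\nf1\eps$ is within a factor $\mu$ of $P_j\le\nf{\mu}{\eps}$.
\end{itemize}
These estimates are worthless, so the whole size range $[1,\nf{\mu}{\eps}]$ is effectively nonclairvoyant. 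This is how the $\mu$-gap and the $\nf1\eps$-gap multiply. Every unit spent on an unfinished job is then a fair coin flip. Hoeffding gives $\Omega(n^{3/4})$ alive jobs for any algorithm at $T=3(n-n^{3/4})$. Chebyshev, applied to the total work and to the number of jobs longer than $\frac14\log_2 n$, gives $O(n^{3/4}/\log n)$ alive jobs for SPT. The ratio is $\Omega(\log n)=\Omega(\nf{\mu}{\eps})$.

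Your stream phase and your appeal to Yao's principle match the paper's bombardment step and can be kept.
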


\subsection{Technical Overview}\label{sec:tech-overview}

From a technical standpoint, our algorithm \emph{Balanced MLF}
faces various challenges to achieving a constant competitive ratio in the online estimate model.

Our algorithm draws on several previous algorithms (MLF~\cite{Bach86}, SLF~\cite{GuptaKLSY25}, and
\textsc{Balance}~\cite{GuptaKPW26}) and combines them together to achieve the best of
all of them. In particular:
\begin{itemize} 
\item (Nonclairvoyant algorithm) As the name suggests, the base of our
  algorithm is the nonclairvoyant \emph{Multilevel Feedback} (MLF)
  algorithm, which is the classical backbone of modern computer
  schedulers~\cite{Bach86,CoffmanD73,madnick1974operating,SGG2018}.
In MLF, each job gets $2^{k}$ units of processing while in queue
  $Q_k$ before being promoted to queue $Q_{k+1}$. At any time, the
  algorithm runs the front job of the lowest nonempty queue.
\item (Known and unknown jobs). While MLF is not competitive by
  itself, it gives us a framework to explore \emph{unknown} jobs
  (i.e., those whose estimates have not yet arrived), while also
  naturally grouping jobs into classes induced by queues. We build on
  the MLF framework as follows: Whenever we receive a job's estimate,
  the job becomes \emph{known}; we fix its class based on the
  estimate and never promote the job again. This means that we
  balance the decision between exploring unknown jobs and exploiting
  our knowledge of (approximate) shortest jobs when a known job is in
  the smallest class, in a similar spirit to the algorithm in the $\eps$-clairvoyant
  setting~\cite{GuptaKLSY25}.
\item (Partial and full jobs). We call jobs \emph{fresh} when they are
  promoted to a new class and \emph{ongoing} once we started to work
  on them in their current class. The concept of partial and full jobs
  (i.e.\ jobs we have worked on and jobs we have not touched yet after their arrival) in \cite{BecchettiL04,AzarLT22,GuptaKPW26} becomes ongoing and fresh
  jobs, respectively, in the online estimate model because our classes
  (i.e. MLF queues) are \emph{dynamic}. To avoid having too many
  ongoing jobs, we use the balancing rule of \cite{GuptaKPW26} to
  ensure that only a constant fraction of active jobs is ongoing at
  any time.
\end{itemize}
Of course, the challenge is to analyze its performance and show that the combined algorithm is competitive in the online estimate setting.

\medskip\textbf{The Major Challenges.}  The existing analyses do not
seem to be directly applicable to our setting for two fundamental
reasons: (1) \emph{mutability of classes} and (2) \emph{the absence of
  full jobs}. 
The job mutability comes from the properties of our
algorithm. In the setting of~\cite{GuptaKPW26}, estimates are
available when jobs are released, and the relevant class of a job is
essentially fixed during its lifetime. In contrast, Balanced MLF must
explore unknown jobs before their estimates arrive, and this
exploration can change their classes multiple times.

However, the lack of full jobs is inherent in the model: the jobs that
our algorithm keeps under control are not---and indeed, cannot
be---full jobs. Previous analyses
\cite{BecchettiL04,AzarLT21,AzarLT22,GuptaKPW26} crucially rely on
full jobs, i.e., jobs that the algorithm has not yet processed,
because a full job satisfies $p_j(\ts)=p_j$, so its remaining volume
can be compared directly with its estimate $\hat p_j$ using the
distortion parameter $\mu$. This is one reason why these algorithms
maintain a balance between full and partial jobs, control the number
of full jobs, and use this number as a surrogate for the total number
of jobs. However, in our setting, all jobs have to be
partial---indeed, we need to process a job before we learn something
about it. For Balanced MLF, the natural analogs of full jobs are fresh
jobs. But a fresh job is likely not full: it may already have received
a large amount of exploratory processing before it last became
fresh. Thus, its true remaining processing time may be much smaller
than its current class would suggest; this seems like a major problem!

\medskip\textbf{A Deeper Look at the Technical Challenges.}
To understand the difficulties, let us look closer at the proof
approaches. As is standard for flow time algorithms, we want to show
\emph{local competitiveness}; i.e., to prove that for any time $\ts$,
the number of active jobs $|A(\ts)|$ for the algorithm is at most some
factor $\rho$ times the number $|\OPT(\ts)|$ of active jobs for the
optimal algorithm. (Recall that our goal is to show a competitive ratio of
$\rho = O( \frac{\mu_1\mu_2}{\eps})$.)

We want to use dual-fitting 
on the \emph{excess}-based LP of \cite{bansal2014geometry}. For a
subset of jobs $S \subseteq J$ that arrives before time $\ts$, define
the \emph{excess} of a set $S$ of jobs as
$\ex(S) := \max\{0,(\sum_{j \in S} p_{j}) - (\ts - r_S)\}$ where
$r_S := \min_{j \in S} r_j$.  In words: even if an algorithm only
worked on jobs in $S$ since the first job in $S$ was released, the
remaining volume at time $\ts$ would be at least $\ex(S)$. The
constraints  of the excess-based LP say: the jobs in $S$ active at time $\ts$ must have total
size at least $\ex(S)$, which is clearly valid.
The core of the analysis is to find a feasible dual solution of value
$\Omega(\rho^{-1} \cdot |A(\ts)|)$.

The dual-fitting approach from \cite{GuptaKLSY25} for the clairvoyant case $(\eps = 1)$ works as
follows. Fix a class $k$, consisting of the jobs with estimate $\hat p_j \simeq 2^k$. Let $t_{\ge k}$ be the largest time such
that the algorithm processes a job of current class $\ge k$ during
$[t_{\ge k},t_{\ge k}+1]$. Now let $S_k$ be the set of jobs released
during $[t_{\ge k}+1,\ts]$. In the clairvoyant case, fresh jobs have
never been processed, since their size was known right at the
beginning; hence they are \emph{full}. 
The proof finds a \emph{partition of full jobs at time $\ts$}, denoted
by $F_0,F_1,\ldots,F_K$, and shows that if this partition has the
following properties:
\begin{enumerate} [label=(P\arabic*)]  
    \item \label{prop:1} we have a ``large'' excess: $\ex(S_{k}) \geq \sum_{k'<k} \sum_{j \in F_{k'}} p_j(\ts)$,
    \item \label{prop:2} if $j \in S_k$, then $p_j$ is ``not too large''
      (ideally, $p_j \leq 2^k$), and  \item \label{prop:3} if $j \in F_k$, then its remaining volume at time $\ts$ is ``large'' (ideally, $p_j(\ts) \geq 2^k$).
\end{enumerate}
then this implies a feasible dual solution with objective
value~$\Omega(\rho^{-1} \cdot |A(\ts)|)$ where $\eps = 1$ in this case. (We have given slightly
simplified variants of the actual properties, for ease of discussion.)

In general, it is unclear how to construct a partition that satisfies
these properties, especially for property
\ref{prop:1}. The authors in \cite{GuptaKPW26} introduce a ``reduced instance''
trick: they analyze a different instance obtained by reducing the
processing time and the class of each job, which can only improve the optimal schedule. The core idea in their construction is the following: if we reduce the
processing time and class of a full job in such a way that the
algorithm still does not touch the job, then the schedule of the
algorithm will not change at time $\ts$.
By exploiting the additional structure in the reduced instance, they
show that there is a partition of \emph{full jobs} satisfying Properties
\ref{prop:1}, \ref{prop:2}, and \ref{prop:3}.

\medskip\textbf{The Problem with Online Estimates.} 
For the $(\eps,\mu_1,\mu_2)$-online estimates, we use the same definition of $S_k$ as above. However, the notion of full jobs does not apply, as we process fresh jobs while their estimates have not yet arrived. A natural first candidate is to partition the set of fresh jobs (instead of full jobs) at time $\ts$ into $F_0,\ldots, F_K$ in the same way as in~\cite{GuptaKPW26}.
However, the reduced-instance trick is insufficient for achieving the three
properties above. The main culprits are (1) there may be no full jobs
(every job may receive some processing time because estimates need to be explored)
and (2) we can no longer show that the algorithm at time $\ts$ will
produce the same state as the original instance. To elaborate (2), the main challenge is
the following: a job $j$ that is fresh at $\ts$ becomes fresh for the final time before $\ts$ at some time $\fint_j$. The job receives $e_{j}(\fint_j)$ units of processing before time $\fint_j$ and does not receive any processing during $[\fint_j,\ts]$. Hence, when reducing the instance, we have to ensure that $j$ does not receive processing during $[\fint_j,\ts]$ \emph{and} that $j$ receives $e_{j}(\fint_j)$ units of processing before time $\fint_j$. The latter is not necessary when $\eps =1$ as $\fint_j = r_j$, and it turns out to be significantly more challenging.  In particular, it remains unclear whether a useful reduced instance exists on which the algorithm's schedule remains the same.

\medskip\textbf{Our Approach: Dual-fitting on an ``Effective'' LP.} To deal with these issues, 
we consider modified processing times $\tilde p_j$ 
satisfying $e_j(\ts) \leq \tilde p_j \leq p_j$, where $e_j(\ts)$ is the elapsed time of job $j$ at time $\ts$ in the algorithm's original schedule. To avoid confusion with the reduced instance, we refer to them as \emph{effective processing times}. The modified instance is called an \emph{effective instance} (given by $\tilde p_j$). 
We then dual-fit an \emph{effective LP} obtained 
from the original LP after replacing processing times $p_j$ with effective processing times $\tilde p_j$. Since $e_j(\ts) \leq \tilde p_j \leq p_j$, the effective LP value remains a lower bound of $|\opt(\ts)|$ (c.f. \Cref{lem:relaxation} for details). 
In this way, the algorithm's behavior on the instance arising from $\tilde p_j$ is irrelevant to the analysis since we perform dual fitting on an alternative LP relaxation. In particular, this allows us to bypass the previous barrier of using ``reduced instance" trick: we no longer need to prove that the algorithm does not change its schedule at time $\ts$. 
The next step is to determine $\tilde p_j$ 
such that we can find a partition of fresh jobs at time $\ts$ that satisfies Properties \ref{prop:1}, \ref{prop:2}, and \ref{prop:3} with respect to the same sets $S_k$ as in the original job instance.
Compared to the reduced instance in \cite{GuptaKPW26} when $\eps = 1$, our situation is now much more challenging because we cannot use properties of the algorithm's schedule for the effective instance.

The crux of our analysis is the new concept of \emph{effective classes}. For each job $j$, we carefully reduce $j$'s class to a smaller class $\redk_j$ which represents an \emph{effective class} of $j$.  We then partition the fresh jobs at time $\ts$ into these effective classes, yielding~\ref{prop:1}.
In fact, we show the even stronger property that \[
\tilde \ex(S_{k}) \geq \sum_{k'<k} \sum_{j \in F_{k'}} \tilde p_j(\ts)
\]  
for \emph{all} effective instances where $\tilde \ex(S_{k})$ is the excess defined w.r.t.\ the $\tilde p_j$'s. This is independent of whether the algorithm actually processes the instance defined by the $\tilde p_j$ or not, and gives us useful flexibility for the rest of the analysis. This stronger variant of~\ref{prop:1} now allows us to  freely choose an effective LP (as long as $e_j(\ts) \leq \tilde p_j \leq p_j$) such that the partition of fresh jobs into effective classes satisfies ~\ref{prop:2} and~\ref{prop:3} in addition to \ref{prop:1}.  The main challenge in our proof is finding the correct effective classes and proving that we indeed achieve the stronger variant of~\ref{prop:1}.

A side effect of our choice of effective processing times $\tilde p_j$ is that the effective class $\redk_j$ is potentially inconsistent with $\tilde p_j$ (ideally, we would like, for each job $j$, the effective processing time of $j$ to be a power of two of its effective class, but we may get a very tiny effective class but a large effective processing time). 
To handle the inconsistency between effective class and effective processing time, we develop a \emph{pruning} procedure that carefully removes selected fresh jobs from our partition that prevent us from achieving~\ref{prop:1}. We show that there is a limited number of such jobs, which allows us to still prove local competitiveness. 
Interestingly, the problematic jobs are those that change from ongoing to fresh again; a situation that never happens when $\eps = 1$. 
In a similar vein, the proofs of our excess bounds require a careful analysis of jobs that dynamically change their classes.

\subsection{Related Work}
\label{sec:related-work}

Becchetti and Leonardi~\cite{BecchettiL04} show that a randomized
variant of MLF, which ``smoothes'' the queue promotions using
exponentially distributed offsets \cite{KalyanasundaramP03}, achieves the optimal
$\Theta(\log n)$ competitive ratio for randomized nonclairvoyant
algorithms minimizing total flow~\cite{MotwaniPT94}.  
Motwani,
Phillips, and Torng~\cite{MotwaniPT94} show that every deterministic
nonclairvoyant algorithm has a competitive ratio of at least
$P = (\max_j p_j)/(\min_j p_j)$, which can be matched by any
nonpreemptive schedule.  For $m$ parallel identical machines, SRPT
achieves the best-possible competitive ratio of
$O(\log (\min\{\nf{n}{m}, P\}))$~\cite{LeonardiR97,LeonardiR07}, where
the best-known nonclairvoyant algorithm loses an additional
$O(\log n)$ factor~\cite{BecchettiL04}.  When all jobs are available
at time $0$, (a generalization of) Round-Robin achieves the optimal
nonclairvoyant competitive ratio of $2$ even for uniformly related
machines or identical machines with restricted assignment~\cite{MotwaniPT94,JagerLM26}.

Minimizing total flow has also been studied under \emph{speed
  augmentation}, where an algorithm's processor is $1+\eps$ times
faster than the optimum's.  Kalyanasundaram and
Pruhs~\cite{KalyanasundaramP00} showed that SETF is
$O(\nicefrac{1}{\eps})$-competitive for total flow, and there exist
$O_\eps(1)$-competitive nonclairvoyant algorithms for the weighted
objective~\cite{BansalD07} and on unrelated machines~\cite{ImKMP14},
for which no algorithm can achieve a constant competitive ratio even
with clairvoyance~\cite{BansalC09,LeonardiR07}.

The \emph{semi-clairvoyant} model of Bender, Muthukrishnan, and
Rajaraman~\cite{BenderMR04} assumes that the ``class'', defined as
$\floor{\log_\mu p_j}$, of each job $j$ is revealed when $j$ is
released; here $\mu$ is the granularity of prediction. The
\emph{predicted processing time} (also called \emph{robust flow time
  scheduling}) model of Azar, Leonardi, and
Touitou~\cite{AzarLT21,AzarLT22} assumes that each job $j$ comes with
a prediction~$\hat{p}_j$ for its true processing time $p_j$, such that
$\hat{p}_j/p_j \in [\nf{1}{\mu_2},\mu_1]$ for some distortion
parameters $\mu_1,\mu_2 \ge 1$. (The parameters may be known to the
algorithm; or it may be distortion-oblivious.) These predictions can
be motivated by, e.g., learning-based techniques for datacenter
scheduling~\cite{0001HLL0GGQZL24}. An optimal distortion-oblivious
$O(\mu_1\mu_2)$-competitiveness is now known \cite{GuptaKPW26}. 

Yingchareonthawornchai and Torng~\cite{YingchareonthawornchaiT17}
introduced the \emph{$\eps$-clairvoyant model}, motivated by, e.g.,
job profilers~\cite{hilman2018task,xie2021two,weerasiri2017taxonomy}
``learning'' the job sizes while processing jobs, or a GenAI server
learning about session lengths based on interactions thus far.
\cite{GuptaKLSY25} give an optimal
$\lceil \nf{1}{\eps} \rceil$-competitive algorithm for minimizing
total flow in this model.  However, it requires that job sizes are
given exactly when the remaining processing time hits the
$\eps$-percentile: this work does not address Assumption~\ref{item:1}.
When all jobs are available at time $0$, one can achieve competitive
ratios below the optimal nonclairvoyant ratio of $2$ even if the
signal timing is slightly incorrect~\cite{BenomarCLS25}.

Recently, \cite{Lindermayr26} considered a ``precedence-chains''-style
setting, where each job is composed of at most $m$
\emph{operations}. The processing time of the first
operation is revealed at job arrival, and the processing time of the $i$th operation is revealed once the $(i-1)$th operation is completed.
They give an
$O(m^2)$-competitive algorithm for total (job) flow, which builds
on~\cite{GuptaKPW26}. They need to handle \enquote{changing} job
classes as more operations are revealed, which is similar to our
setting. However, since the number of operations
$m$ bounds the number of class changes, they can just introduce
artificial jobs (called \emph{chunks}) for each class change, and lose
a factor
$m$ in the analysis.  In our setting, we can have $\Omega(\log
p_{\max})$ chunks per job, which cannot be bounded by
$\eps$ and
$\mu$. Thus, following the ideas of \cite{Lindermayr26} would lose a
factor of at least $\log p_{\max}$ for online
predictions. 

More broadly, our model falls into the area of learning-augmented
algorithms (aka algorithms with predictions), where algorithms
are equipped with an additional but potentially imprecise input to
overcome pessimistic and notorious worst-case lower
bounds~\cite{MitzenmacherV22}. Online scheduling problems are at the 
core of this body of literature, e.g.,
\cite{PurohitSK18,WeiZ20,ImKQP23,BampisDKLP22,EliasKMM24,BenomarP23,LindermayrM25permutation,AzarLT21,AzarLT22,ZhaoLZ22,AzarPT22,GuptaKPW26,BamasMRS20,AntoniadisGS22}.
As discussed above, predictions are usually given before the online
instance starts or are attached to arriving requests.  Our model can
thus be viewed as ``adversarially arriving online predictions'', which
we believe can be motivated for and applied to other online problems,
leading to a novel direction in the area.

 \subsection{Preliminaries and Notation}
\label{sec:preliminaries}

In our problem, jobs arrive over time: job $j$ arrives at time
$r_j \in \ZZ_{\ge0}$, and requires $p_j \in \ZZ_{\ge1}$ units of processing. 
A feasible schedule, during each time interval $[t,t+1]$ for
$t \in \ZZ_{\ge0}$, runs some job $j(t)$, as long as the job has already been
released, i.e., if $r_{j(t)} \leq t$. (Note that preemption is
allowed, so that job $j$ can be run in nonconsecutive time
intervals.) Let the \emph{elapsed time} for job $j$ at time
$t \in \ZZ_{\ge0}$ with $t \geq r_j$ be
$e_j(t) := \sum_{t'=0}^{t-1} \mathbf1[j = j(t')]$. Then the remaining
processing time $p_j(t) = p_j - e_j(t)$, and the job is said to be
complete when $p_j(t) = 0$.  The \emph{completion time} is
$C_j := \min\{t \mid p_j(t) = 0\}$, and the flow time is
$F_j := C_j - r_j$. Our objective is the total flow time $\sum_j F_j$.

\begin{defn}[Online Estimate Model]
  In the \emph{$(\varepsilon,\mu_1,\mu_2)$-online estimate model}
  for online scheduling, for each job $j$ the adversary chooses not
  only the release time $r_j$ and processing time $p_j$, but also
  (i)~an \emph{estimate} $\hat{p}_j \in \ZZ_{\ge1}$  satisfying
  $\nicefrac{p_j}{\mu_2}\leq \hat p_j \leq \mu_1 p_j$, and (ii)~an
  \emph{online estimate threshold} $q_j \in \ZZ_{\ge 0}$ that satisfies $q_j \leq (1-\eps) p_j$.
When the job arrives at time $r_j$, the algorithm does not know any
  other job parameters. At the first time $t$ when the algorithm has
  worked for $q_j$ units on job $j$, it is told~$\hat{p}_j$. 
  Finally, the algorithm knows when the job finishes, and
  so it learns the true $p_j$ at time $C_j$.
\end{defn}

We call the (integer) time $t$ at which the job receives its estimate $\hat{p}_j$
to be the \emph{signal time} $s_j$; note that
$p_j(s_j) \geq \eps p_j$. The job is said to be \emph{unknown} before
that time, and is \emph{known} from $s_j$ onwards.

\subsection{Paper Organization}
The rest of the paper is organized as follows. We define Balanced MLF precisely in \Cref{sec:alg}. In \Cref{sec:effective}, we describe our effective instance and its important properties. We prove the excess lemmas in \Cref{sec:the excess lemma}, and then carry out the dual-fitting argument in \Cref{sec:dual fitting}. Finally, we prove the matching lower bound in \Cref{sec:lower-bounds}.

\begin{figure}
	\centering
	
	\begin{tikzpicture}[
  x=1.12cm,y=0.64cm,
]

\foreach \x/\k in {0/5,1/4,2/4,3/4,4/0,5/0,6/1,7/1,8/0,9/0}{
  \fill[algProcCell] (\x-.45,\k-.37) rectangle (\x+.45,\k+.37);
}

\foreach \x in {0,1,2,3,4,5,6,7,8,9}{
  \foreach \k in {0,...,6}{\draw[algCell] (\x-.45,\k-.37) rectangle (\x+.45,\k+.37);}
}
\node at (-1,3) {\scriptsize $\ldots$};

\foreach \k in {0,...,6}{\node[left=6pt] at (-1.50,\k) {\scriptsize $\k$};}
\draw[-{Stealth}, thick] (-2.15,0) -- (-2.15,6);
\node[rotate=90] at (-2.55,3) {\scriptsize current class};

\algOngoingDot{0}{5}{0}{0}
\algOngoingDot{1}{4}{0}{0}
\algOngoingDot{2}{4}{-.13}{0}
\algOngoingDot{3}{4}{-.13}{0}
\algOngoingDot{4}{0}{0}{0}
\algOngoingDot{5}{0}{0}{0}
\algOngoingDot{6}{1}{-.13}{0}
\algOngoingDot{7}{1}{0}{0}
\algOngoingDot{8}{0}{0}{0}
\algOngoingDot{9}{0}{0}{0}

\foreach \x/\k/\dx/\dy in {
  0/6/0/0, 
  1/6/-.13/0,
  2/6/-.13/0,
  3/6/-.13/0,
  4/6/-.13/0,
  5/6/-.13/0,
  6/6/-.13/0,
  7/6/-.13/0,
  8/6/-.13/0,
  9/6/-.13/0,
  8/1/0/0,
  9/1/-.13/0
}{\algOngoingDot{\x}{\k}{\dx}{\dy}}

\foreach \x/\k/\dx/\dy in { 
  0/4/0/0,
  1/6/.13/0,
  2/6/.13/0,
  3/6/.13/0,
  4/6/.13/0,
  4/5/0/0,
  5/1/0/0,
  5/5/0/0,
  5/6/.13/0,
  6/1/.13/0,  
  6/5/0/0,
  6/6/.13/0, 
  7/5/-.13/0, 
  7/6/.13/0,
  8/5/-.13/0,
  8/6/.13/0,
  9/5/-.13/0,
  9/6/.13/0,
  9/1/.13/0
}{\algFreshDot{\x}{\k}{\dx}{\dy}}

\foreach \x/\k/\dx/\dy in { 
  2/4/.13/0,
  3/4/.13/0,
  4/4/0/0,
  7/5/.13/0,
  5/4/0/0,
  6/4/0/0,
  7/4/0/0,
  8/4/0/0,
  8/5/.13/0,
  9/4/0/0,
  9/5/.13/0
}{\algFreshKnownDot{\x}{\k}{\dx}{\dy}}

\draw[-{Stealth}, thick] (-2,-1) -- (11,-1) node[below,font=\scriptsize] {time};

\draw[dashed,->,>=stealth,shorten <=8pt,shorten >=8pt] (0,5) -- (1,6);
\draw[dashed,->,>=stealth,shorten <=8pt,shorten >=8pt] (3,4) -- (4,5);
\draw[dashed,->,>=stealth,shorten <=8pt,shorten >=8pt] (3,4) -- (4,5);
\draw[dashed,->,>=stealth,shorten <=8pt,shorten >=8pt] (4,0) -- (5,1);
\draw[dashed,->,>=stealth,shorten <=8pt,shorten >=8pt] (5,0) -- (6,1);
\draw[dashed,->,>=stealth,shorten <=8pt,shorten >=8pt] (8,0) -- (9,1);
\draw[dashed,->,>=stealth,shorten <=8pt,shorten >=8pt] (6,1) -- (7,5);
\draw[dashed,->,>=stealth,shorten <=8pt,shorten >=8pt] (2,-1) -- (2,4);
\draw[dashed,->,>=stealth,shorten <=4pt,shorten >=4pt] (4,-1) -- (4,0);
\draw[dashed,->,>=stealth,shorten <=4pt,shorten >=4pt] (5,-1) -- (5,0);
\draw[dashed,->,>=stealth,shorten <=4pt,shorten >=4pt] (8,-1) -- (8,0);
\draw[dashed,->,>=stealth,shorten <=4pt,shorten >=4pt] (9,-1) -- (9,0);

\begin{scope}[shift={(0,-1)}]
\draw[] (0,0.1) -- (0,-0.1) node[below,font=\scriptsize] {$t_{\ge 5}$};
\draw[] (3,0.1) -- (3,-0.1) node[below,font=\scriptsize] {$t_{\ge 4},t_{\ge 3},t_{\ge 2}$};
\draw[] (7,0.1) -- (7,-0.1) node[below,font=\scriptsize] {$t_{\ge 1}$};
\draw[] (9,0.1) -- (9,-0.1) node[below,font=\scriptsize] {$t_{\ge 0}$};
\draw[] (10,0.1) -- (10,-0.1) node[below,font=\scriptsize] {$\ts$};
\end{scope}

\begin{scope}[shift={(10,4)}]
  \fill[freshgreen] (0,0) circle[radius=3pt];
  \node[anchor=west] at (.15,0) {\scriptsize fresh unknown};
  \fill[freshknown] (0,-.65) circle[radius=3pt];
  \node[anchor=west] at (0.15,-.65) {\scriptsize fresh known};
  \fill[ongoingred] (0,-0.65*2) circle[radius=3pt];
  \node[anchor=west] at (0.15,-0.65*2) {\scriptsize ongoing};
  \fill[algProcCell] (-.2,-0.65*3-.3) rectangle (.2,-0.65*3+.3);
  \fill[ongoingred] (0,-0.65*3) circle[radius=3pt];
  \node[anchor=west] at (0.15,-0.65*3) {\scriptsize processed class};
\end{scope}
\end{tikzpicture}
\caption{Exemplary execution of Balanced MLF. The rectangles indicate job queues (classes) at integer times after the balancing phases. The arrows indicate job promotions and arrivals. The times $t_{\ge k}$ are defined as in \Cref{def:tk-times}. Note that during the first displayed time, Balanced MLF does not make the job of smallest class ongoing because there are too few fresh jobs (cf.\ \Cref{lem:non-greedy}).}
\label{fig:alg-example}
\end{figure}

\section{Balanced MLF}
\label{sec:alg}

We start by defining the \emph{current class} $k_j(t) \in \{0,1,2,\ldots\}$ of a job $j$ at time $t$:
\[
    k_j(t) := \begin{cases}
        \lceil \log_2(e_j(t)+1) \rceil & \text{ if } t \leq s_j-1 \\
        \max\{ \lceil \log_2(e_j(s_j)+1) \rceil, 1+\lfloor \log_2 \hat p_j \rfloor \} & \text{ if } t \geq s_j .
    \end{cases}
\]
In words, until the estimate arrives, the current class of a job $j$
models the power-of-2 class of the elapsed time, with completely fresh
jobs having class $0$. Once the estimate arrives (at time $s_j$),
the current class of $j$ may increase if the predicted processing time
has a higher class than the current elapsed time.  Importantly, the
current class of a job is monotone nondecreasing over time, and every
job starts with current class $0$ unless $s_j=0$.  Since at time $s_j$
we have already verified that job $j$ has length at least $e_j(s_j)$,
we can assume w.l.o.g.\ that $\hat p_j \geq e_j(s_j)$, which in turn
ensures that $k_j(t) = 1+\lfloor \log_2 \hat p_j \rfloor$ if
$t \geq s_j$. 

\paragraph{Algorithm.} We next give a formal description of our algorithm. We assume w.l.o.g.\
that the schedule we produce has no idle periods, since we can perform
the analysis on each busy period separately. We also use the convention that $\min \emptyset = \infty$. 

At any integer time $t \geq 0$, we partition the set of active jobs
$A(t)$ into \emph{ongoing} jobs $\apart(t)$ and \emph{fresh} jobs
$\afull(t)$. To do so, we inductively maintain another partition of $A(t)$ into auxiliary sets $\capart(t)$ and $\cafull(t)$ as follows. Given
the sets $\capart$ and $\cafull$ from iteration $t-1$ (where
$\capart(-1) = \cafull(-1) = \emptyset$), and a set $N$ of new jobs
arriving at time $t$, we define
$\apart(t), \afull(t)$ via the \emph{balancing} phase. This is followed by a \emph{processing} phase, which defines  $\capart(t), \cafull(t)$. These two phases are shown in \Cref{alg:balance,alg:process}. 

For the balancing phase, we use the auxiliary sets $\capart(t - 1)$
and $\cafull(t-1)$ and treat a set of new jobs $N$ as fresh jobs. If
the smallest class of any fresh job is smaller than the smallest class
of all ongoing jobs, and the fresh jobs are at least a quarter of the
total number of active jobs, we change the fresh job with smallest
class to be ongoing. For the processing phase, we process the ongoing job with the smallest class; if its class increases, then make it fresh. 
The following \Cref{obs:1} immediately follows from the definition of the algorithm.

\begin{algorithm}[H]
  \caption{Balancing Phase at time $t$}
  \label{alg:balance}
  let $\mathrm{On}(t) \gets \capart(t-1)$, $\mathrm{Fr}(t)
  \gets \cafull(t-1) \cup N$ \tcp*{local variables}
  let $m :=
  |\mathrm{On}(t)|+|\mathrm{Fr}(t)|.$ \;
      
  \If{$\min \{k_j(t) \mid j \in \mathrm{Fr}(t)\} < \min\{k_j(t)
    \mid j \in \mathrm{On}(t)\}$ and $|\mathrm{Fr}(t)| \geq \frac14 m$}{
    let $j \in \arg \min \{k_{j'}(t) \mid j' \in \mathrm{Fr}(t)\}$ \;
    update
    $\mathrm{Fr}(t)  \gets \mathrm{Fr}(t) \setminus \{j\}$ and
    $\mathrm{On}(t) \gets \mathrm{On}(t) \cup \{j\}$
  }
  let $\aon(t) \gets \mathrm{On}(t)$ and $\afull(t) \gets \mathrm{Fr}(t)$.  \tcp*{Non-mutable variables}
\end{algorithm}

\begin{algorithm}[H]
  \caption{Processing Phase during $[t,t+1]$}
   \label{alg:process}
  let $j(t) \gets \arg\min \{k_{j'}(t) \mid j' \in \aon(t)\}$
  \tcp*{$j(t)$ is 
    ongoing job with smallest current class}
  let $k(t) \gets k_{j(t)}(t)$ \tcp*{class of job processed during
    $[t,t+1]$}
  process job $j(t)$ during $[t,t+1]$ \;
  \If{\normalfont{current} class of $j(t)$ increases, i.e., $k_{j(t)}(t+1) > k_{j(t)}(t)$}{
    $\cafull(t) \gets \afull(t) \cup \{j(t)\}$ and $\caon(t) \gets
    \aon(t) \setminus \{j(t)\}$.     \tcp*{refresh $j(t)$}
  }
  \ElseIf{$j(t)$ \normalfont{is} completed}{
    $\cafull(t) \gets \afull(t)$ and $\caon(t) \gets
    \aon(t) \setminus \{j(t)\}$.     \tcp*{remove $j(t)$}
  }
  \Else {
    $\cafull(t) \gets \afull(t)$ and $\caon(t) \gets
    \aon(t)$.    
  }
\end{algorithm}

\begin{observation}
\label{obs:1}
  At time $t$,
  \begin{align}\label{eq:alg cond}
    \min_{j \in \afull(t)} k_j(t) \geq  \min_{j \in \aon(t)} k_j(t)
    \quad \textbf{ or } \quad |\afull(t)| < \frac14 |A(t)| \ .
  \end{align}
\end{observation}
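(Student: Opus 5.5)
This follows by inspecting the balancing phase at time $t$ (\Cref{alg:balance}), together with two facts: $\aon(t)\cup\afull(t)=A(t)$, and the balancing phase moves at most one job from fresh to ongoing. Let $\mathrm{On}$ and $\mathrm{Fr}$ denote the local sets at the start of the phase, and let $m=|\mathrm{On}|+|\mathrm{Fr}|$. First note that $\mathrm{On}\cup\mathrm{Fr}=\capart(t-1)\cup\cafull(t-1)\cup N$. Since $\capart(t-1),\cafull(t-1)$ partition the jobs active after processing at $t-1$ (the processing phase removes completed jobs), and $N$ consists of the newly arrived jobs, we get $m=|A(t)|$.

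We then split into two cases according to whether the if-condition fires.

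\emph{Case 1: the condition fails.} Then $\aon(t)=\mathrm{On}$ and $\afull(t)=\mathrm{Fr}$, and the negation of the condition is exactly one of the following:
\begin{itemize}
\item $\min_{\afull(t)}k_j(t)\ge \min_{\aon(t)}k_j(t)$;
\item $|\afull(t)|<\tfrac14|A(t)|$.
\end{itemize}
Either way \eqref{eq:alg cond} holds.

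\emph{Case 2: the condition fires.} Then we move a job $j^\ast$ of minimum class in $\mathrm{Fr}$ to the ongoing side. Since $k_{j^\ast}(t)<\min_{\mathrm{On}}k_j(t)$ by the condition, we get
\[
\min_{\aon(t)}k_j(t)=k_{j^\ast}(t)=\min_{\mathrm{Fr}}k_j(t)\le \min_{\afull(t)}k_j(t),
\]
where the last step holds because $\afull(t)=\mathrm{Fr}\setminus\{j^\ast\}$ is a subset of $\mathrm{Fr}$. If $\afull(t)$ is empty, the left side of \eqref{eq:alg cond} is trivially true by the convention $\min\emptyset=\infty$. So the first disjunct of \eqref{eq:alg cond} holds.

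The only points needing care are the identity $m=|A(t)|$ and the empty-set conventions (e.g.\ $\mathrm{On}=\emptyset$ makes its minimum $\infty$). Neither is a real obstacle.
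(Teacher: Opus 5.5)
Your proof is correct and takes the same approach as the paper. You case-split on whether the if-condition of the balancing phase fires: when it fails, its negation is exactly the disjunction; when it fires, the update gives the first disjunct. Your extra checks that $m=|A(t)|$ and that the convention $\min\emptyset=\infty$ causes no trouble are details the paper leaves implicit.
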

\begin{proof}
    Consider the condition of the if-statement in \Cref{alg:balance}. If it does not hold, then we are done. If it holds, then after the update we have  $\min_{j \in \afull(t)} k_j(t) \geq  \min_{j \in \aon(t)} k_j(t)$.
\end{proof}

\subsection{Properties of Balanced MLF}

We first study useful properties of Balanced MLF. 
All proofs from this section are deferred to \Cref{sec:proofs}.
First, the algorithm description implies that at any time, there is at most one ongoing job within each current class.

\begin{restatable}[Ongoing current classes]{observation}{stackmonotone}
\label{lem:stack-monotone}
At every time $t$, 
for each class $k$, there is at most one ongoing job $j \in \aon(t)$
with $k=k_j(t)$.

\end{restatable}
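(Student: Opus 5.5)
We prove the statement by induction on $t$. The invariant we carry is slightly stronger: the auxiliary set $\capart(t)$ never contains two jobs whose current classes at time $t+1$ coincide. We then deduce the claim for $\aon(t)$.

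\textbf{Base case.} We have $\capart(-1)=\emptyset$, so the claim holds trivially.

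\textbf{Balancing phase.} Fix a time $t$ and assume that the jobs in $\capart(t-1)$ have pairwise distinct current classes at time $t$. The balancing phase adds at most one job $j$ to $\mathrm{On}(t)$, and it does so only if
\[
k_j(t)=\min_{j'\in \mathrm{Fr}(t)} k_{j'}(t) < \min_{j'\in \mathrm{On}(t)} k_{j'}(t).
\]
Hence the added job has a class strictly smaller than every class already present. So the jobs of $\aon(t)$ still have pairwise distinct classes at time $t$, which proves the statement at time $t$.

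\textbf{Processing phase.} We must show that distinctness survives the step to time $t+1$. The set $\capart(t)$ is a subset of $\aon(t)$: it equals $\aon(t)$, possibly with $j(t)$ removed. Every job other than $j(t)$ receives no processing during $[t,t+1]$.

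\emph{Claim: an unprocessed job keeps its current class.} Consider $j'\neq j(t)$ and suppose $k_{j'}(t+1)\neq k_{j'}(t)$. Its elapsed time is unchanged, so the only way its class can change is that its signal time is $s_{j'}=t+1$. This is excluded by the footnote's convention that an estimate arrives only while the algorithm processes the job. Equivalently, we may treat the estimate as being revealed during a processing step of $j'$.

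It remains to handle $j(t)$ itself. If its class increases, the algorithm moves it to $\cafull(t)$. If it completes, it is removed. Otherwise its class is unchanged. In every case, the jobs in $\capart(t)$ have the same classes at time $t+1$ as they had at time $t$, and these were pairwise distinct. This closes the induction.

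\textbf{Main obstacle.} The only delicate point is the Claim: a job that is not being processed must not change its current class between balancing steps, for instance through a signal arriving while it is idle. I would settle this by explicitly invoking the stated w.l.o.g.\ assumption that signals occur only during processing. With that assumption, the current class of an idle job is constant by the definition of $k_j(t)$.
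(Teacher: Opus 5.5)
Your proof is correct and follows the same argument as the paper: a job is made ongoing only if its class is strictly below every ongoing class, and an ongoing job whose class changes is immediately made fresh. Your explicit induction, together with your check that an unprocessed job cannot change class, makes rigorous what the paper's one-line proof leaves implicit; that check in fact already follows from the model itself, since a job's signal arrives either at its release or right after a step in which that job is processed.
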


We next show that our balancing rule implies the ``special rule'' property of \cite{BecchettiLMP04}. Intuitively, it says that if we process a job of current class $k$ at time $t$, then there can exist at most one fresh job of smaller current class at this time.
While this property was mainly used in the context of 
static advice (using full and partial jobs 
instead of fresh and ongoing jobs~\cite{GuptaKPW26}), we show that 
this also holds for our more general algorithm.

\begin{restatable}{lemma}{nongreedy}\label{lem:non-greedy}
  Suppose a job $j$ of class $k_j(t)$ is processed during the timestep $[t,t+1]$. If there is a job $j_1 \in \afull(t)$ with $k_{j_1}(t) < k_j(t)$, then there cannot exist a job $i \in A(t) \setminus \{j,j_1\}$ with $k_i(t) \le k_j(t)$.
\end{restatable}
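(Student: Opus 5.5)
The plan is to first reduce to a statement about fresh jobs, and then prove that statement by an invariant over time, looking back to the moment when $j$ last became ongoing.

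\textbf{Step 1 (reduction).} By \Cref{alg:process}, $j=j(t)$ is the ongoing job of smallest current class, so $k_j(t)=\min_{j'\in\aon(t)}k_{j'}(t)$. Suppose some $i\in A(t)\setminus\{j,j_1\}$ has $k_i(t)\le k_j(t)$.
\begin{itemize}
\item If $i$ were ongoing, then $k_i(t)<k_j(t)$ contradicts the minimality of $j$.
\item If $i$ were ongoing, then $k_i(t)=k_j(t)$ contradicts \Cref{lem:stack-monotone}.
\end{itemize}
Hence $i\in\afull(t)$. Moreover, $j_1\in\afull(t)$ with $k_{j_1}(t)<k_j(t)$ means the first alternative of \Cref{obs:1} fails. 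Therefore $|\afull(t)|<\frac14|A(t)|$.

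\textbf{Step 2 (what remains).} It suffices to show the following. Whenever the processed job has class $k$, at most one active job other than the processed job has class $\le k$, provided some fresh job has class $<k$.

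Equivalently, two distinct fresh jobs $j_1,i$ with $k_{j_1}(t)<k_j(t)$ and $k_i(t)\le k_j(t)$ cannot coexist at time $t$. I want to derive a contradiction with the balancing rule, using that every ongoing job was made ongoing only when it was the fresh job of minimum class.

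\textbf{Step 3 (look back to $\tau_j$).} Let $\tau_j\le t$ be the last balancing phase in which $j$ was moved from fresh to ongoing. While ongoing, $j$ keeps class $k_j(t)$, since any class increase refreshes it. At $\tau_j$, every fresh job had class $\ge k_j(t)$. Every other ongoing job had class $>k_j(t)$, since the condition of \Cref{alg:balance} held.

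So both $j_1$ and $i$ must have entered the region of class $\le k_j(t)$ after $\tau_j$. This can happen only in two ways:
\begin{itemize}
\item by arriving, which gives class $0$ or the estimate class;
\item by being refreshed from an ongoing job of class $<k_j(t)$ that was processed in $(\tau_j,t)$.
\end{itemize}
Consider the first time $\tau_1>\tau_j$ at which a fresh job of class $<k_j(t)$ exists. From $\tau_1$ on, the algorithm processes only jobs of class $\le k_j(t)$. From then on, the balancing step makes the minimum-class fresh job ongoing unless the fresh fraction is below $\frac14$.

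\textbf{Step 4 (invariant, the main obstacle).} I would prove by induction over the steps in $[\tau_1,t]$ an invariant of the form: \emph{the set of active jobs of class $\le k_j(t)$, other than the currently processed one, contains at most one job, and that job is fresh whenever a fresh job of class $<k_j(t)$ exists.}

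The inductive step splits according to three events:
\begin{itemize}
\item an arrival;
\item a refresh of the processed job;
\item a balancing move.
\end{itemize}
In each case, either the new low-class fresh job is immediately promoted to ongoing, or the fresh fraction is below $\frac14$ and no new low-class job could have appeared.

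The hard part is making this case analysis tight. Ongoing jobs of class $>k_j(t)$, all created before $\tau_1$, inflate $|A|$ and suppress promotions. So I will first try to strengthen the invariant with a counting statement comparing fresh jobs to ongoing jobs of class $\le k_j(t)$, in the style of the special rule of \cite{BecchettiLMP04}. That argument transfers because the only new phenomenon, refreshing, always produces a fresh job of class strictly larger than the processed class.
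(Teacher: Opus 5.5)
\textbf{There is a genuine gap.} Your Step 1 is correct and matches the opening of the paper's proof: every other active job of class $\le k_j(t)$ is fresh, and \Cref{obs:1} forces $|\afull(t)|<\frac14|A(t)|$. The problem is what follows.

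\emph{Step 3 fails in the tie case.} You claim that both $j_1$ and $i$ entered the region of class $\le k_j(t)$ after $\tau_j$. This is false when $k_i(t)=k_j(t)$. At $\tau_j$, the job $j$ was only \emph{an} argmin of the fresh classes, so $i$ may have been a fresh job of the same class, untouched ever since.

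This tie case is exactly where anchoring at $\tau_j$ breaks down. Jobs active at $\tau_j$ other than $j$ keep their status up to $t$. Every job released after $\tau_j$ and still active at $t$ is fresh at $t$. So the slack $|\afull|-\frac14|A|$ at $t$ equals the slack just before $j$ was moved (which is $\ge 0$), minus $1$, plus $\frac34 m$, where $m$ is the number of such new jobs. If $i$ is new, then $m\ge 2$ and you get a contradiction in two lines. If $i$ is a tied old job, only $m\ge 1$ is guaranteed, and a slack of $-\frac14$ is consistent with the failed balance condition.

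\emph{Step 4 does not close the gap.} It is only a plan, with an obstacle you flag yourself. Moreover, the invariant with the fixed threshold $k_j(t)$ fails in reachable states. Suppose $j$ and two tied fresh jobs of class $k_j(t)$ are present, and a class-$0$ job arrives, is made ongoing and is processed. Then three active jobs of class $\le k_j(t)$ exist besides the processed one. So the proposed induction has no valid base.

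\textbf{The missing idea is to anchor at a job of strictly larger class.}
\begin{itemize}
\item If $j$ is the only ongoing job, then $|A(t)|=|\afull(t)|+1$ with $|\afull(t)|\ge 1$, contradicting $|\afull(t)|<\frac14|A(t)|$.
\item Otherwise, let $j'$ be the ongoing job of next larger class, so $k_{j'}(t)>k_j(t)$ by \Cref{lem:stack-monotone}. Let $t'$ be the last time $j'$ became ongoing. Its class has not changed since $t'$, and at $t'$ every active job had class $\ge k_{j'}(t')$.
\item By monotonicity, none of $j,j_1,i$ was active at $t'$. This holds even with ties, because the threshold $k_{j'}$ is now strictly above $k_j(t)$.
\item Every job active at $t'$ other than $j'$ keeps its status up to $t$, and $j'$ stays ongoing.
\item Let $J'$ be the set of jobs released after $t'$ and active at $t$. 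Every job of $J'$ other than $j$ is fresh at $t$: an ongoing one would have class below $k_{j'}$, hence would be $j$.
\item Compare with the moment just before $j'$ was moved, when the balance condition held. The slack changes by $(|J'|-2)-\frac14|J'|\ge 0$, since $|J'|\ge 3$. So the balance condition holds at $t$, a contradiction.
\end{itemize}

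This also resolves your concern about high-class ongoing jobs inflating $|A|$. They were already present at $t'$, where the balance condition held, so they cancel in the comparison. No invariant over $[\tau_1,t]$ is needed.
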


Finally, we show that the balancing rule of Balanced MLF ensures that
we never make too many jobs ongoing, hence always keep a constant fraction 
of active jobs fresh.
This allows us to focus on fresh jobs later in the analysis.

\begin{restatable}[Many Fresh Jobs]{lemma}{manyfresh}\label{lem:many-fresh}
At every time $t$, it holds that
\(
    |\afull(t)| \ge \frac14 |A(t)| - 1.
\)
\end{restatable}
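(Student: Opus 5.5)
Fix an integer time $t$ and look at what \Cref{alg:balance} does, since $\afull(t)$ is produced there. Let $\mathrm{Fr}$ and $\mathrm{On}$ be the local sets at the start of the balancing phase, so $\mathrm{Fr} = \cafull(t-1)\cup N$, $\mathrm{On} = \capart(t-1)$, and $m = |A(t)|$. The balancing phase moves at most one job from $\mathrm{Fr}$ to $\mathrm{On}$, and it does so only when $|\mathrm{Fr}| \ge \frac14 m$. This gives two cases.
\begin{itemize}
\item If the job is moved, then $|\afull(t)| = |\mathrm{Fr}| - 1 \ge \frac14|A(t)| - 1$.
\item If no job is moved, then $|\afull(t)| = |\mathrm{Fr}| \ge |\cafull(t-1)|$.
\end{itemize}
So the first case is immediate, and the second reduces to an inductive statement about $\cafull(t-1)$.

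I would prove by induction on $t$ the stronger invariant
\[
|\aon(t)| \le \tfrac34 |A(t)| + 1.
\]
This is equivalent to the claim, since $|\afull(t)| = |A(t)| - |\aon(t)|$. The base case is $t=0$: here $\mathrm{On}=\emptyset$, so $|\aon(0)| \le 1$ and the invariant holds.

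For the step from $t-1$ to $t$, first look at the processing phase at $t-1$. It never enlarges the ongoing set: the processed job either stays ongoing, becomes fresh, or is removed. Hence $|\capart(t-1)| \le |\aon(t-1)|$. Next, between $\capart(t-1)$ and the start of round $t$, jobs only arrive; the active set only shrinks at $t-1$ through the completion of an ongoing job, which is already accounted for. So in the balancing phase at $t$ we have $|\mathrm{On}| \le |\aon(t-1)|$, and $m \ge |A(t-1)| - 1$, where the $-1$ appears only when the completed job was ongoing.

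Now split on whether balancing moves a job.
\begin{itemize}
\item \textbf{A job is moved.} The condition gives $|\mathrm{On}| \le \frac34 m$, so $|\aon(t)| = |\mathrm{On}| + 1 \le \frac34|A(t)| + 1$.
\item \textbf{No job is moved.} Then $\aon(t) = \mathrm{On}$, and I need $|\mathrm{On}| \le \frac34 m + 1$.
  \begin{itemize}
  \item If the processed job at $t-1$ stayed ongoing, then $\mathrm{On} = \aon(t-1)$ and $m \ge |A(t-1)|$, so the inductive hypothesis transfers.
  \item If it was refreshed, then $|\mathrm{On}| = |\aon(t-1)| - 1$ while $m \ge |A(t-1)|$, so the bound improves.
  \item If it completed, then both $|\mathrm{On}|$ and $m$ drop by one. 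We get $|\mathrm{On}| - 1 \le \frac34 |A(t-1)| + 1 - 1$, and since $\frac34(m+1) \le \frac34 m + 1$, the bound $|\mathrm{On}| \le \frac34 m + 1$ follows.
  \end{itemize}
  In each sub-case, arrivals only increase $m$, which helps.
\end{itemize}

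The main obstacle I expect is the completion case. There the number of active jobs decreases, so the ratio could in principle deteriorate. The resolution is that the completed job is itself ongoing, and removing one ongoing and one total job preserves a bound of the form $|\mathrm{On}| \le \frac34 m + c$ whenever $c \ge 0$, because $1 \ge \frac34$. I would check this arithmetic carefully. That settles the induction, and the statement follows.
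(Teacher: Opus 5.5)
Your proof is correct and is essentially the paper's argument. Both are inductions in which the balancing move re-establishes the bound directly from the condition $|\mathrm{Fr}| \ge \frac14 m$, while arrivals, refreshes, and completions of ongoing jobs can only preserve it. The differences are cosmetic: you track the complementary invariant $|\aon(t)| \le \frac34|A(t)| + 1$, which is equivalent rather than stronger, and you organize the inductive step per time unit instead of per event.
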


\section{The Effective Instance}\label{sec:effective}

For the analysis, we consider a fixed instance 
and the state of Balanced MLF during its execution on that instance.
To show that Balanced MLF is $c$-competitive, we prove that at any time $t$, its number of active jobs $|A(t)|$
is at most $c$ times the number of active jobs $|\opt(t)|$ in an optimal solution at time $t$.
This means that Balanced MLF is \emph{locally $c$-competitive}. Since at every time each active job contributes one unit to its flow time, local competitiveness clearly implies competitiveness.

For the remaining analysis, fix a time $\ts$ for which we want to show local competitiveness.
Moreover, we have the following assumptions without loss of generality:
\begin{itemize}
    \item The algorithm (and thus the optimum) never idles before time $\ts$, and $|\opt(\ts)| \geq 1$ and $|A(\ts)| \geq 1$. This can be achieved by analyzing all busy phases separately.
    \item We only consider jobs that arrive before time $\ts$, that is, the instance is composed of jobs $J := \{j \mid r_j < \ts\}$. All jobs released strictly after $\ts$ do not influence the local competitive ratio at time $\ts$, and all jobs released exactly at $\ts$ can only improve the local competitive ratio at time $\ts$ because they are part of $A(\ts) \cap O(\ts)$. \end{itemize}

\subsection{Effective Processing Times and Effective Classes}

We now define an effective instance by  reducing the
processing times and classes of some jobs in order to obtain an
LP relaxation with more structural properties. In contrast to their
use in the prior work of \cite{GuptaKPW26}, we do not assume that the
algorithm processes the instance defined by the reduced processing
times, and we do not claim that the algorithm behaves the same on that
instance and on the original instance. Instead, we use these reduced
processing times and classes solely to lower bound $|\opt(\ts)|$.
Allowing this flexibility gives us more control on the kinds of
changes allowed, and seems crucial to our proof.

\begin{defn}[Final classes $\fink_j$ and $\fint_j$]  \label{def:final-class}
    For a job $j$, define
    \[ 
    \fink_j := \begin{cases}
        k_j(\ts) \qquad & j \in A(\ts)\\
        k_j(C_j-1) \qquad & \text{otherwise}
    \end{cases}
    \]
    as the \emph{final class} of $j$, where $C_j$ is the completion time of $j$. We use $\fint_j$ to denote the time at which $j$ was promoted to its final class, i.e., the earliest time $t$ where $k_j(t) = \fink_j$. 
    Note that we might have $r_j = \fint_j$ if $q_j = 0$, i.e., the estimate is available upon arrival of job $j$.
\end{defn}

We assume w.l.o.g.\ that $\fint_j < \ts$ for all $j \in \afull(\ts)$: If a job $j$ becomes fresh at exactly $\ts$, i.e., $\fint_j = \ts$, then it must be processed during $[\ts -1,\ts]$ and we can just continue to treat it as ongoing at time $\ts$. Since $j$ is processed during $[\ts-1,\ts]$ and we only consider the schedule until time $\ts$, this does not affect the schedule at all.

\begin{lemma} \label{lem:fresh tfj}
    If $j \in \afull(t^*)$, then  $j$ is not processed during $[\fint_j, t^* ]$. In particular, $j \in \afull(t)$ for every $\fint_j \le t \le \ts$.
\end{lemma}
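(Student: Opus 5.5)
We argue directly from the dynamics of \Cref{alg:balance,alg:process}, tracking how a job can move between the fresh and ongoing sets. The key facts are these. A job enters $\cafull$ (and hence $\afull$) only upon arrival or when its current class increases during processing, which is the ``refresh'' branch of \Cref{alg:process}. A fresh job can only be processed after the balancing phase first moves it to $\aon$. Once a job is ongoing, it stays ongoing until either its class increases or it completes.

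Fix $j \in \afull(\ts)$. By the convention above, $\fint_j < \ts$, and $k_j(t) = \fink_j$ for all $t \in [\fint_j, \ts]$, since current classes are nondecreasing and $\fink_j = k_j(\ts)$.

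First we show that $j$ is not processed during $[\fint_j,\ts]$, by contradiction. Suppose $j = j(t)$ for some integer $t \in [\fint_j, \ts-1]$, and take the largest such $t$.
\begin{itemize}
\item Being processed at $t$ means $j \in \aon(t)$.
\item The class of $j$ does not change after $\fint_j \le t$, so $k_j(t+1) = k_j(t)$. Hence the refresh branch does not fire, and $j \in \caon(t)$.
\item For any later time $t' \in (t, \ts]$, the balancing phase only moves jobs from $\mathrm{Fr}$ to $\mathrm{On}$, never the reverse. The processing phase only moves the processed job from $\aon$ to $\cafull$, and only when its class increases.
\item By maximality of $t$, $j$ is not processed in $(t,\ts)$, and even if it were, its class would not increase. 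Induction on $t'$ therefore gives $j \in \aon(t')$ for all $t' \in (t,\ts]$.
\end{itemize}
In particular $j \in \aon(\ts)$, contradicting $j \in \afull(\ts)$.

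Next we show $j \in \afull(t)$ for every $t \in [\fint_j, \ts]$, again by contradiction. Suppose $j \in \aon(t)$ for some such $t$.
\begin{itemize}
\item By the same monotonicity argument, $j$ can leave the ongoing set before $\ts$ only by being processed with a class increase, or by completing.
\item A class increase after $\fint_j$ is impossible, and $j$ is still active at $\ts$, so completion is excluded.
\item Hence $j \in \aon(\ts)$, again a contradiction.
\end{itemize}
So $j \in \afull(t)$ throughout. Equivalently, one can argue from the first part: $j$ is never processed in this window, and since $j$ is active there, $j \in A(t)$ for all such $t$.

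The only subtle step is the boundary time $t = \fint_j$ itself. If $\fint_j > r_j$, then $j$ reached class $\fink_j$ through processing during $[\fint_j-1,\fint_j]$, which places it in $\cafull(\fint_j - 1)$ and hence in $\mathrm{Fr}(\fint_j)$. We must also rule out that the balancing phase moves it to $\aon(\fint_j)$ at that same time. The second argument covers this, because any move into $\aon$ at or after $\fint_j$ would persist until $\ts$. The case $\fint_j = r_j$ is identical, since new jobs are placed in $\mathrm{Fr}$. Beyond this, the proof is routine bookkeeping of the two phases, and I expect no real obstacle.
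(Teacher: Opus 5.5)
Your proof is correct and takes essentially the paper's approach: from $\fint_j$ on, the class of $j$ is frozen at $\fink_j$, so a job that becomes ongoing in $[\fint_j,\ts]$ can never be refreshed and cannot complete, since it is active at $\ts$; it would therefore still be ongoing at $\ts$, contradicting $j \in \afull(\ts)$. You simply spell out the phase-by-phase bookkeeping that the paper's one-line argument leaves implicit, though your ``Equivalently'' aside is not equivalent: it only yields $j \in A(t)$, not $j \in \afull(t)$, since an ongoing job can sit unprocessed. The ``in particular'' claim therefore rests on your second argument, which is complete as written.
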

\begin{proof}
 Suppose $j$ is processed some time in $[\fint_j,\ts]$. By definition of $\fint_j$, $j$ was promoted and became fresh at $\fint_j$. The job $j$ must then become ongoing after $\fint_j$ and then become fresh again by time $\ts$, contradicting the definition of $\fint_j$. 
\end{proof}

A useful property of time $\fint_j$ is that a job $j \in \afull(\ts)$ either becomes known exactly at time $\fint_j$ or remains unknown until time $\ts$, as formalized by the following lemma.

\begin{lemma}
    \label{lem:final:class:pred}
    All jobs $j$ satisfy $s_j \ge \fint_j$. That is, the estimate of $j$ is not revealed before time $\fint_j$. If $j \in \afull(\ts)$, then either $s_j = \fint_j$ or $s_j \ge \ts$.
\end{lemma}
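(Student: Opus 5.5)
The plan is to prove both parts by contradiction, using that the current class $k_j(t)$ is nondecreasing in $t$ and that $k_j(t)=1+\lfloor\log_2\hat p_j\rfloor$ for all $t\ge s_j$ (the w.l.o.g.\ assumption $\hat p_j\ge e_j(s_j)$). The key observation is that \emph{after the signal time the class of $j$ is frozen}: for $t\ge s_j$ the formula for $k_j(t)$ no longer depends on $t$.

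\textbf{First claim.} Suppose $s_j<\fint_j$. Then $\fint_j>s_j$, and by definition $\fint_j$ is the earliest time with $k_j(t)=\fink_j$. Since the class is constant on $[s_j,\infty)$, we get $k_j(s_j)=k_j(\fint_j)=\fink_j$, which contradicts the minimality of $\fint_j$. One small detail needs checking: $s_j$ must lie within the lifetime of $j$ that defines $\fink_j$. This holds because the estimate is revealed at the first time $q_j\le(1-\eps)p_j$ units are done, so $s_j<C_j$. Moreover, $k_j(C_j-1)$ and $k_j(\ts)$ are both evaluated at times after $s_j$ whenever $s_j$ precedes them; if $s_j$ is later than those times, the inequality $s_j\ge\fint_j$ is immediate. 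Hence $s_j\ge\fint_j$.

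\textbf{Second claim.} Let $j\in\afull(\ts)$ and suppose $\fint_j<s_j<\ts$. By \Cref{lem:fresh tfj}, $j$ receives no processing during $[\fint_j,\ts]$. By the footnoted convention, however, the estimate only arrives while the algorithm processes $j$; equivalently, $s_j$ is the first time at which $e_j(t)=q_j$, and it is triggered by processing $j$ during $[s_j-1,s_j]$. Since $s_j-1\ge\fint_j$, this processing step lies inside $[\fint_j,\ts]$, a contradiction. Combined with the first claim, this leaves only $s_j=\fint_j$ or $s_j\ge\ts$.

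\textbf{Main obstacle.} The delicate point is the boundary case $s_j=\fint_j+1$, together with the precise convention linking signal times to processing steps. In the paper's convention, the estimate may arrive at time $s_j$ when $j$ attains elapsed time $q_j$, namely right after processing during $[s_j-1,s_j]$. If instead $q_j=e_j(\fint_j)$ is reached at $\fint_j$ itself, then $s_j=\fint_j$, because $s_j$ is defined as the first time the elapsed work equals $q_j$. So I would argue directly through elapsed times. Elapsed time is constant on $[\fint_j,\ts]$ by \Cref{lem:fresh tfj}. Therefore, if $e_j(\fint_j)\ge q_j$, then $s_j\le\fint_j$, and the first claim forces $s_j=\fint_j$. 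If $e_j(\fint_j)<q_j$, then $q_j$ is not reached before $\ts$, so $s_j\ge\ts$. This elapsed-time formulation avoids relying on the informal footnote.
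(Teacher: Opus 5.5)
Your proof is correct and follows the paper's argument. For the first part, you use that $k_j(t)$ is constant for $t \ge s_j$, which contradicts the minimality of $\fint_j$; for the second part, you combine \Cref{lem:fresh tfj} (no processing of $j$ during $[\fint_j,\ts]$) with $\fint_j \le s_j$. Your elapsed-time formulation ($e_j$ is constant on $[\fint_j,\ts]$, so either $q_j \le e_j(\fint_j)$, which forces $s_j=\fint_j$, or $q_j$ is not reached by $\ts$) makes explicit a step the paper leaves implicit.
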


\begin{proof}
    For the first part of the lemma, assume for the sake of contradiction that $s_j < \fint_j$. By definition of current classes, the class of $j$ remains the same at any time $s_j \le t \le \ts-1$. However, this is a contradiction to $j$ reaching its final class at $\fint_j > s_j$.

    For the second part of the lemma, assume that $j \in \afull(\ts)$. Then, the two facts that $j$ is not touched during $[\fint_j,\ts]$ (cf.~\Cref{lem:fresh tfj}) and that $\fint_j \le s_j$ imply that $s_j = \fint_j$ or $s_j \ge \ts$.
\end{proof}

\begin{defn}[Effective classes $\redk_j$]
    \label{def:effective:class}
For every job $j$, we
define its \emph{effective class} as 
    \[
        \redk_j := \begin{cases}
        \fink_j \quad &\text{ if } j \not\in \afull(\ts), \text{ and } \\
      \min\{\fink_j, \max_{\fint_j\le t < \ts } k(t)\} &\text{ if } j \in \afull(\ts).
        \end{cases}
      \]
      where $k(t)$ is the current class of the job processed at time $t$.
\end{defn}
Note that in the definition above, we do not consider time $\ts$ in the maximum. This is because we are interested in local competitiveness at time $\ts$, and do not care which current class is processed during $[\ts,\ts+1]$.

We associate every class $k\ge 0$ with lower and upper bounds 
$\lambda_k := \max\{1,2^{k-1}\}$ and
$\Lambda_k := 2^k$.
Note that $\Lambda_k \le 2\lambda_k$ for every $k\ge 0$.
For our LP-relaxation, we will replace the processing time of a job $j$ with the \emph{effective processing time} as defined below. We select the effective processing time $\redp_j$ of a job $j \in \afull(\ts)$ in a way that only underestimates $p_j$ but still guarantees that the effective remaining time $\redp_j - e_j(\ts)$ is neither too small nor too large compared to the upper bound $\Lambda_{\redk_j}$ of the corresponding effective class. By~\Cref{lem:fresh tfj,lem:final:class:pred}, we have $p_j - e_j(\ts) \ge \eps p_j$ for all $j \in \afull(\ts)$, which motivates the following definition.

\begin{defn}[Effective (remaining) processing times $\redp_j$]
    \label{def:effective:time}

    For every job $j$, we define its \emph{effective processing time} as
    \[
        \redp_j := \begin{cases}      
             e_j(\ts) & \text{ if } j \notin \afull(\ts), \text{ and } \\
            e_j(\ts)+ \eps \cdot \min\{p_j, \Lambda_{\redk_j}\}  & \text{ if } j \in \afull(\ts) \ .
        \end{cases}
    \]
    We define $j$'s \emph{effective remaining time} at any time $t \leq \ts$ as $\redp_j(t) := \redp_j-e_j(t)$.
\end{defn}

\begin{remark}
    We remark that we might have $\redk_j \ll \fink_j$. Since $e_j(\ts)$ can be as large as $\lambda_{\fink_j}$ if $j \in \afull(\ts)$, it can happen that $\redp_j \gg \Lambda_{\redk_j}$ and even  $\redp_j \gg \mu_2 \Lambda_{\redk_j}$. In that sense, the effective processing time of a job $j \in \afull(\ts)$ can be inconsistent with its effective class $\redk_j$. Part of our proof is to show that this inconsistency between $\redp_j$ and $\redk_j$ only happens for a sufficiently small fraction of jobs in $\afull(\ts)$ that can be ignored at the cost of a constant factor in the local competitive ratio. 
\end{remark}

The next observation shows that the effective processing times only underestimate the actual processing times and that they are consistent with the elapsed times of the algorithm on the original instance. In particular, the observation implies that the effective remaining times are non-negative.

\begin{observation}
    \label{obs:relaxed:processing:times}
For each job $j$, it holds that $e_j(\ts) \le \redp_{j} \le p_j$.
\end{observation}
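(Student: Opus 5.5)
The plan is a case split on whether $j \in \afull(\ts)$, following the two cases of \Cref{def:effective:time}.

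\emph{Case $j \notin \afull(\ts)$.} Here $\redp_j = e_j(\ts)$, so the lower bound holds with equality. The upper bound $e_j(\ts) \le p_j$ holds because elapsed time never exceeds the processing time: a job is never processed after it completes, so $e_j(t) \le p_j$ for all $t$.

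\emph{Case $j \in \afull(\ts)$.} Here $\redp_j = e_j(\ts) + \eps \min\{p_j, \Lambda_{\redk_j}\}$. The lower bound $e_j(\ts) \le \redp_j$ follows since $\eps > 0$ and $\min\{p_j,\Lambda_{\redk_j}\} \ge 0$.

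For the upper bound it suffices to show $p_j - e_j(\ts) \ge \eps p_j$, since then
\[
\redp_j \le e_j(\ts) + \eps p_j \le p_j.
\]
To get this remaining-work bound, I would use \Cref{lem:final:class:pred}, which gives either $s_j = \fint_j$ or $s_j \ge \ts$.
\begin{itemize}
\item[(i)] If $s_j \ge \ts$, the estimate has not arrived by time $\ts$. Since the signal is triggered the first time the algorithm has worked $q_j$ units on $j$, we get $e_j(\ts) \le q_j \le (1-\eps)p_j$. More precisely, if $e_j(\ts) > q_j$, the signal would already have been issued before $\ts$, a contradiction; the boundary case $e_j(\ts) = q_j$ still yields the bound.
\item[(ii)] If $s_j = \fint_j$, \Cref{lem:fresh tfj} says $j$ is not processed during $[\fint_j, \ts]$. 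Hence $e_j(\ts) = e_j(s_j) = q_j \le (1-\eps)p_j$, because the signal time is exactly when the elapsed time reaches $q_j$.
\end{itemize}
In both subcases $p_j(\ts) = p_j - e_j(\ts) \ge \eps p_j$, which is the remark the paper makes just before \Cref{def:effective:time}.

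The only mildly delicate step is subcase (i), the precise relation between $e_j(\ts)$ and $q_j$ when the signal has not yet occurred. It is handled directly by the model definition: an unknown job has elapsed time at most $q_j$.
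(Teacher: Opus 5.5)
Your proof is correct and follows the paper's own argument: the same case split on whether $j \in \afull(\ts)$, with the fresh case reduced to $e_j(\ts) \le (1-\eps)p_j$ via \Cref{lem:final:class:pred} and \Cref{lem:fresh tfj}. The only cosmetic difference is in the subcase $s_j \ge \ts$: you bound $e_j(\ts) \le q_j$ directly at $\ts$, whereas the paper bounds $p_j(\fint_j) \ge \eps p_j$ at $\fint_j$ for both subcases and transfers it to $\ts$ using that $j$ is untouched during $[\fint_j,\ts]$.
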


\begin{proof}
 We distinguish two cases.
    \begin{itemize}
        \item If $j \notin \afull(\ts)$, then $\redp_j =e_j(\ts) \le p_j$ as no job can receive more than $p_j$ amount of processing.
        \item If $j \in \afull(\ts)$, then, by \Cref{lem:fresh tfj}, $j$ is part of $\afull(t)$ for every $\fint_j \le t \le \ts$, and $j$ is not touched during $[\fint_j,\ts]$. At time $\fint_j$, job $j$ is either still unknown or has just become known (cf.~\Cref{lem:final:class:pred}). In either case, $p_j(\fint_j) \ge \eps p_j$ and thus $e_j(\ts) = e_j(\fint_j) \le (1-\eps) p_j$. We can conclude with
        \[
        e_j(\ts) \le \redp_j = e_j(\ts)+ \eps \cdot \min\{p_j, \Lambda_{\redk_j}\} \le (1-\eps) p_j + \eps p_j = p_j. \qedhere
        \]
    \end{itemize}
\end{proof}

As mentioned above, the effective remaining time of a job $j$ is chosen to be small compared to $\Lambda_{\redk_j}$. 

\begin{lemma}\label{lem:p-class-bounds}
    For each job $j$, it holds that $\redp_{j}(\ts) \leq \eps \cdot \Lambda_{\redk_j}$. 
\end{lemma}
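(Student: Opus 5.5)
The proof will be a direct unfolding of \Cref{def:effective:time}, with the definition of $\redp_j(t)$ evaluated at $t=\ts$, namely $\redp_j(\ts) = \redp_j - e_j(\ts)$. I will split into the same two cases that define $\redp_j$.

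\textbf{Case $j \notin \afull(\ts)$.} Here $\redp_j = e_j(\ts)$, so $\redp_j(\ts) = 0$. Since $\eps > 0$ and $\Lambda_{\redk_j} = 2^{\redk_j} \ge 1 > 0$, the bound $\redp_j(\ts) \le \eps \Lambda_{\redk_j}$ holds trivially. This also covers jobs completed before $\ts$ and ongoing jobs at $\ts$; all of them fall into this case regardless of their effective class.

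\textbf{Case $j \in \afull(\ts)$.} Here $\redp_j = e_j(\ts) + \eps \min\{p_j, \Lambda_{\redk_j}\}$. Hence
\[
\redp_j(\ts) = \eps \cdot \min\{p_j, \Lambda_{\redk_j}\} \le \eps \cdot \Lambda_{\redk_j}.
\]
The only point to check is that $\redk_j$ is a well-defined class, so that $\Lambda_{\redk_j}$ makes sense. In \Cref{def:effective:class} the maximum ranges over $\fint_j \le t < \ts$. This range is nonempty because of the standing assumption $\fint_j < \ts$ for fresh jobs. Moreover, $k(t)$ is defined for each such $t$ because the schedule never idles before $\ts$. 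Therefore $\redk_j$ is a nonnegative integer.

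I expect no real obstacle: the statement is designed into the definition of $\redp_j$. The only subtlety is the well-definedness of $\redk_j$, which the earlier w.l.o.g.\ assumptions handle.
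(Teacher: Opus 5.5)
Your proof is correct and is essentially the paper's argument: a case split following \Cref{def:effective:time}, giving $\redp_j(\ts)=0$ for non-fresh jobs and $\redp_j(\ts)=\eps\cdot\min\{p_j,\Lambda_{\redk_j}\}$ for fresh ones. The paper splits the non-fresh case into ongoing and completed jobs and routes the fresh case through \Cref{lem:fresh tfj}; your direct computation $\redp_j(\ts)=\redp_j-e_j(\ts)$ makes that detour unnecessary.
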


\begin{proof}
Fix a job $j$ with $r_j \le \ts$. One of the following cases must hold.
\begin{itemize}
    \item If $j \in \apart(\ts)$, then $\redp_{j}(\ts) = e_j(\ts)-e_j(\ts)=0 \le  \eps \cdot \Lambda_{\redk_j}$.
     \item If $j \not\in A(\ts)$, then $\redp_{j}(\ts) = p_j(\ts) = 0\le  \eps \cdot \Lambda_{\redk_j}$. 
\item If $j \in \afull(\ts)$, then $j$ has not been touched after reaching its final class (cf.~\Cref{lem:fresh tfj}). That is, $j$ has not been touched during $[\fint_j,\ts]$. By~\Cref{def:effective:time}, the effective remaining time at $\ts$ is 
    $$\redp_j(\ts) = \redp_j(\fint_j) = \eps \cdot \min\{p_j, \Lambda_{\redk_j}\} \le \eps \Lambda_{\redk_j}.$$
\end{itemize}
This concludes the proof of the lemma.
\end{proof}

\subsection{The LP Relaxation}
\label{sec:lp-relaxation}

Based on the effective processing times, we define an LP relaxation to
lower bound $|\opt(\ts)|$.  This relaxation is based on the
knapsack-cover relaxation (e.g., given by \cite{bansal2014geometry})
applied to the instance with effective processing times $\redp_j$
instead of true processing times $p_j$. Intuitively, reducing processing times can only
weaken the relaxation.

For a subset of jobs $S \subseteq J$, define the \emph{effective excess} as $\redex(S) := \max\{0,(\sum_{j \in S} \redp_{j}) - (\ts - r_S)\}$ where $r_S :=  \min_{j \in S} r_j$. The LP relaxation for $|\opt(\ts)|$ can be written as follows: 
\begin{alignat*}{3}
(\lppart) \quad \min \; & \sum_{j \in J} x_{j} \\
    \text{s.t.} \;  & \sum_{j \in S} \min \{ \redp_{j}, \redex(S) \} \cdot x_{j} \geq \redex(S) &&\quad \forall S \subseteq J \\
    & x_{j} \geq 0 &&\quad \forall j \in J
\end{alignat*}
Its dual can be written as follows:
\begin{alignat*}{3}
(\dppart) \quad \max \;  &  \sum_{S \subseteq J} \redex(S) \cdot y_{S} \\
    \text{s.t.} \;  & \sum_{S: j \in S} \min \{ \redp_{j}, \redex(S) \} \cdot y_{S} \leq 1 &&\quad \forall j \in J \\
    & y_{S} \geq 0 &&\quad \forall S \subseteq J
\end{alignat*}

\begin{lemma}\label{lem:relaxation}
    The optimal objective value of $(\lppart)$ is at most $|\opt(\ts)|$.
\end{lemma}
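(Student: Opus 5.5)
The plan is to exhibit a feasible LP solution of value $|\opt(\ts)|$: let $x_j := 1$ if $j \in \opt(\ts)$ (active in the optimal schedule at time $\ts$) and $x_j := 0$ otherwise. The objective then equals $|\opt(\ts)|$, and nonnegativity is immediate. It remains to check the covering constraint for each fixed $S \subseteq J$. If $\redex(S) = 0$ there is nothing to show. So assume $\redex(S) > 0$.

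\textbf{Step 1: volume lower bound.} Consider the optimal schedule. During $[r_S, \ts]$ it processes at most $\ts - r_S$ units in total, and no job of $S$ is processed before $r_S$. Hence the remaining true volume of $S$ at $\ts$ satisfies $\sum_{j \in S} p^{\opt}_j(\ts) \ge \sum_{j\in S} p_j - (\ts - r_S)$. To bring in effective times, use $\redp_j \le p_j$ (\Cref{obs:relaxed:processing:times}) and define the truncated remaining amount $\tilde r_j := \max\{0, \redp_j - e^{\opt}_j(\ts)\}$, which satisfies $\tilde r_j \le p^{\opt}_j(\ts)$.

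\textbf{Step 2: the key inequality.} The cleanest route is a direct argument that the optimal schedule spends at most $\ts - r_S$ time on $S$. The jobs $j \in S$ with $\tilde r_j > 0$ received less than $\redp_j$ processing, so
\[\sum_{j\in S} \min\{\redp_j, e^{\opt}_j(\ts)\} \le \ts - r_S.\]
Therefore
\[\sum_{j \in S} \tilde r_j = \sum_{j\in S} \bigl(\redp_j - \min\{\redp_j, e^{\opt}_j(\ts)\}\bigr) \ge \sum_{j\in S}\redp_j - (\ts - r_S) = \redex(S).\]
Moreover $\tilde r_j > 0$ only if $p^{\opt}_j(\ts) > 0$, that is, only if $j \in \opt(\ts)$.

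\textbf{Step 3: truncation.} For each $j \in S \cap \opt(\ts)$ we have $\tilde r_j \le \redp_j$. The standard knapsack-cover step then gives
\[\sum_{j \in S} \min\{\redp_j, \redex(S)\}\, x_j \;\ge\; \min\Bigl\{\redex(S), \sum_{j \in S\cap \opt(\ts)} \tilde r_j\Bigr\} = \redex(S).\]
To justify the first inequality, split into cases. If some $j \in S \cap \opt(\ts)$ has $\redp_j \ge \redex(S)$, that single term already contributes $\redex(S)$. Otherwise every minimum equals $\redp_j \ge \tilde r_j$, and the sum is at least $\sum_j \tilde r_j \ge \redex(S)$.

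\textbf{Main obstacle.} I expect the only delicate point to be Step 2: making sure that replacing $p_j$ by the smaller $\redp_j$ does not break the volume argument. This is handled by working with $\min\{\redp_j, e^{\opt}_j\}$ rather than with remaining true processing times. Note also that the lower bound $e_j(\ts)\le\redp_j$, which refers to the algorithm's schedule, is not needed for this lemma; only $\redp_j \le p_j$ matters, to guarantee that $\tilde r_j>0$ forces $j$ to be active in $\opt$.
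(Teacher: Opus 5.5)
Your proof is correct, but it takes a more direct route than the paper's. The paper's proof has three steps. First, it turns the optimal schedule for the true processing times into a feasible schedule $Z'_p$ for the effective instance by idling during the last $p_j-\redp_j$ units of each job. Second, it uses local optimality of SRPT on the effective instance to get a schedule with at most $|\opt(\ts)|$ active jobs at $\ts$. Third, it cites \cite{GuptaKPW26} for the fact that the value of $(\lppart)$ is at most the number of active jobs of that schedule.

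You instead check directly that the indicator vector of $\opt(\ts)$ satisfies every knapsack-cover constraint, which re-proves the cited fact yourself. Your truncated quantities $\tilde r_j=\max\{0,\redp_j-e^{\opt}_j(\ts)\}$ are exactly the remaining effective times in the paper's idled schedule $Z'_p$. This makes both the schedule transformation and the appeal to SRPT unnecessary.

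Your version is self-contained, with no external citation, and shows the bound against any feasible schedule for the true instance. The paper's version is more modular: it reuses a black-box lemma and only needs that the optimum cannot get worse when processing times shrink. You are also right that only $\redp_j\le p_j$ is needed, as in the paper.

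One small wording fix in Step 2. The inequality $\sum_{j\in S}\min\{\redp_j,e^{\opt}_j(\ts)\}\le \ts-r_S$ does not follow from the fact that jobs with $\tilde r_j>0$ received less than $\redp_j$ processing. It follows from $\min\{\redp_j,e^{\opt}_j(\ts)\}\le e^{\opt}_j(\ts)$, together with your Step 1 observation that all processing of jobs in $S$ happens within $[r_S,\ts]$.
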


\begin{proof}
    Consider an optimal schedule $Z_p$ for the instance with processing times $(p_j)_{j \in J}$, which has $|\opt(\ts)|$ active jobs at time $\ts$.
    Since $\redp_j \leq p_j$ for every job $j \in J$ by \Cref{obs:relaxed:processing:times}, 
    we can turn $Z_p$ into a feasible schedule $Z'_p$
    for the instance with processing times $(\redp_j)_{j \in J}$ by introducing an idle time whenever we process the last $p_j-\redp_j$ units of a job $j$. 
    Since each job completes in $Z'_p$ no later than in $Z_p$, there are at most $|\opt(\ts)|$ jobs active in $Z'_p$ at time $\ts$.
    Let $Z_{\redp}$ be an optimal
    schedule for the instance with processing times $(\redp_j)_{j \in J}$.
    Since $Z_{\redp}$ is also locally $1$-competitive~\cite{Schrage68}, 
    the existence of the feasible schedule $Z'_p$ implies that
    $Z_{\redp}$ has at most $|\opt(\ts)|$ active jobs at time $\ts$.
    \cite{GuptaKPW26} show that the optimal objective value of $(\lppart)$ 
    is at most the number of active jobs in $Z_{\redp}$ at time $\ts$.
    This implies the claim.
\end{proof}

\paragraph{Proof organization.}

In the rest of the proof, we show an approximately feasible dual
solution with large value, completing the proof of local competitiveness.
We organize the remaining proof into two sections. In \Cref{sec:the
  excess lemma}, we identify sets of jobs $S$ with large effective
excess $\redex(S)$, as the dual variables $y_S$ corresponding to those
sets can strongly contribute to the dual objective value. We identify such sets in \Cref{sec:consistent fresh jobs}, and
prove that they have large excess guarantees in \Cref{lem:strong and
  weak excess}.

Finally, in \Cref{sec:dual fitting}, we define the dual solution with
objective function value within a small error of the true
number of jobs, which also satisfies approximate feasibility (for
which we use the excess lemmas from \Cref{lem:strong and weak excess}).

\section{The Excess Lemmas} \label{sec:the excess lemma}

We next identify sets of jobs $S$ for which we later will define nontrivial duals $y_S$. We further prove lower bounds on their effective excess $\redex(S)$ (the excess lemmas), which we use in the dual fitting.

\subsection{Finding Consistent Fresh Jobs} \label{sec:consistent fresh jobs}

As outlined above, we aim at constructing a dual solution with objective value $\Omega(\frac{1}{\rho} |A(\ts)|)$. By \Cref{lem:many-fresh}, it suffices to achieve dual objective value  $\Omega(\frac{1}{\rho} |\afull(\ts)|)$. 
In order to construct such a dual solution, we partition the jobs $j \in \afull(\ts)$ into sets $F(k)$ according to their effective classes $\redk_j$. 
Afterwards, we use a pruning procedure on the sets $F(k)$ to construct sets $\bar{F}(k)$ by removing a small number of jobs. Before going into details, we outline the reason for using the pruning procedure.

An important part of our dual fitting analysis is to identify sets of jobs $S$ with large effective excess $\redex(S)$, as the dual variables $y_S$ corresponding to those sets can strongly contribute to the dual objective value. A natural approach to identify such sets is to find intervals $I = [t,\ts]$ such that the algorithm during $I$ only works on jobs that are released during $I$. If $S$ is the set of all jobs that are released during that interval, this then implies 
$$
 \redex(S) \ge \sum_{j \in S} \redp_j(\ts).
$$
In order to make sure that the RHS of this inequality is indeed \enquote{large}, we would like $S$ to contain many fresh jobs because the freshness of such jobs implies that they have a relatively large remaining processing time at time $\ts$, as we will formally prove in~\Cref{lem:full-lower-bounds} below.

To construct such sets $S$, we consider a chain of time intervals ending in $\ts$ during which only jobs of bounded (actual, not effective) class are processed.

\begin{defn} [Time $t_{\geq k}$]\label{def:tk-times}
For every class $k \ge 0$, we define $t_{\geq k}$ as the last time before $\ts$ when the processed job $j(t_{\geq k})$ has current class at least $k$, that is, $k(t_{\geq k}) \geq k$ and $k(t) < k$ for all $t \in [t_{\geq k} + 1, \ts]$. 
If this does not exist, that is, before time $\ts$ no job of class $\geq k$ is processed, we set $t_{\ge k} = 0$.
Since we assume that the algorithm does not idle before $\ts$, we have $t_{\ge 0}=\ts-1$.
Finally, we introduce the shorthand notation $t_{>k} := t_{\ge k+1}$ for each class $k$.
\end{defn}

For the most part, the choice of intervals $I = [t_{\ge k} + 1, \ts]$ guarantees that the fresh jobs $\bigcup_{k' < k} F(k')$ are indeed released during $I_k$, which, as outlined above, is useful for showing that the jobs released during $I_k$ have a large effective excess.
However, as we will later see, there can be jobs in $\bigcup_{k' < k} F(k')$ that are not released during $I_k$. 
In particular, this can be the case for the job that is processed during $[t_{\ge k}, t_{\ge k}+ 1]$ if it becomes fresh again at time $t_{\ge k}+ 1$. Note that such jobs are specific to $(\eps,\mu_1,\mu_2)$-online estimates. In the special case where estimates are available upon job release, these jobs cannot exist.
The pruning procedure that transforms the sets $F(k)$ into subsets $\bar{F}(k)$ will remove these exception jobs. Since we will show that $|\bigcup_k \bar{F}(k)| \in \Omega(|\bigcup_k F(k)|)$, removing these jobs and finding a dual solution of the order $\Omega(\frac{1}{\rho}|\bigcup_k \bar{F}(k)|)$ will still be strong enough. 

\begin{defn}[Partition of Fresh Jobs]
    For each class $k$, define
    $F(k) = \{ j \in \afull(\ts) \mid \redk_{j} = k\}$.
\end{defn}

\begin{observation}
    \label{obs:fresh:job:partition}
    $\bigcup_k F(k) = \afull(\ts)$.
\end{observation}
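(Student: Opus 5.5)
This is essentially immediate from the definitions, so the plan is a two-inclusion argument. Each set $F(k)$ is by definition a subset of $\afull(\ts)$, so $\bigcup_k F(k) \subseteq \afull(\ts)$.

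For the reverse inclusion, fix $j \in \afull(\ts)$. We only need that its effective class $\redk_j$ is a well-defined nonnegative integer; then $j \in F(\redk_j)$.

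By \Cref{def:effective:class}, $\redk_j = \min\{\fink_j, \max_{\fint_j \le t < \ts} k(t)\}$ for fresh jobs. Here $\fink_j = k_j(\ts)$ is a nonnegative integer, since current classes take values in $\{0,1,2,\ldots\}$. The only point to check is that the maximum ranges over a nonempty set, and this follows from the standing assumption that $\fint_j < \ts$ for all $j \in \afull(\ts)$. Each term $k(t)$ is the current class of the job processed during $[t,t+1]$, which exists because the algorithm never idles before $\ts$, so it is a nonnegative integer.

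Hence $\redk_j \in \ZZ_{\ge 0}$ and $j \in F(\redk_j)$. This gives $\afull(\ts) \subseteq \bigcup_k F(k)$ and the equality.

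The sets $F(k)$ are moreover pairwise disjoint, since $\redk_j$ is a single value. So they form a genuine partition, which is how the observation will be used later. There is no real obstacle: the only subtlety is the nonemptiness of the index range in the maximum, and that is already covered by the w.l.o.g. assumption $\fint_j < \ts$ and the no-idling assumption.
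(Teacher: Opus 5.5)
Your proof is correct and follows the reasoning the paper relies on. The paper states this observation without proof as immediate from the definition $F(k)=\{j\in\afull(\ts)\mid \redk_j=k\}$, so your two inclusions are the intended argument, with the well-definedness of $\redk_j$ spelled out: the maximum is over a nonempty range by the standing assumptions $\fint_j<\ts$ and no idling before $\ts$.
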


Our pruning procedure removes some special jobs from the sets $F(k)$ and obtains the sets $\bar{F}(k)$ as follows:
\begin{enumerate}
    \item For each $k$, initialize $\bar{F}(k)$ with $F(k)$.
    \item If it exists, let $k$ denote the largest effective class such that 
    \begin{enumerate}
        \item [(i)] $\bar{F}(k)$ is not empty,
        \item [(ii)] the job $j$ that is processed during $[t_{\ge k},t_{\ge k}+1]$ becomes fresh at $t_{\ge k}+1$ and is not processed again during $[t_{\ge k}+1,\ts]$ where  $[t_{\geq k},t_{\geq k} + 1]$ denotes the last time interval before $\ts$ when the algorithm processes a job of class $\ge k$, and \begin{itemize}
            \item  
            (The time $t_{\ge k}$ exists: If $t_{\ge k}$ does not exist, then $k(t) \le k-1$ for all $t \in [0,\ts - 1]$ and, thus, $\redk_j = \min\{\fink_j, \max_{\fint_j\le t < \ts } k(t)\} \le \max_{0 \le t < \ts} k(t) \le k-1$ for all $j \in \afull(\ts)$. However, then $\bar{F}(k) \subseteq F(k) = \emptyset$ and Condition (i) does not hold.)
        \end{itemize} 
        \item [(iii)] $j \in \bar{F}(\redk_{j})$, i.e., $j$ has not been removed. 
        Note that this condition is necessary for two reasons: to ensure termination by preventing the same $k$ from being considered multiple times in Step $2$~and to handle situations where two classes $k$ and $k'$ have $t_{\ge k} = t_{\ge k'}$.

    \end{enumerate}  

    \item Remove the job $j$ of the previous step from $\bar{F}(\redk_{j})$.
    \item Repeat from the second step, until no such index exists anymore.
\end{enumerate}

Since we only remove jobs from the partition, we have the following observation.

\begin{observation}
    \label{obs:bar F subseteq}
For each class $k$, we have
    $\bar F(k) \subseteq F(k).$
\end{observation}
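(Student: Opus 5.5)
The statement is \Cref{obs:bar F subseteq}: $\bar F(k) \subseteq F(k)$ for every class $k$. The plan is a simple induction over the iterations of the pruning procedure, with the invariant ``$\bar F(k) \subseteq F(k)$ for all $k$''.

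\emph{Base case.} In Step~1 each $\bar F(k)$ is initialized to $F(k)$, so the invariant holds with equality.

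\emph{Inductive step.} Each later iteration changes the sets only in Step~3. That step removes a single job $j$ from the single set $\bar F(\redk_j)$ and leaves every other set untouched. Removing an element from a set that is contained in $F(\redk_j)$ keeps it contained in $F(\redk_j)$. Hence the invariant survives every iteration.

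\emph{Termination.} The observation is meant for the final sets, so we also need the procedure to stop. Condition~(iii) requires the removed job to still lie in its set $\bar F(\redk_j)$. So each iteration strictly decreases $\sum_k |\bar F(k)|$. This is a finite nonnegative integer, since $\afull(\ts)$ is finite. Therefore the loop runs finitely often, and the invariant holds for the sets it outputs.

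There is no real obstacle here. The only point worth stating explicitly is that no step ever adds a job to any $\bar F(k)$, which is immediate from the description of the procedure.
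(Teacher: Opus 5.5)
Your proposal is correct and matches the paper's reasoning: the paper simply notes that the pruning procedure only removes jobs from sets initialized as $F(k)$. Your explicit induction, and the termination argument via Condition~(iii), just spell this out in more detail.
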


We next show that we prune only a few jobs from the partition. 
The full proof is deferred to \Cref{sec:proofs}.
We remark that if $\eps=1$, the pruning procedure removes no jobs, since a job that is processed cannot become fresh again. Hence, $|\bigcup_k \bar{F}(k)| = |\bigcup_k F(k)| = |\afull(\ts)|$. 

\begin{restatable}{lemma}{manyinbarF}\label{lem:many-in-bar-F}
    We have $|\bigcup_k \bar{F}(k)| \ge \frac{1}{2} \cdot |\bigcup_k F(k)| = \frac{1}{2} \cdot |\afull(\ts)|$.
\end{restatable}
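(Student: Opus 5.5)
The plan is a charging argument: map each removed job to a distinct job of $\bigcup_k \bar F(k)$ that stays in the partition. Then the number of removed jobs is at most the number of surviving jobs, which gives $|\bigcup_k \bar F(k)| \ge \frac12 |\bigcup_k F(k)|$.

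Consider one iteration of the pruning procedure, with chosen class $k$ and removed job $j$.
\begin{itemize}
\item By condition (i), there is some job $i \in \bar F(k)$ before the removal.
\item By condition (ii), $j$ is processed during $[t_{\ge k}, t_{\ge k}+1]$ with $k(t_{\ge k}) \ge k$, becomes fresh at $t_{\ge k}+1$, and is not touched afterwards. Hence $\fint_j = t_{\ge k}+1$.
\item Every processed class during $[\fint_j, \ts)$ is $<k$. So $\redk_j \le \max_{\fint_j \le t<\ts} k(t) \le k-1$ (or $\redk_j=0$ if that window is empty, though $j$ must then lie in some $F(\cdot)$ with smaller index).
\end{itemize}
Thus the removed job always has effective class strictly smaller than the class $k$ that triggered the removal. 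I would charge $j$ to a witness $i \in \bar F(k)$.

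To make the charging injective, I would use two structural facts.
\begin{itemize}
\item \emph{Distinct classes give distinct removed jobs.} Each removal is triggered by a distinct interval $[t_{\ge k}, t_{\ge k}+1]$, up to coincidences $t_{\ge k} = t_{\ge k'}$. Condition (iii) ensures that each job is removed at most once. So the removed jobs correspond injectively to the distinct times $t_{\ge k}$ used.
\item \emph{Each class triggers at most once.} I would show that a class $k$ is selected at most once. After $j$ is removed, the job processed at $t_{\ge k}$ is no longer in the partition, so condition (iii) fails for that $k$ forever.
\end{itemize}
Hence removals correspond injectively to the triggering classes $k$. It then suffices to find, for each triggering class $k$, a job in $\bar F(k)$ at the end of the procedure, with witnesses distinct across classes. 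Distinctness is automatic, since the $\bar F(k)$ are disjoint.

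The main obstacle is ensuring that $\bar F(k)$ is still nonempty at the end: a later iteration at a smaller class $k'<k$ could remove a job from $\bar F(k)$. This is why the procedure processes the largest class first. A later iteration with class $k'$ removes a job of effective class $<k'<k$, so it never touches $\bar F(k)$.
\begin{itemize}
\item \emph{Case $k' < k$.} The removed job lies below $k'$, so $\bar F(k)$ is untouched.
\item \emph{Case $k' > k$ at a later step.} Here I would argue that $\bar F(k')$ was already nonempty when $k$ was chosen. Condition (ii) does not depend on $\bar F$, and condition (iii) can only become true if it was already true, since removals only shrink the sets. So $k'$ satisfied (i)–(iii) earlier as well, contradicting the choice of $k$ as the largest such class.
\end{itemize}
Hence the selected classes are strictly decreasing, and each $\bar F(k)$ is modified only by removals triggered at larger classes, all of which precede the selection of $k$. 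So at the end, $\bar F(k)$ still contains the witness that was present when $k$ was selected. This gives the injection and the factor $\frac12$.

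The delicate point is the monotonicity claim for (iii) and the edge cases $t_{\ge k}=t_{\ge k'}$. Here I would note that the same job $j$ is involved for both classes, so (iii) blocks the second selection.
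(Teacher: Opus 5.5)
Your proposal is correct and is essentially the paper's argument: the removed job has effective class strictly below its triggering class $k$ (since $\fint_j = t_{\ge k}+1$ and only classes $<k$ are processed afterwards), each class triggers at most once, and the witness in $\bar F(k)$ survives to the end, so each removed job is matched to a distinct surviving job (the paper phrases this as an induction showing $|\bigcup_{k\ge k_d}\bar F(k)|\ge d$). Your observation that conditions (i) and (iii) can only switch from true to false, so triggering classes are selected in strictly decreasing order, makes explicit a step the paper uses only implicitly; also, the empty-window case you hedge on never arises, because of the standing assumption $\fint_j < \ts$ for $j \in \afull(\ts)$.
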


\begin{proof}[Proof sketch]
Each removed job is charged to a distinct class that triggered its removal. 
The key point is that when a class $k$ triggers the removal of a job $j$, 
the removed job has effective class strictly smaller than $k$. 
Thus, the nonempty set $\bar F(k)$ that witnessed the trigger contributes 
a job that is never removed by this or any later trigger. 
Hence, at least one job remains for every job removed.
\end{proof}

For the dual fitting, our goal is to construct a feasible dual solution with objective value $\Omega(\frac{1}{\rho} \cdot |\bigcup_k \bar{F}(k)|)$. Using,~\Cref{lem:many-in-bar-F},~\Cref{obs:fresh:job:partition}, and \Cref{lem:many-fresh}, this implies a feasible dual solution of value $\Omega(\frac{1}{\rho} \cdot |A(\ts)|)$.

\subsection{Strong and Weak Excess Lemmas} \label{lem:strong and weak excess}

We now identify sets of jobs $S$ and show that they have
large effective excess $\redex(S)$. For every integer time $t \in \{0,1,2,\ldots,\ts\}$, let $S(t) := \{j \mid r_j \in [t,\ts]\}$; in particular, $S(0) = J$.
The sets we use are nested; more precisely,
\begin{equation}
    S(t_{\ge 0}+1) 
    \subseteq S(t_{>0}+1) = S(t_{\ge 1}+1) 
    \subseteq S(t_{> 1}+1)
    \subseteq \ldots \subseteq S(0) = \{j \mid r_j < \ts \} = J \, . \label{eq:S-chain}
\end{equation}

For the remainder of this section, fix a class $k$. 

\subsubsection{The Strong Excess Lemma}

\begin{lemma} (Strong Excess Lemma)
\label{lem:strong-excess}
If $\bar{F}(k) \neq \emptyset$, then $\redex(S(t_{\geq k}+1)) \geq \sum_{k' < k} \sum_{j \in \bar{F}(k')} \redp_{j}(\ts)$. 
\end{lemma}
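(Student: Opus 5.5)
Write $t := t_{\ge k}$ and $S := S(t+1)$, and let $F_{<k} := \bigcup_{k'<k} \bar F(k')$. The plan is a volume argument in three steps: (1) every job the algorithm processes during $[t+1,\ts]$ belongs to $S$; (2) every job of $F_{<k}$ belongs to $S$; (3) the two facts combine into the bound.

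Suppose (1) and (2) hold. Since the schedule has no idle time, the $\ts-t-1$ units processed during $[t+1,\ts]$ all go to jobs in $S$. Hence
\[
\sum_{j\in S} \redp_j - (\ts - r_S) \ge \sum_{j \in S}\bigl(\redp_j - (e_j(\ts)-e_j(t+1))\bigr) - \bigl(\ts - (t+1)\bigr) + \bigl(\ts - (t+1)\bigr).
\]
Note that $r_S \ge t+1$, which is what gives the inequality. Jobs of $S$ are released at or after $t+1$, so $e_j(t+1)=0$ and each summand equals $\redp_j(\ts)$. These summands are nonnegative by \Cref{obs:relaxed:processing:times}, so restricting the sum to $F_{<k}\subseteq S$ gives the claim. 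If $S$ is empty, (2) forces $F_{<k}$ to be empty and the claim is trivial.

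\textbf{Step (1).} By \Cref{def:tk-times}, every job processed during $[t+1,\ts]$ has current class $<k$ at that time. I would argue that such a job $i$ was not active before $t+1$. Suppose it were, say $i\in A(t)$ with $i\neq j(t)$ (the job processed in $[t,t+1]$). Its class is monotone, so $k_i(t)<k\le k(t)$. Then at time $t$ the algorithm had an active job of smaller class than the processed one. If $i$ was ongoing at $t$, this contradicts the rule of processing the smallest-class ongoing job. If $i$ was fresh, I would use \Cref{lem:non-greedy}: there is then at most one such $i$, and combining this with the balancing rule and \Cref{lem:many-fresh} should rule it out. I expect this step to need care. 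The cleaner route may be a different fact: a job with class $<k$ waiting at time $t$ would be ongoing-eligible, contradicting \Cref{obs:1} together with \Cref{lem:stack-monotone}. The remaining case is $i=j(t)$ itself. Its class at $t$ is $\ge k$, so it can never again be processed with class $<k$.

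\textbf{Step (2), the main obstacle.} Take $j\in\bar F(k')$ with $k'<k$. Then $\redk_j=k'<k$, so either $\fink_j<k$ or $\max_{\fint_j\le s<\ts}k(s)<k$. In the second case $\fint_j>t$, so $j$ first reached its final class after $t$, and I need $r_j\ge t+1$. If instead $r_j\le t$, then $j$ was active at $t$ with class at most $\fink_j$. This needs a separate bound: either $\fink_j<k$, or $j$ was promoted after $t$ and so was processed during $[t+1,\ts]$ or at $t$ itself. Being processed during $[t+1,\ts]$ forces $r_j\ge t+1$ by (1). Being processed at $t$ means $j=j(t)$, the job that becomes fresh at $t+1$ and is not processed again. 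This is exactly the exceptional job removed by the pruning: since $\bar F(k)\neq\emptyset$, condition (iii) of the pruning would otherwise have removed it. A job with $\fink_j<k$ that was active at $t$ is excluded by the argument of (1), because it would be an active job of class $<k(t)$ that is never processed before $t+1$. The delicate bookkeeping is showing that the pruning handles exactly the job $j(t)$, including the case where several classes share the same time $t_{\ge k}$. For that I would invoke condition (iii) together with $\bar F(k)\neq\emptyset$.
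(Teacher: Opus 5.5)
Your overall structure matches the paper's: the no-idling volume computation, the two inclusion properties, and removing the exceptional job $j(t_{\ge k})$ via the pruning. The genuine gap is Step (1), which is the heart of the proof; Step (2) is routine once Step (1) is in place.

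What you actually need, both for Step (1) and for the subcase $\fink_j<k$, $r_j\le t_{\ge k}$ of Step (2), is that \emph{every} job active at time $t_{\ge k}$ has current class at least $k$. Your proposed ways of excluding a fresh job $i$ with $k_i(t_{\ge k})<k$ do not work. \Cref{obs:1} is a disjunction: when $|\afull(t)|<\frac14|A(t)|$, a fresh job of smaller class legitimately waits while a higher-class ongoing job runs (as in \Cref{fig:alg-example}). \Cref{lem:many-fresh} only gives $|\afull(t)|\ge\frac14|A(t)|-1$, which is compatible with this. \Cref{lem:non-greedy} only says that $i$ is then the unique other job of class at most $k(t_{\ge k})$.

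More fundamentally, your argument never uses the hypothesis $\bar F(k)\neq\emptyset$, and without it Step (1) is false, even with exact estimates revealed on arrival. Release $a,b,c,d$ at times $0,1,2,3$ with $\hat p=p$ equal to $16,8,4,2$ (classes $5,4,3,2$); each is made ongoing on arrival. Release $i$ with $\hat p_i=p_i=1$ (class $1$) at time $4$. Then $|\afull(4)|=1<\frac54$, so $i$ waits while $d$ runs during $[4,5]$ and completes. At time $5$ the balance condition holds, and $i$ runs during $[5,6]$. For $\ts=6$ and $k=2$ we have $t_{\ge 2}=4$, yet $i$, released at time $4$, is processed during $[t_{\ge 2}+1,\ts]$. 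Here nothing is fresh at $\ts$, so the lemma's hypothesis fails, which shows the hypothesis must enter the argument.

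The missing idea is the paper's \Cref{claim:strong:excess}. Its proof uses a witness $j\in F(k)$, which exists since $\bar F(k)\subseteq F(k)$. From $\redk_j=k$ one gets $\fint_j\le t_{\ge k}$, so $j$ is fresh and untouched throughout $[t_{\ge k},\ts]$.

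\emph{Case $k_j(t_{\ge k})=k$.} Then $j$ is a further job of class at most $k(t_{\ge k})$, besides the processed job and $i$. This contradicts \Cref{lem:non-greedy} applied with $j_1=i$.

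\emph{Case $k_j(t_{\ge k})>k$.} Here a genuine counting argument is required. Take the smallest $k'>k$ with $t_{\ge k'}\neq t_{\ge k}$, the job $d$ processed during $[t_{\ge k'},t_{\ge k'}+1]$, and the last time $t'$ at which $d$ became ongoing. One shows $\afull(t')\subseteq\afull(t_{\ge k})$ with $i,j\in\afull(t_{\ge k})\setminus\afull(t')$ (using $\fint_j\in[t_{\ge k'}+1,t_{\ge k}]$), and $|\aon(t_{\ge k})|\le|\aon(t')|+1$. Together with \Cref{lem:many-fresh} at time $t'$, this gives $|\afull(t_{\ge k})|\ge\frac14|A(t_{\ge k})|$. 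By \Cref{obs:1}, the algorithm would then have made $i$ ongoing at $t_{\ge k}$, a contradiction.

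In both cases it is the witness $j$ that tips the balance, and this is exactly what your sketch lacks.
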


\begin{remark}
    We remark that the construction of the sets $\bar{F}$ is only used in the very last argument in the proof of~\Cref{lem:strong-excess}. Everything before that point would work the same way if we used $F$ instead of $\bar{F}$. The same holds for all the following parts of the analysis.
\end{remark}

Our proof of the strong excess lemma will rely on the following claim:

\begin{lemma}
    \label{claim:strong:excess}
    If $F(k) \neq \emptyset$, then for all jobs $j \in A(t_{\ge k})$ we have $k_j(t_{\ge k}) \ge k$. 
\end{lemma}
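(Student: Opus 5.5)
The plan is to argue by contradiction, using a witness job $j^\star \in F(k)$ to pin down the processed job at time $t_{\ge k}$, and then applying \Cref{lem:non-greedy} together with the balancing rule. Write $t := t_{\ge k}$ and let $j := j(t)$ be the job processed during $[t,t+1]$, so $k_j(t) = k(t) \ge k$. Since $F(k)\neq\emptyset$, the time $t$ exists (the argument given inside the pruning procedure applies).

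\textbf{Step 1: locate $j^\star$ at time $t$.} Fix $j^\star \in F(k)$, so $j^\star \in \afull(\ts)$ and $\redk_{j^\star} = k$. By \Cref{def:effective:class}, $k \le \fink_{j^\star}$ and $k \le \max_{\fint_{j^\star} \le t' < \ts} k(t')$. By \Cref{def:tk-times}, every $t' \in [t+1,\ts-1]$ has $k(t') < k$. Hence the maximum is attained at some $t' \le t$, which forces $\fint_{j^\star} \le t$. By \Cref{lem:fresh tfj}, $j^\star$ is then fresh at time $t$ and already in its final class, i.e.\ $k_{j^\star}(t) = \fink_{j^\star} \ge k$. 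In particular $j^\star \neq j$.

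\textbf{Step 2: a violating job must be fresh.} Suppose some $i \in A(t)$ has $k_i(t) < k \le k_j(t)$. Since $j$ is the ongoing job of smallest current class, $i$ cannot be ongoing, so $i \in \afull(t)$. \Cref{lem:non-greedy} with $j_1 = i$ then says that no job in $A(t)\setminus\{i,j\}$ has class at most $k_j(t)$.

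\textbf{Step 3: two cases on which term attains $\redk_{j^\star}$.}
\begin{itemize}
\item If $k = \fink_{j^\star}$, then $k_{j^\star}(t) = k \le k_j(t)$. Since $j^\star \neq i,j$, this contradicts Step 2.
\item Otherwise $k = \max_{\fint_{j^\star}\le t' <\ts} k(t') < \fink_{j^\star}$. Since $t$ lies in this window, $k(t) \le k$, so $k_j(t) = k$ exactly. Now $j^\star$ has class strictly larger than $k_j(t)$, so \Cref{lem:non-greedy} gives no contradiction, and I need a separate argument.
\end{itemize}

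\textbf{Step 4: the hard subcase.} This is the main obstacle. We have $j$ ongoing in class $k$ and $i$ fresh in a class $k_i(t) < k$. I would trace back to the time $t' \le t$ at which $j$ was made ongoing in class $k$ by \Cref{alg:balance}. At $t'$, $j$ was an argmin fresh job, so every fresh job then had class at least $k$.

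Because classes are nondecreasing, $i$ had class $< k$ at $t'$ whenever it was active, so at $t'$ it was either unreleased or ongoing. It cannot be a new arrival at any time in $(t',t]$. The reason is that arrivals have class $0$; combined with Observation~\ref{obs:1} and the fact that the processed ongoing job stays $j$, this would force a small-class fresh job to be processed or made ongoing before $t$. I expect to formalize that as an invariant on the interval $[t',t]$.

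If $i$ was ongoing at $t'$ with class $< k$, it would be processed before $j$. It could become fresh again only by a promotion during processing. That promotion would have to happen at a time when $i$, not $j$, was being processed, and the induction over $[t',t]$ should rule out $i$ ending fresh with class $< k$ while $j$ is processed at $t$. This backward induction over the balancing phases in $[t',t]$ is where I expect the real work to lie.
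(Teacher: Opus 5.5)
Your Steps~1--3 are sound and match the paper. The split on $\fink_{j^\star}=k$ versus $\fink_{j^\star}>k$ is exactly the paper's split on $k_{j^\star}(t_{\ge k})\le k$ versus $>k$. The first case is closed by \Cref{lem:non-greedy} as in the paper, and your remark that in the second case the processed class is exactly $k$ is correct.

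The gap is Step~4, which is where essentially all of the paper's work lies, and the claim you plan to prove there is false. That $i$ was released after $t'$ is immediate: making $j$ ongoing at $t'$ requires every ongoing job to have class $>k$, so $i$ cannot have been ongoing then. But nothing forces such an arrival to leave the fresh set. \Cref{obs:1} only compels a smaller-class fresh job to become ongoing when $|\afull|\ge\frac14|A|$, and \Cref{lem:non-greedy} explicitly permits one fresh job of smaller class next to the processed job. For instance, if the balance inequality was tight when $j$ was made ongoing, then $|\afull(t')|=\frac14|A(t')|-1$ afterwards. A single class-$0$ arrival then gives $|\afull|=\frac14(|A|-1)<\frac14|A|$, so it stays fresh while $j$ keeps being processed. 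Likewise, ``the processed ongoing job stays $j$'' is not a fact: small-class jobs released after $t'$ can be made ongoing, processed, and completed or promoted in between. Your Step~4 never uses $j^\star$, yet the statement is false without $F(k)\neq\emptyset$, exactly via this scenario. So no argument that ignores the witness can succeed.

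The missing idea is a counting argument in which $j^\star$ is a \emph{second} new fresh job. Suppose you find a time $t_0\le t$ with $\afull(t_0)\subseteq\afull(t)$, $i,j^\star\in\afull(t)\setminus\afull(t_0)$ and $|\aon(t)|\le|\aon(t_0)|+1$. Since $\frac14|A(t_0)|-|\afull(t_0)|\le 1$ by \Cref{lem:many-fresh}, you get $\frac14|A(t)|-|\afull(t)|\le 1+\frac14-\frac32<0$. Together with $k_i(t)<k=\min_{x\in\aon(t)}k_x(t)$, this contradicts \Cref{obs:1}.

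Your anchor $t_0=t'$ does not suffice in general. Nothing forces $\fint_{j^\star}>t'$, so $j^\star$ may already be fresh in its final class at $t'$; with only $i$ new, the count gives $\frac14>0$. The paper anchors instead at $t_{>k}=t_{\ge k+1}$. Since $\redk_{j^\star}=k<\fink_{j^\star}$, the window $[\fint_{j^\star},\ts)$ cannot contain $t_{>k}$, so $\fint_{j^\star}>t_{>k}$. Take $t_0$ to be the last time the job $d$ processed at $t_{>k}$ was made ongoing. Then:
\begin{itemize}
\item every job fresh at $t_0$ has class $>k$ and is never touched afterwards;
\item $\aon(t_0)=\aon(t_{>k})$;
\item at most one new ongoing job (namely $j$) appears after $t_{>k}$;
\item neither $i$ nor $j^\star$ is fresh at $t_0$.
\end{itemize}
If no class $>k$ was ever processed, then $\aon(t)=\{j\}$, because an ongoing job is processed in its current class as soon as it becomes ongoing. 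In that case $|\afull(t)|=|A(t)|-1\ge\frac14|A(t)|$ directly.
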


\begin{proof}
    By assumption that $F(k) \neq \emptyset$, there is at least one job $j \in \afull(\ts)$ with $\redk_j = k$.

    We first argue that $j \in \afull(t_{\ge k})$ and that $j$ is not processed during $[t_{\ge k},\ts]$, i.e., $j \in \afull(t)$ for all $t$ with $t_{\ge k}\le t \le \ts$. By \Cref{lem:fresh tfj}, it is enough to prove $\fint_j \le t_{\ge k}$. 
For the sake of contradiction, assume that $\fint_j \ge t_{\ge k} + 1$. Then, by~\Cref{def:effective:class}, we have \[ k =\redk_{j} \le \max_{\fint_j \le t < \ts} k(t) \le \max_{t_{\ge k} + 1 \le t < \ts} k(t) < k.\] Let $j'$ be the job processed during $[t_{\ge k}, t_{\geq k}+1]$. 
    Note that $j \not= j'$ since $j$ is not processed during $[t_{\ge k},\ts]$. By the definition of $t_{\geq k}$, the class of $j'$ is $k_{j'}(t_{\geq k}) \geq k$.

    For the remaining proof, we distinguish two main cases of the current class of $j$ at time $t_{\ge k}$.

    \textbf{Case 1:} 
    {\bf $k_j(t_{\geq k}) \leq k$}. 
    We remark that in this case we must have $k_j(t_{\geq k}) = k$. Otherwise, the effective class of $j$ would be $\redk_j < k$, which contradicts $\redk_j = k$.

We can apply \Cref{lem:non-greedy} to $j$ and $j'$ and time $t_{\geq k}$ 
    to conclude that there cannot exist any job $i \in \afull(t_{\geq k}) \setminus \{j,j'\}$ with $k_{i}(t_{\ge k}) < k$. More precisely, if there was such a job $i$, then we could use~\Cref{lem:non-greedy} on $i$ and $j'$ to conclude that $j$ does not exist; a contradiction.

    \textbf{Case 2:} 
     {\bf $k_j(t_{\geq k}) > k$.}    
For the sake of contradiction, assume that there exists a job $i$ with current class $k_i(t_{\ge k}) < k$. Note that $i \in \afull(t_{\geq k})$ because  $j'$ is the minimum class in the ongoing set at time $t_{\geq k}$. Next, observe that
    \begin{align} \label{eq:violation}
    \min_{j \in \afull(t)} k_j(t) < \min_{j \in \aon(t)} k_j(t)
    \end{align}
    because
    \begin{align*} 
       \min_{j \in \afull(t)} k_j(t)   \leq  k_i(t_{\ge k}) < k \leq k_{j'}(t_{\geq k}) = \min_{j \in \aon(t)} k_j(t)  .
    \end{align*}  

    It is sufficient to prove that
    \begin{equation}    
    |\afull(t_{\ge k})| \ge \frac{1}{4} |A(t_{\ge k})| .
    \label{eq:claim:3}
    \end{equation}
    This is because \Cref{eq:violation,eq:claim:3} would contradict \Cref{eq:alg cond} at time $t_{\geq k}$. Intuitively, \Cref{eq:violation,eq:claim:3} imply that the algorithm should have moved $i$ to the ongoing set during the balancing phase at time $t_{\geq k}$ which contradicts $i$ being fresh at that time.

    To prove~\Cref{eq:claim:3}, we split into two subcases.
    
    \textbf{Case 2.1:}
    If there is no $k' > k$ such that $t_{\ge k'} \not= t_{\ge k}$, then all ongoing jobs at time $t_{\ge k}$ have class $\le k$. Since $j'$ is processed and has class $\ge k$, this implies that $|\apart(t_{\ge k})| = 1$ because of \Cref{lem:stack-monotone}. This also implies that the jobs $i$ and $j$ are fresh at time $t_{\ge k}$. Thus, 
    $$|\afull(t_{\ge k})| = |A(t_{\ge k})| - 1 \ge \frac{1}{4} |A(t_{\ge k})|.$$

    \textbf{Case 2.2:}
    Otherwise, let $k'$ be the smallest class with $k' > k$ such that $t_{\ge k'} \not= t_{\ge k}$. 
    Since the effective class of $j$ is $k$ and $\fink_j 
    \geq k_j(t_{\geq k}) > k$ by the assumption of Case 2, we have $\fint_j \in [t_{\ge k'}+1,t_{\ge k}]$.
    Moreover, we have $k_{j'}(t_{\ge k}) < k'$.

    By the choice of $k'$ and $t_{\ge k'}$, there is a job $d$ of current class $k'$ that is processed during $[t_{\ge k'},t_{\ge k'}+1]$. Let $t'$ denote the latest time before $t_{\ge k'}$ at which the algorithm makes $d$ ongoing. Note that $t'$ must exist because $k'$ exists in this case. So the job $d$ exists, and it must have become ongoing at some point in time.

    We next prove three separate claims before we continue the proof of this case.

    \begin{claim} \label{claim:fresh t' not processed}
         No job $b \in \afull(t')$ is processed during $[t',t^*]$.
    \end{claim}
    \begin{proof}
       At time $t'$, $d$ was just moved to the ongoing set, and thus all jobs in $\afull(t')$ have class $\ge k'$ at time $t'$. Since $d$ is processed at time $t_{\geq k'}$, none of these jobs in $\afull(t')$ has become ongoing by time $t_{\geq k'}$. So, $b$ cannot become ongoing during $[t',t_{\geq k'}]$. By definition of $t_{\geq k'}$, 
       following $t_{\geq k'}+1$
       the algorithm always processes an ongoing job of class less than $k'$. Since $b$'s class is at least $k'$, $b$ cannot become ongoing after $t_{\geq k'}$ either. The fact that $b$ does not become ongoing during $[t',\ts]$ implies that $b$ is also not processed during that interval.
    \end{proof}

\begin{figure}
    \centering
    \begin{tikzpicture}[
        x=1.05cm,y=0.62cm,
    ]

\foreach \x in {0,...,8}{
        \foreach \k in {1,2,3,5,6}{
            \draw[algCell] (\x-.45,\k-.37) rectangle (\x+.45,\k+.37);
        }
    }

\foreach \x/\k in {0/3,1/1,2/2,3/1,4/3,5/1,6/1,7/1,8/1}{
        \fill[algProcCell] (\x-.45,\k-.37) rectangle (\x+.45,\k+.37);
    }

\foreach \x in {0,...,8}{
        \node at (\x,4) {\scriptsize $\hdots$};
        \node at (\x,0) {\scriptsize $\hdots$};
    }

\node[left=6pt] at (-.50,5.5) {\scriptsize $>k'$};
    \node[left=6pt] at (-.50,4) {\scriptsize $\hdots$};
    \node[left=6pt] at (-.50,3) {\scriptsize $k'$};
    \node[left=6pt] at (-.50,2) {\scriptsize $k'-1$};
    \node[left=6pt] at (-.50,1) {\scriptsize $k$};
    \node[left=6pt] at (-.50,0) {\scriptsize $\hdots$};
    \draw[-{Stealth}, thick] (-1.65,.5) -- (-1.65,5.8);
    \node[rotate=90] at (-2.05,3) {\scriptsize current class};

\foreach \y in {5,6}{
        \draw[algStayArrow] (0,\y) -- (4,\y);
        \draw[algStayArrow] (4,\y) -- (8,\y);
        \foreach \x in {0,...,8}{\algOngoingDot{\x}{\y}{0}{0}}
    }

\draw[algStayArrow] (0,3) -- (4,3);
    \draw[algStayArrow] (4,3) -- (8,3);
    \algOngoingDot{0}{3}{0}{0}
    \foreach \x in {1,2,3}{\algOngoingDot{\x}{3}{0}{0}}
    \algOngoingDot{4}{3}{0}{0}
    \algOptionalDot{5}{3}{0}{0}{d?}
    \algOptionalDot{6}{3}{0}{0}{d?}
    \algOptionalDot{7}{3}{0}{0}{d?}
    \algOptionalDot{8}{3}{0}{0}{d?}

\foreach \x/\y in {1/1,2/2,3/1,5/1,6/1,7/1}{
        \algOngoingDot{\x}{\y}{0}{0}
    }

\algOngoingDot{8}{1}{0}{0}

    \draw[-{Stealth}, thick] (-.85,-1) -- (10.5,-1) node[below,font=\scriptsize] {time};
    \foreach \x/\label in {
        0/{$t'$},
        1/{$t'+1$},
        4/{$t_{\ge k'}$},
        5/{$t_{\ge k'}+1$},
        8/{$t_{\ge k}$},
        9/{$t_{\ge k}+1$}
    }{
        \draw[] (\x,-.90) -- (\x,-1.10) node[below,font=\scriptsize] {\label};
    }

\begin{scope}[shift={(9.65,5.65)}]
        \fill[ongoingred] (0,0) circle[radius=3pt];
        \node[anchor=west] at (.15,0) {\scriptsize ongoing};
        \fill[algProcCell] (-.2,-.65-.3) rectangle (.2,-.65+.3);
        \fill[ongoingred] (0,-.65) circle[radius=3pt];
        \node[anchor=west] at (.15,-.65) {\scriptsize processed class};
        \draw[ongoingred,dashed,line width=.8pt] (0,-1.3) circle[radius=3pt];
        \node[anchor=west] at (.15,-1.3) {\scriptsize optional};
    \end{scope}

    \end{tikzpicture}
    \caption{Ongoing jobs in the situation of \Cref{eq:claim:4}; for clarity, we omit fresh jobs from the figure. The old ongoing jobs persist from $t'$ to $t_{\ge k}$; job $d$ is processed at $t'$ and during $[t_{\ge k'},t_{\ge k'}+1]$ and may be gone afterwards. The intermediate boxes illustrate processed ongoing jobs in two different classes below $d$. 
    Note that $d$ might complete at time $t_{\ge k'}+1$.
    During $[t_{\ge k},t_{\ge k}+1]$ at most one additional lower-class ongoing job can appear. The horizontal dashed lines indicate that the ongoing job stays the same.}
    \label{fig:inc-ongoing}
\end{figure}

    \begin{claim}
        It holds that $|\afull(t')| \leq |\afull(t_{\ge k})| - 2$. \label{eq:claim:2}
    \end{claim}
    \begin{proof}
    We have the following three observations:
    \begin{itemize}
        \item $\afull(t') \subseteq \afull(t_{\ge k})$. That is, every fresh job $b \in \afull(t')$ remains fresh at time $t_{\geq k}$, which follows from \Cref{claim:fresh t' not processed}.   
        
\item $j \in \afull(t_{\ge k})\setminus\afull(t')$. Since $ j \in \afull(t_{\ge k})$, it remains to argue that $j \not\in \afull(t')$. Suppose $j \in \afull(t')$. By \Cref{claim:fresh t' not processed}, $j$ is not processed after $t'$, and thus $\fint_j \le t'$, contradicting that $\fint_j \in [t_{\ge k'}+1,t_{\ge k}]$.

        \item $i \in \afull(t_{\ge k}) \setminus \afull(t')$. It holds that $i \in \afull(t_{\ge k})$ because the algorithm processed job $j'$ of current class $k_{j'}(t_{\ge k}) \ge k$ during $[t_{\ge k},t_{\ge k}+1]$. Furthermore, we have $i \not\in \afull(t')$, 
        because if $i \in \afull(t')$, then $i$ would have become ongoing before $d$ at that time because $k_i(t') \le k_i(t_{\ge k}) < k < k'$.
    \end{itemize}
    In summary, these three observations give the claimed inequality.
    \end{proof}

    \begin{claim}
    \label{eq:claim:4}
        It holds that $|\apart(t_{\ge k})| \le |\apart(t')| + 1$.
    \end{claim}

    A visualization of the situation for this claim is given in \Cref{fig:inc-ongoing}.

   \begin{proof}
        First note that job $d$ has the smallest current class among ongoing jobs at time $t'$, because then it became ongoing, and at time $t_{\ge k'}$, because then it has been processed.
        By our choice of the time $t'$, job $d$ is ongoing during $[t',t_{\ge k'}+1]$. Thus, no job in $\aon(t') \setminus\{d\}$ is processed during this time interval and no ongoing job of larger current class than $d$ can become ongoing during this interval. Therefore, we can conclude that $\apart(t') = \apart(t_{\ge k'})$.

         We next focus on the time between $t_{\ge k'}+1$ and $t_{\ge k}$. Since we process job $j'$ at time $t_{\ge k}$ with class $k_{j'}(t_{\ge k}) \ge k$, it cannot be that we process a job $j''$ during $[t_{\ge k'}+1, t_{\ge k}]$ of current class $> k_{j'}(t_{\ge k})$, as this would contradict our choice of $k'$.
         Thus, since $j'$ has the smallest class of ongoing jobs at time $t_{\ge k}$ (and thus all other ongoing jobs have a strictly larger class), all other ongoing jobs at time $t_{\ge k}$ must have been already ongoing at time $t_{\ge k'}$. Thus, in total we have
         \[
         |\apart(t_{\ge k})| \le |\apart(t_{\ge k'})| + 1 = |\apart(t')| + 1
         \]
         as claimed. 
   \end{proof}

Finally, we can show that \eqref{eq:claim:3} holds:
    \begin{align*}
        \frac{1}{4} |A(t_{\ge k})| - |\afull(t_{\ge k})| 
        &= \frac14 |\aon(t_{\ge k})| - \frac34 |\afull(t_{\ge k})| \\
        &\le \frac14 |\aon(t')| + \frac14 - \frac34 |\afull(t_{\ge k})| \\
        &\le \frac14 |\aon(t')| + \frac14 - \frac34 (|\afull(t')|+2) \\
        &= \frac14 |A(t')| - |\afull(t')| + \frac14 - \frac32 
        \le 1 + \frac14 - \frac32 \leq 0 \ ,
    \end{align*}
    where the first inequality is by \Cref{eq:claim:4}, the second inequality by \Cref{eq:claim:2},
    and the final inequality uses $\frac{1}{4} |A(t')| - |\afull(t')|  \le 1$, which follows from \Cref{lem:many-fresh}.
    This concludes \Cref{eq:claim:3} and thus completes the proof of the claim.
\end{proof}

With~\Cref{claim:strong:excess} in place, we are ready to prove the strong excess lemma.

\begin{proof}[Proof of~\Cref{lem:strong-excess}]
    Our goal is to show the following two properties, which imply the lemma:
    \begin{enumerate}
        \item All jobs that are processed during $[t_{\ge k}+1,\ts]$ are released during $[t_{\ge k}+1,\ts]$. Therefore, all these jobs are members of $S(t_{\geq k}+1)$.
        \item All jobs $j \in \cup_{k'<k}\bar{F}(k')$ were released during $[t_{\geq k}+1,\ts]$. Hence, $\cup_{k'<k}\bar{F}(k') \subseteq S(t_{\geq k}+1)$. 
    \end{enumerate}

    Before we prove these two properties, we show that they imply the lemma.  

    By the first property above, the algorithm only works on jobs in $S(t_{\geq k}+1)$ during $[t_{\ge k}+1,\ts]$. Since we assume that the algorithm does not idle during $[t_{\ge k}+1,\ts]$, this implies 
    \begin{align} \label{eq:total elapsed eq}
        \sum_{j \in S(t_{\geq k}+1)} e_{j}(\ts) =  (\ts - (t_{\ge k}+1)).
    \end{align}
    Plugging this equation into the definition of effective excess yields:
    \begin{align}
    \redex(S(t_{\ge k}+1))
    &\geq \Bigl( \sum_{j \in S(t_{\geq k}+1)} \redp_j \Bigr) - (\ts - (t_{\ge k}+1)) \nonumber\\
    &= \sum_{j \in S(t_{\geq k}+1)} (e_{j}(\ts) + \redp_{j}(\ts)) - (\ts - (t_{\ge k}+1)) \notag \\
    &\overset{(\ref{eq:total elapsed eq})}{=} \sum_{{j} \in S(t_{\geq k}+1)} \redp_{j}(\ts) \notag \\
    &\geq \sum_{k' < k} \sum_{j \in \bar{F}(k')} \redp_{j}(\ts) \label{eq:strong:excess:1}
    \end{align}
    where
\eqref{eq:strong:excess:1} follows from the second property above.
    It remains to show the two properties. Note that $F(k) \neq \emptyset$ since $\bar F(k) \neq \emptyset$ and $\bar F(k) \subseteq F(k)$ (\Cref{obs:bar F subseteq}), and thus \Cref{claim:strong:excess} can be applied.  

    To prove the first property, note that for all jobs $j \in A(t_{\ge k})$, we have $k_j(t') \ge k_j(t_{\ge k}) \geq k$ for all $t' \in [t_{\ge k}, \ts]$ by \Cref{claim:strong:excess} and the fact that the current class is monotone.
    Thus, no job in $A(t_{\ge k})$ is processed during $[t_{\ge k}+1,\ts]$ by the definition of $t_{\ge k}$. Hence, all jobs processed during that time interval must be released within it.

    For the second property, consider an arbitrary $j \in \bar{F}(k')$ where $k' < k$.
    Thus $k' = \redk_j$ and $j \in \bar F(\redk_j)$.
    For the sake of contradiction, assume that $r_j \le t_{\ge k}$. 

    \begin{claim}
        $\fint_j = t_{\ge k}+1$.
    \end{claim}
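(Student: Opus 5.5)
We argue by pinning $\fint_j$ from both sides. A lower bound will come from the effective class being small, and an upper bound from the definition of $t_{\ge k}$. Throughout, $j \in \bar F(k')$ with $k' < k$ and $r_j \le t_{\ge k}$. By \Cref{obs:bar F subseteq}, $j \in F(k') \subseteq \afull(\ts)$ and $\redk_j = k' < k$. As in the surrounding proof, $\bar F(k)\neq\emptyset$ gives $F(k)\neq\emptyset$, so \Cref{claim:strong:excess} is available.

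\textbf{Lower bound $\fint_j \ge t_{\ge k}+1$.} Since $r_j \le t_{\ge k}$ and $j$ is still active at $\ts$, we have $j \in A(t_{\ge k})$. If $r_j = t_{\ge k}$, then $j$ is among the new jobs $N$ at that time and is still active. \Cref{claim:strong:excess} then yields $k_j(t_{\ge k}) \ge k$. Current classes are monotone, so $\fink_j = k_j(\ts) \ge k$. By \Cref{def:effective:class}, $\redk_j = \min\{\fink_j, \max_{\fint_j\le t<\ts} k(t)\} = k' < k$. With $\fink_j \ge k$, this forces $\max_{\fint_j \le t < \ts} k(t) < k$. Since $k(t_{\ge k}) \ge k$ by \Cref{def:tk-times}, the index $t_{\ge k}$ cannot lie in $[\fint_j,\ts)$. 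Because $t_{\ge k} < \ts$, we get $\fint_j \ge t_{\ge k}+1$.

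\textbf{Upper bound $\fint_j \le t_{\ge k}+1$.} Suppose instead $\fint_j \ge t_{\ge k}+2$. Then $\fint_j > r_j$, so $j$ was genuinely promoted at time $\fint_j$, meaning $k_j(\fint_j) > k_j(\fint_j-1)$. A current class can only increase when the elapsed time grows or when the estimate arrives. By the w.l.o.g.\ assumption in the footnote to Balanced MLF, the estimate is delivered only while $j$ is being processed. Hence $j$ is processed during $[\fint_j-1,\fint_j]$, so $j(\fint_j-1)=j$ and $k(\fint_j-1) = k_j(\fint_j-1) \ge k_j(t_{\ge k}) \ge k$, again using monotonicity and $\fint_j-1 \ge t_{\ge k}$. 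But $\fint_j - 1 \in [t_{\ge k}+1,\ts-1]$, because $\fint_j < \ts$ for fresh jobs by the standing w.l.o.g.\ assumption. This contradicts \Cref{def:tk-times}, under which every processed class in $[t_{\ge k}+1,\ts]$ is $<k$. Combining the two bounds gives $\fint_j = t_{\ge k}+1$.

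The main point needing care is the upper bound. It relies on the fact that a promotion at $\fint_j$ can only be caused by processing $j$ in the immediately preceding step, and this is exactly where the footnote convention on estimate delivery is used. The edge case $r_j = t_{\ge k}$ must also be handled so that $j \in A(t_{\ge k})$ holds and \Cref{claim:strong:excess} applies.
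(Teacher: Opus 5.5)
Your proof is correct and takes essentially the same route as the paper. The lower bound $\fint_j \ge t_{\ge k}+1$ comes from $\fink_j \ge k$ (via \Cref{claim:strong:excess}) together with $\redk_j < k$, which forces $t_{\ge k} \notin [\fint_j,\ts)$; the upper bound comes from $j$ being processed during $[\fint_j-1,\fint_j]$ with current class at least $k$, which contradicts the definition of $t_{\ge k}$.
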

    \begin{proof}
        By~\Cref{claim:strong:excess}, we have $k_j(t_{\ge k}) \geq k$ and thus $\fink_j = k_j(\ts-1) \geq k$.  We prove the following two inequalities. 
        \begin{itemize}
            \item $(\fint_j \geq t_{\ge k}+1)$. Otherwise, the effective class of $j$ is \[k >  k' = \redk_{j} \ge  \min \Bigl\{\fink_j,\max_{\fint_j \le t < \ts} k(t) \Bigr\} \ge \min \Bigl\{\fink_j,\max_{t_{\ge k} \le t < \ts} k(t) \Bigr\} \ge k,\] which is a contradiction. The last inequality follows since $k_j(t_{\ge k}) \geq k$ and $\fink_j \geq k$.
            \item $(\fint_j \leq t_{\ge k}+1)$. First note that since $r_j \le t_{\ge k} < t_{\ge k} + 1 \le \fint_j$, job $j$ cannot reach its final class upon release.
    Thus, job $j$ is processed during timestep $[\fint_j-1,\fint_j]$. 
    Now suppose that $\fint_j > t_{\ge k}+1$, that is $\fint_j-1 > t_{\ge k}$. Since $k_j(t_{\ge k}+1) \ge k$, this means that $j$ is processed after time $t_{\ge k}+1$ while having current class $\ge k$, a contradiction to the definition of $t_{\ge k}$.\qedhere \end{itemize}
    \end{proof}

In summary, we have the following three properties:
    \begin{enumerate}
        \item [(i)] $\bar{F}(k) \not= \emptyset$,
        \item [(ii)]$j$ becomes fresh at $\fint_j= t_{\ge k}+1$; thus, $j$ is processed at time $[t_{\geq k}, t_{\geq k}+1]$ and not touched during $[t_{\ge k}+1,\ts]$, and
        \item  [(iii)] $j \in \bar{F}(\redk_j)$.
    \end{enumerate}
    However, by construction of the sets $\bar{F}$, these three properties imply $j \not\in \bigcup_{k''} \bar{F}(k'')$; a contradiction. We thus conclude that the second property holds. Note that this is exactly the reason for pruning the sets $F(k)$ into the set $\bar{F}(k)$ that was outlined in~\Cref{sec:consistent fresh jobs}.
\end{proof}

\subsubsection{The Weak Excess Lemma}

\begin{lemma}[Weak Excess Lemma]
\label{lem:weak-excess}
If $F(k) \not= \emptyset$ then 
$\redex(S(t_{>k}+1)) \ge
        \sum_{k'\le k}\sum_{j \in \bar F(k')} \redp_{j}(\ts)
        - 3 \cdot \mu_2 \cdot \Lambda_k$.
\end{lemma}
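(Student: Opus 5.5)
We follow the template of the proof of \Cref{lem:strong-excess}, now working with the interval $I := [t_{>k}+1,\ts]$ and the set $S := S(t_{>k}+1)$. During $I$ the algorithm processes only jobs of current class $\le k$. Unlike the strong case, it may process a few jobs released before $t_{>k}+1$. The plan has two steps. First, we bound the total elapsed time during $I$ spent on such \enquote{old} jobs by $O(\mu_2\Lambda_k)$. Second, we show that every job in $\bigcup_{k'\le k}\bar F(k')$ lies in $S$, except for jobs whose effective remaining volume is also charged to the $3\mu_2\Lambda_k$ slack.

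For the first step, we write
\[
\redex(S) \ge \sum_{j\in S}\redp_j-(\ts-t_{>k}-1) = \sum_{j\in S}\redp_j(\ts) - \sum_{j\notin S} \bigl(e_j(\ts)-e_j(t_{>k}+1)\bigr),
\]
where the last sum is the processing given during $I$ to old jobs. An old job processed in $I$ has current class $\le k$ throughout $I$, since classes are monotone and only classes $\le k$ are processed there. By \Cref{lem:stack-monotone} and the minimum-class processing rule, an old job can only be processed in $I$ if it was ongoing at $t_{>k}+1$ with class $\le k$, or if it was fresh with class $\le k$ and became ongoing later. We use \Cref{claim:strong:excess} in the form: if $F(k+1)\neq\emptyset$, then all jobs in $A(t_{>k})$ have class $\ge k+1$, and we are essentially in the strong case. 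The main case is therefore $F(k+1)=\emptyset$.

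In that case we apply \Cref{lem:non-greedy} at time $t_{>k}$, together with the argument of Case 2 of \Cref{claim:strong:excess} using the witness $j\in F(k)$. This should show that at most two old jobs of class $\le k$ are available: the job $j(t_{>k})$, if it was refreshed to class $\le k$, and at most one more fresh or ongoing job. Each such job is processed in $I$ only while its class is $\le k$. If it is unknown, its elapsed time stays below $2^k=\Lambda_k$. If it is known, its class equals $1+\lfloor\log_2\hat p_j\rfloor\le k$, so $\hat p_j<\Lambda_k$ and hence $p_j\le\mu_2\hat p_j<\mu_2\Lambda_k$. In either case each old job contributes at most $\mu_2\Lambda_k$, which gives the $2\mu_2\Lambda_k$ part of the loss.

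For the second step, consider $j\in\bar F(k')$ with $k'\le k$. If $k'<k$, or if $t_{\ge k}=t_{>k}$, we reuse the argument from the strong lemma. The one remaining exception is a job $j$ that becomes fresh at $t_{>k}+1$, that is, the job processed at $t_{>k}$. This job is not pruned when the triggering class is $k+1$ with $\bar F(k+1)=\emptyset$. We drop it from the right-hand side at a cost of $\redp_j(\ts)\le\eps\Lambda_{\redk_j}\le\Lambda_k$ by \Cref{lem:p-class-bounds}, which accounts for the third $\Lambda_k\le\mu_2\Lambda_k$.

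We expect the main obstacle to be the first step, namely rigorously limiting how many old jobs can be processed in $I$ and ensuring that jobs which become known mid-interval do not contribute more than $\mu_2\Lambda_k$. We would first try to adapt the counting in Claims~\ref{eq:claim:2} and~\ref{eq:claim:4} to the time $t_{>k}$, using the witness job in $F(k)$ to force the balancing rule, as was done in \Cref{claim:strong:excess}.
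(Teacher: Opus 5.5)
Your outline matches the paper's. You write $\redex(S(t_{>k}+1))$ as $\sum_{i\in S(t_{>k}+1)}\redp_i(\ts)$ minus the processing $\Delta$ that old jobs receive during $I=[t_{>k}+1,\ts]$, bound $\Delta$, and pay $\redp_y(\ts)\le\eps\Lambda_k$ (\Cref{lem:p-class-bounds}) for each $y\in\bigcup_{k'\le k}\bar F(k')$ released before $t_{>k}+1$. The gap is in your second step.

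You cannot ``reuse the argument from the strong lemma''. \Cref{lem:strong-excess} needs $\bar F(k)\neq\emptyset$ for its final pruning contradiction, but here only $F(k)\neq\emptyset$ is assumed. It also never covers $\bar F(k)$ itself. In fact every $y\in F(k)$ has $\fint_y\le t_{\ge k}$ (otherwise $\redk_y<k$), so $F(k)\cap S(t_{\ge k}+1)=\emptyset$. Hence in your case $t_{\ge k}=t_{>k}$, every job of $\bar F(k)$ is old. Moreover, your claim that the only old exception is $d:=j(t_{>k})$ is false in general. When $t_{\ge k}>t_{>k}$, two jobs can both lie in $\bigcup_{k'\le k}\bar F(k')$: the old job $j$ of class $\le k$ at time $t_{>k}$ (if it stays fresh until $\ts$), and $d$ (if it becomes fresh at $t_{>k}+1$). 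Pruning $d$ would need some $k''\ge k+1$ with $t_{\ge k''}=t_{>k}$ and $\bar F(k'')\neq\emptyset$, which nothing guarantees. With two exceptions and your first-step budget of $2\mu_2\Lambda_k$, the loss is $2\mu_2\Lambda_k+2\eps\Lambda_k$, which need not be at most $3\mu_2\Lambda_k$.

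Your first step also overcounts. The job $j(t_{>k})$ has class $\ge k+1$ at $t_{>k}$, and classes never decrease. So it is never ``refreshed to class $\le k$'' and is never processed in $I$. You need neither the split on $F(k+1)$ nor the counting from Case 2 of \Cref{claim:strong:excess}. Since $d$ is the minimum-class ongoing job at $t_{>k}$ and has class $>k$, every job of class $\le k$ at $t_{>k}$ is fresh. \Cref{lem:non-greedy} applied to $d$ then shows there is at most one such job $j$, so $\Delta=\Delta_j\le\mu_2\Lambda_k$.

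For membership, argue directly as the paper does. Take an old $y\neq j$ in $\bigcup_{k'\le k}\bar F(k')$. Then $\fink_y\ge k_y(t_{>k})>k$. If $\fint_y\le t_{>k}$, the value $k(t_{>k})>k$ would enter the maximum in \Cref{def:effective:class} and force $\redk_y>k$. Thus $\fint_y\ge t_{>k}+1>r_y$, so $y$ was processed during $[\fint_y-1,\fint_y]$ with class $>k$. Since no job of class $>k$ is processed after $t_{>k}$, this forces $\fint_y-1=t_{>k}$, i.e., $y=d$. The total loss is $\Delta_j+\redp_j(\ts)+\redp_d(\ts)\le\mu_2\Lambda_k+2\eps\Lambda_k\le 3\mu_2\Lambda_k$, with no pruning argument needed.
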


\begin{proof}
We first establish that there can exist at most one job $j$
at time $t_{>k}$ of current class $k_{j}(t_{> k}) \leq k$.
By definition of $t_{> k}$, the algorithm processes a job $j'$ of current class $k_{j'}(t_{> k}) > k$ during $[t_{>k}, t_{>k}+1]$.
Thus, if $j$ exists, then it is fresh at time $t_{> k}$ by \Cref{lem:stack-monotone}.
By \Cref{lem:non-greedy}, the existence of $j$ and $j'$ at time $t_{> k}$ implies that $j$ is the only job of current class $\leq k$ at time $t_{> k}$.

We show the following two properties:
\begin{enumerate}
    \item All jobs $j'' \not= j$ that are processed during $[t_{>k}+1,\ts]$ are released during $[t_{>k}+1,\ts]$. Thus, the corresponding jobs are part of $S(t_{>k}+1)$.
    \begin{itemize}
        \item {\bf Proof.}  Note that current classes (actual classes $k_j(\cdot)$, not effective classes) of jobs only increase.  Since $j$ is the only job active at time $t_{>k}$ which can have class $\le k$, the definition of $t_{>k}$ implies that $j$ is the only job active at $t_{>k}$ that can be processed during $[t_{>k}+1,\ts]$.  Hence, all other jobs that are processed during $[t_{>k}+1,\ts]$ are released during $[t_{>k}+1,\ts]$.
    \end{itemize}
    \item All jobs  $j'' \in \left(\bigcup_{k' \le k} \bar{F}(k')\right) \setminus \{j\}$ belong to the set $S(t_{>k}+1)$ with possibly a single exception job $d$. In particular, $j$ and $d$ are the only two jobs that can be members of $\left(\bigcup_{k' \le k} \bar F(k')\right)\setminus S(t_{>k}+1)$.

    \begin{itemize}
        \item {\bf Proof.}  Consider an arbitrary job $j'' \in \bigcup_{k' \le k} \bar{F}(k')$ with $j'' \not= j$ and assume that $j'' \not\in S(t_{>k}+1)$. That is, $r_{j''} \le t_{>k}$. As argued above, $j$ is the only job of current class $k_{j}(t_{> k}) \le k$ at time $t_{> k}$. Hence, $\fink_{j''} \ge k_{j''}(t_{> k}) > k$. If $j''$ reaches its final class strictly before $t_{> k} +1$, that is, $t_{j''} \le t_{> k}$, then $\redk_{j''} > k$ as $\fink_{j''} > k$ and the algorithm processes a job of class $> k$ during $[t_{> k}, t_{> k} +1]$; this is a contradiction.
        Hence, if $j'' \in \bar{F}(k')$ with $k' \le k$ was released before $t_{> k} +1$, then it has to reach its final class fresh during $[t_{> k} +1,\ts]$. However, this can only happen if $j''$ is the job processed during $[t_{> k}, t_{> k} +1]$. That is, we must have $j''=j'$.
        This implies that at most one job $d \in \left(\bigcup_{k' \le k} \bar{F}(k')\right) \setminus \{j\}$ is released outside of the interval $[t_{> k} +1,\ts]$.
        In particular, this means that $j$ and $d$ are the only two jobs which can be part of $\left(\bigcup_{k' \le k} \bar F(k')\right)\setminus S(t_{>k}+1)$.
    \end{itemize}
\end{enumerate}

We next show that the two properties indeed imply the lemma.
By the definition of excess, $\redex(S(t_{>k}+1)) = \max\{0, \redp(S(t_{>k}+1)) - (\ts - (t_{> k } + 1))\}$.
Let $\Delta_j$ denote the amount of time during $[t_{> k} + 1, \ts]$ when the algorithm works on $j$; we define $\Delta_j = 0$ if $j$ does not exist.
By the first property above and using that the algorithm does not idle, we have
\[
 \Delta_j + \sum_{i \in S(t_{>k}+1)} e_{i}(\ts) = \ts - (t_{>k}+1).
\]
This gives us
\begin{align}
\Delta_j + \redex(S(t_{>k}+1))
&\geq \Delta_j + \redp(S(t_{>k}+1)) - (\ts - (t_{>k}+1)) \nonumber\\
&= \Delta_j + \sum_{i \in S(t_{>k}+1)} (e_{i}(\ts) + \redp_{i}(\ts)) - (\ts - (t_{>k}+1)) \nonumber\\
&= \sum_{i \in S(t_{>k}+1)} \redp_{i}(\ts) \nonumber\\
&\geq \Bigl(\sum_{k' \le k} \sum_{i \in \bar F(k')} \redp_{i}(\ts)\Bigr) - \redp_{d}(\ts) - \redp_{j}(\ts)\label{eq:weak:excess:1}\\
&\ge \Bigl(\sum_{k' \le k} \sum_{i \in \bar F(k')} \redp_{i}(\ts)\Bigr) - 2 \eps \cdot \Lambda_k \label{eq:weak:excess:2}\\
&\ge \Bigl(\sum_{k' \le k} \sum_{i \in \bar F(k')} \redp_{i}(\ts)\Bigr) - 2 \mu_2 \cdot \Lambda_k.\label{eq:weak:excess:3}
\end{align}
Here inequality~\eqref{eq:weak:excess:1} follows from the second property above, the inequality~\eqref{eq:weak:excess:2} uses \Cref{lem:p-class-bounds} and $\redk_d, \redk_j\le k$, and the inequality~\eqref{eq:weak:excess:3} uses $\eps \le 1\le \mu_2$. We remark that there is the possibility that one or both jobs in $\{j,d\}$ do not exist or are not part of $( \bigcup_{k' \le k} \bar F(k') ) \setminus S(t_{>k}+1)$. In that case the inequality above only becomes stronger as we do not have to subtract $\redp_d(\ts)$ or $\redp_j(\ts)$.

We continue by bounding $\Delta_j$. If $j$ is not processed during $[t_{> k} + 1, \ts]$, then we are done. Hence, assume that $j$ is processed during $[t_{> k} + 1, \ts]$, and let $k'$ denote the maximum current class that $j$ reaches strictly before time $\ts$.
Since $j$ is processed during $[t_{> k} + 1, \ts]$, the definition of $t_{> k}$ implies $k' \leq k$.
By the definition of the current classes, this implies $\Delta_j \leq \mu_2 \Lambda_k$: if the class of $j$ is governed by elapsed processing, then the elapsed processing of $j$ is at most $\Lambda_k$; otherwise, the estimate gives $p_j\le \mu_2\hat p_j\le \mu_2\Lambda_k$.

We conclude
\[
\redex(S(t_{>k}+1)) \ge \Bigl(\sum_{k' \le k} \sum_{i \in \bar F(k')} \redp_{i}(\ts)\Bigr) - 3 \cdot  \mu_2 \cdot  \Lambda_k.\qedhere
\]
\end{proof}

\section{Dual Fitting} \label{sec:dual fitting}

We now present the full dual fitting argument.
Let $\rho := \frac{\mu_1 \cdot \mu_2}{\eps}$.
We call a class $k$ \emph{crucial} 
if $\bar F(k) \neq \emptyset$.
Let $I$ denote the set of crucial classes.
Furthermore, let $I_{l} := \{k \in I : |\bar F(k)| \leq 7 \cdot \rho \}$
and $I_{h} := \{k \in I : |\bar F(k)| > 7 \cdot \rho \}$
denote the \emph{light} and \emph{heavy} crucial classes, respectively.
We define for each crucial class $k \in I$
the reference set
\[
    S_k := \begin{cases}
        S(t_{\ge k}+1) &\text{if } k \in I_l, \text{ and} \\
        S(t_{>k}+1) &\text{if } k \in I_h \ .
    \end{cases}
\]
Our next goal is to establish that $\redex(S_k) > 0$ for almost all crucial classes, which we need to define a dual solution.
To this end, we first show the following preliminary lemma,
which we will use to further lower bound the volume certificates given by \Cref{lem:strong-excess} and \Cref{lem:weak-excess}.

\begin{lemma}\label{lem:full-lower-bounds}
    If $j \in F(k)$, then $\redp_{j}(\ts) \geq\frac{\eps}{\mu_1} \cdot \lambda_k$.
\end{lemma}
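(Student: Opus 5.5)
The plan is to unfold the definition of $\redp_j(\ts)$ for a fresh job and reduce the claim to a lower bound on the true processing time $p_j$ in terms of the final class $\fink_j$. Since $j \in F(k) \subseteq \afull(\ts)$, \Cref{lem:fresh tfj} and \Cref{def:effective:time} give
\[
\redp_j(\ts) = \redp_j - e_j(\ts) = \eps \cdot \min\{p_j, \Lambda_k\},
\]
because $\redk_j = k$. If the minimum is $\Lambda_k$, we are done immediately, since $\Lambda_k \ge \lambda_k \ge \lambda_k/\mu_1$ (using $\mu_1 \ge 1$). So it remains to show $p_j \ge \lambda_k/\mu_1$.

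For this step I will use two monotonicity facts. By \Cref{def:effective:class}, $k = \redk_j \le \fink_j$, and $\lambda_{k'} = \max\{1, 2^{k'-1}\}$ is nondecreasing in $k'$. Hence it suffices to prove
\[
p_j \ge \lambda_{\fink_j}/\mu_1 .
\]
If $\fink_j = 0$, then $\lambda_0 = 1 \le p_j$ trivially. So assume $\fink_j \ge 1$, and split according to whether $j$ is known at time $\ts$. By \Cref{lem:final:class:pred}, either $s_j = \fint_j < \ts$ or $s_j \ge \ts$.

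\textbf{Unknown case ($s_j \ge \ts$).} Here $\fink_j = k_j(\ts) = \lceil \log_2(e_j(\ts)+1) \rceil$, so $e_j(\ts) + 1 > 2^{\fink_j - 1}$. Since elapsed times and powers $2^{\fink_j-1}$ are integers for $\fink_j \ge 1$, this gives $e_j(\ts) \ge 2^{\fink_j-1} = \lambda_{\fink_j}$. As $j$ is still active, $p_j > e_j(\ts) \ge \lambda_{\fink_j} \ge \lambda_{\fink_j}/\mu_1$.

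\textbf{Known case ($s_j < \ts$).} By the w.l.o.g.\ assumption $\hat p_j \ge e_j(s_j)$, we have $\fink_j = 1 + \lfloor \log_2 \hat p_j \rfloor$. This yields $\hat p_j \ge 2^{\fink_j - 1} = \lambda_{\fink_j}$. The model guarantees $\hat p_j \le \mu_1 p_j$, so $p_j \ge \lambda_{\fink_j}/\mu_1$. If one prefers not to rely on the w.l.o.g.\ simplification, the maximum in the definition of $k_j(t)$ is handled by combining the two arguments: whichever term attains the maximum gives the bound.

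I expect no real obstacle; this is a direct definitional check. The only delicate points are the integrality rounding in the unknown case, which gives $e_j \ge 2^{\fink_j-1}$ rather than a strict-inequality loss, and the boundary class $k = 0$, where $\lambda_0 = 1$ rather than $2^{-1}$. Both are handled above, and in both the min/max structure of $\redk_j$ only helps because $\lambda$ is monotone.
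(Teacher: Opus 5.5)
Your proof is correct and is essentially the paper's argument: reduce to $p_j \ge \lambda_{\fink_j}/\mu_1$ via $\redk_j \le \fink_j$ and monotonicity of $\lambda$, then use the elapsed-time bound when the class comes from processing and $\hat p_j \le \mu_1 p_j$ when it comes from the estimate. The only differences are minor. The paper splits on whether $j$ is known at $\fint_j$ rather than at $\ts$, and it handles both terms of the maximum in $k_j$ directly instead of invoking the w.l.o.g.\ assumption $\hat p_j \ge e_j(s_j)$. Splitting at $\fint_j$ also avoids the boundary case $s_j = \ts$, where your formula $\fink_j = \lceil \log_2(e_j(\ts)+1)\rceil$ still holds but needs the extra step $\fink_j = k_j(\ts-1)$ with $e_j(\ts-1)=e_j(\ts)$.
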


\begin{proof}
    From $j \in F(k)$, we get $j \in \afull(\ts)$ and $\redk_j=k$.
    By~\Cref{def:effective:time}, we have $\redp_j = e_j(\fint_j) + \eps \cdot \min\{p_j, \Lambda_{\redk_j}\}$ and by~\Cref{def:effective:class}, we have $\redk_j = \min\{\fink_j,\max_{\fint_j \le t < \ts} k(t)\}$. Since $j \in \afull(\ts)$ also implies that $j$ is not touched after reaching its final class, we get
    \[
    \redp_j(\ts) = \redp_j - e_j(\ts) = \redp_j - e_j(\fint_j) = \eps \cdot \min\{p_j, \Lambda_{\redk_j}\}.
    \]
    Thus it is enough to prove that $p_j \ge \frac{1}{\mu_1}\lambda_{\redk_j}$, because $\Lambda_{\redk_j}\ge \lambda_{\redk_j}$.

    Suppose first that $j$ is still unknown at time $\fint_j$. If $\redk_j=0$, then $p_j\ge 1=\lambda_{\redk_j}$. Otherwise, $j$ reaches its final class $\fink_j$ because the elapsed time of $j$ reaches at least $\lambda_{\fink_j}$. Hence, $p_j \ge \lambda_{\fink_j} \ge \lambda_{\redk_j}$.

    Now suppose that $j$ is known at $\fint_j$. Then, by~\Cref{lem:final:class:pred}, it just became known at time $\fint_j$. If $e_j(\fint_j) \ge \lambda_{\fink_j}$, then we can argue in the same way as in the unknown case. Hence, assume $e_j(\fint_j) < \lambda_{\fink_j}$. This implies that the arrival of the estimate promoted the class of the job, and thus $\fink_j = 1+\floor{\log_2 \hat{p}_j}$.
    By definition of the problem, it holds that $p_j \ge \frac{1}{\mu_1} \cdot \hat{p}_j \ge  \frac{1}{\mu_1} \cdot \lambda_{\fink_j} \ge \frac{1}{\mu_1} \cdot \lambda_{\redk_j}$.

    We conclude
    \[
    \redp_j(\ts)=\eps \cdot \min\{p_j,\Lambda_{\redk_j}\} \ge
    \frac{\eps}{\mu_1} \lambda_{\redk_j}=\frac{\eps}{\mu_1} \lambda_k
    . \qedhere
    \]
\end{proof}

We next give a nonzero lower bound on the effective excess of heavy crucial classes.

\begin{lemma}[Excess of heavy classes]
\label{lem:large-class-excess}
    If $k \in I_h$, then $\redex(S_k) \geq \frac{|\bar F(k)|}{7} \frac{\eps}{\mu_1} \lambda_k > 0$.
\end{lemma}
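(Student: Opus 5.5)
For $k \in I_h$ we have $S_k = S(t_{>k}+1)$, and since $\bar F(k)\neq\emptyset$ implies $F(k)\neq\emptyset$ (\Cref{obs:bar F subseteq}), the Weak Excess Lemma (\Cref{lem:weak-excess}) applies. I will combine it with the per-job lower bound of \Cref{lem:full-lower-bounds} and then use heaviness to absorb the additive error term $3\mu_2\Lambda_k$.

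First, from \Cref{lem:weak-excess} and nonnegativity of effective remaining times (\Cref{obs:relaxed:processing:times}), I drop all classes $k'<k$ and keep only $\bar F(k)$:
\[
\redex(S_k) \ge \sum_{j\in \bar F(k)} \redp_j(\ts) - 3\mu_2\Lambda_k .
\]
Next, each $j\in\bar F(k)\subseteq F(k)$ satisfies $\redp_j(\ts)\ge \frac{\eps}{\mu_1}\lambda_k$ by \Cref{lem:full-lower-bounds}. Hence the sum is at least $|\bar F(k)|\frac{\eps}{\mu_1}\lambda_k$.

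The remaining step, and the only one needing care, is to bound the error term by a fraction of this main term. Using $\Lambda_k\le 2\lambda_k$, I get $3\mu_2\Lambda_k \le 6\mu_2\lambda_k = 6\rho\cdot\frac{\eps}{\mu_1}\lambda_k$, since $\rho=\frac{\mu_1\mu_2}{\eps}$. Heaviness gives $|\bar F(k)|>7\rho$, so $6\rho < \frac{6}{7}|\bar F(k)|$. Therefore
\[
\redex(S_k)\ \ge\ \Bigl(|\bar F(k)| - \tfrac67|\bar F(k)|\Bigr)\frac{\eps}{\mu_1}\lambda_k \;=\; \frac{|\bar F(k)|}{7}\cdot\frac{\eps}{\mu_1}\lambda_k .
\]
This quantity is strictly positive because $\bar F(k)\neq\emptyset$, $\lambda_k\ge 1$, and $\eps>0$. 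The only subtle point is matching the constants $3\cdot 2=6<7$, which the threshold $7\rho$ in the definition of $I_h$ is designed to make work.
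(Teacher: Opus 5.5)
Your proof is correct and matches the paper's argument step for step. You apply the Weak Excess Lemma, keep only $\bar F(k)$, bound each term via \Cref{lem:full-lower-bounds}, and absorb $3\mu_2\Lambda_k \le 6\rho\frac{\eps}{\mu_1}\lambda_k$ using $|\bar F(k)|>7\rho$; you also make explicit two points the paper leaves implicit, namely that $F(k)\neq\emptyset$ (so the Weak Excess Lemma applies) and that the dropped terms are nonnegative.
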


\begin{proof}
The assumption of the lemma implies $S_k = S(t_{>k}+1)$.
We have
\begin{align}
        \redex(S_k) =  \redex(S(t_{>k}+1))
        &\geq - 3\mu_2\Lambda_k + \sum_{k' \leq k} \sum_{j \in \bar F(k')} \redp_{j}(t^*) \label{eq:heavy-ex:1} \\
        &\geq - 3\mu_2\Lambda_k + \sum_{j \in \bar F(k)} \redp_{j}(t^*) \notag \\
        &\geq - 6\rho \lambda_k \frac{\eps}{\mu_1} + | \bar F(k)| \frac{\eps}{\mu_1} \lambda_k  = \left( |\bar F(k)| - 6\rho \right) \frac{\eps}{\mu_1} \lambda_k \label{eq:heavy-ex:3}\\
        &\geq \frac{|\bar F(k)|}{7} \frac{\eps}{\mu_1} \lambda_k \, , \label{eq:heavy-ex:5}
    \end{align}
    where 
    \eqref{eq:heavy-ex:1} uses \Cref{lem:weak-excess}, 
    \eqref{eq:heavy-ex:3} uses \Cref{lem:full-lower-bounds}, $\Lambda_k \le 2\lambda_k$, and $\rho=\mu_1\mu_2/\eps$,
    and \eqref{eq:heavy-ex:5} holds because $|\bar F(k)| > 7\rho$ as $k \in I_h$.
\end{proof}

Finally, we show that almost all light classes have positive excess.

\begin{lemma}[Excess of light classes is positive]
\label{lem:small-class-excess}
    For every $k \in I_l \setminus \{\min (I_l)\}$, we have
    $\redex(S_k) > 0$.
\end{lemma}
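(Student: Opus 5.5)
The plan is to apply the Strong Excess Lemma (\Cref{lem:strong-excess}) directly. Fix $k \in I_l \setminus \{\min(I_l)\}$. Since $k \in I_l$, we have $S_k = S(t_{\ge k}+1)$, and $\bar F(k) \neq \emptyset$ because $k$ is crucial. Hence \Cref{lem:strong-excess} applies and gives
\[
\redex(S_k) \;\ge\; \sum_{k' < k} \sum_{j \in \bar F(k')} \redp_j(\ts).
\]
It therefore suffices to find a single job on the right-hand side whose effective remaining time at $\ts$ is strictly positive.

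Let $k_0 := \min(I_l)$. Because $k \neq k_0$ and $k_0$ is the minimum of $I_l$, we have $k_0 < k$. Moreover, $k_0$ is crucial, so $\bar F(k_0) \neq \emptyset$. Pick any $j \in \bar F(k_0)$. Then $j$ appears in the double sum on the right-hand side, with $k' = k_0 < k$.

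Next we bound this term from below. By \Cref{obs:bar F subseteq} we have $j \in F(k_0)$, so \Cref{lem:full-lower-bounds} yields $\redp_j(\ts) \ge \frac{\eps}{\mu_1}\lambda_{k_0}$. This is strictly positive, since $\eps > 0$, $\mu_1 < \infty$ and $\lambda_{k_0} \ge 1$.

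Every other summand is nonnegative, because effective remaining times are nonnegative by \Cref{obs:relaxed:processing:times}. Hence $\redex(S_k) \ge \frac{\eps}{\mu_1}\lambda_{k_0} > 0$.

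I do not expect a real obstacle here. The only point to check is why the minimum of $I_l$ is excluded: for $k = \min(I_l)$ there may be no crucial class below $k$ at all, in which case the sum in \Cref{lem:strong-excess} could be empty. For any other light class, the class $\min(I_l)$ itself supplies a nonempty $\bar F(k')$ with $k' < k$. Heavy crucial classes below $k$ would also work, but they are not needed.
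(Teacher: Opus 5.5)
Your proof is correct and follows essentially the same route as the paper: apply \Cref{lem:strong-excess} to $S_k = S(t_{\ge k}+1)$, keep only the summands from $\bar F(\min(I_l))$ with $\min(I_l) < k$, and use \Cref{lem:full-lower-bounds} to get a strictly positive lower bound. You also spell out the nonnegativity of the dropped summands via \Cref{obs:relaxed:processing:times}, which the paper leaves implicit.
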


\begin{proof}
  The assumption of the lemma implies $S_k = S(t_{\ge k}+1)$.  Using
  \Cref{lem:strong-excess} and \Cref{lem:full-lower-bounds}, we can
  conclude that
    \begin{align*}
        \redex(S_k) =  \redex(S(t_{\ge k}+1)) 
        \geq \sum_{k' < k} \sum_{j \in \bar F(k')} \redp_{j}(t^*) 
        \geq \sum_{j \in \bar F(\min (I_l))} \redp_{j}(t^*) 
        \geq \sum_{j \in \bar F(\min (I_l))} \frac{\eps}{\mu_1}
      \lambda_{\min(I_l)} > 0 \ .
    \end{align*}
    Hence the proof.
\end{proof}

\subsection{Definition of Dual Solution}
\label{sec:defin-dual}

We now define a dual solution to $(\dppart)$.
If $I_l\neq \emptyset$, let $k_1:=\min(I_l)$ and $I' := I\setminus\{k_1\}$. If $I_l=\emptyset$, let $I':=I$. This technical detail is necessary because \Cref{lem:small-class-excess} does not guarantee positive excess if the smallest crucial class is light; this is a property that we need for defining a dual solution.
For each subset of jobs $S$, we set
\[
y_S := \sum_{\substack{k \in I' \\S_k = S}} \frac{|\bar F(k)|}{7 \rho \cdot \redex(S)} \ . 
\]
\Cref{lem:large-class-excess,lem:small-class-excess} show that $y_S$ is well-defined, as $\rho > 0$ and $\redex(S_k) > 0$ for all $k \in I'$.

\subsection{Objective Value}
\label{sec:objective-value}

We first show that our dual assignment captures a constant fraction of the number of active jobs in the schedule of the algorithm at time $\ts$.

\begin{lemma}[Dual Objective Value]\label{lem:dual-obj}
    We have that $\sum_{S} \redex(S) \cdot y_S \geq \frac{1}{56\rho} |A(\ts)| - \frac{1+14\rho}{14\rho}$.
\end{lemma}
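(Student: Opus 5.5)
The plan is to compute the dual objective directly from the definition of $y_S$ and then chain the earlier counting lemmas. Grouping by sets, we have
\[
\sum_S \redex(S)\, y_S \;=\; \sum_S \sum_{\substack{k\in I'\\ S_k=S}} \frac{|\bar F(k)|}{7\rho} \;=\; \frac{1}{7\rho}\sum_{k\in I'} |\bar F(k)| ,
\]
since each $k\in I'$ appears exactly once, in the term for $S=S_k$. The factors $\redex(S)$ cancel; this is legitimate because $\redex(S_k)>0$ for all $k\in I'$ by \Cref{lem:large-class-excess,lem:small-class-excess}. Next I will bound the loss from dropping $k_1=\min(I_l)$ when $I_l\neq\emptyset$. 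Since $k_1$ is light, $|\bar F(k_1)|\le 7\rho$. Moreover $\bar F(k)=\emptyset$ for every noncrucial $k$, so
\[
\sum_{k\in I'}|\bar F(k)| \;\ge\; \Bigl|\bigcup_k \bar F(k)\Bigr| - 7\rho .
\]

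I then chain the lower bounds. \Cref{lem:many-in-bar-F} together with \Cref{obs:fresh:job:partition} gives $|\bigcup_k\bar F(k)|\ge \frac12|\afull(\ts)|$, and \Cref{lem:many-fresh} gives $|\afull(\ts)|\ge \frac14|A(\ts)|-1$. Putting these together,
\[
\sum_S \redex(S)\,y_S \;\ge\; \frac{1}{7\rho}\Bigl(\frac18|A(\ts)| - \frac12 - 7\rho\Bigr) \;=\; \frac{1}{56\rho}|A(\ts)| - \frac{1}{14\rho} - 1 ,
\]
and the right-hand side equals $\frac{1}{56\rho}|A(\ts)| - \frac{1+14\rho}{14\rho}$, which is exactly the claimed bound.

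There is no real obstacle here. The only point needing care is that the sets $\bar F(k)$ are pairwise disjoint (they are subsets of the partition $F(k)$ by \Cref{obs:bar F subseteq}). This disjointness is what lets $\sum_k|\bar F(k)|$ equal $|\bigcup_k\bar F(k)|$. One should also check that several classes $k$ sharing the same $S_k$ are handled correctly; they are, because the definition of $y_S$ sums over all such $k$.
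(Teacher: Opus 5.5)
Your proposal is correct and matches the paper's proof step for step: the $\redex(S)$ factors cancel to give $\frac{1}{7\rho}\sum_{k\in I'}|\bar F(k)|$, and dropping the single light class $\min(I_l)$ costs at most $7\rho$. Then \Cref{lem:many-in-bar-F} and \Cref{lem:many-fresh} finish the chain, and your remark that the sets $\bar F(k)$ are pairwise disjoint spells out a step the paper uses without stating it.
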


\begin{proof}

We compute
    \begin{align}
        \sum_S \redex(S) \cdot y_S
        &=
        \sum_S \redex(S)\sum_{\substack{k \in I'\\S_k = S}}  \frac{|\bar F(k)|}{7 \rho \cdot \redex(S)}  \label{eq:dualobj1} \\
        &=
        \sum_{k \in I'} \redex(S_k) \cdot \frac{|\bar F(k)|}{7 \rho \cdot \redex(S_k)} \notag  
        =
        \sum_{k \in I'} \frac{|\bar F(k)|}{7 \rho} \notag \\
        &\geq
        \Bigl(\sum_{k \in I} \frac{|\bar F(k)|}{7 \rho} \Bigr) - 1 \label{eq:dualobj2} \\
        &\geq
        \frac{1}{14\rho} |\afull(\ts)| - 1 \label{eq:dualobj3} \\
        &=
        \frac{1}{14\rho} \left( \frac14 |A(\ts)| - 1 \right) - 1
        =
        \frac{1}{56\rho} |A(\ts)| - \frac{1+14\rho}{14\rho} \, , \label{eq:dualobj4}
    \end{align}
where \eqref{eq:dualobj1} uses the definition of our dual solution, \eqref{eq:dualobj2} uses the definition of $I'$ and the fact that we remove at most one light class, which has size at most $7\rho$,
\eqref{eq:dualobj3} follows from \Cref{lem:many-in-bar-F}, and \eqref{eq:dualobj4} follows from \Cref{lem:many-fresh}.
\end{proof}

\subsection{Approximate Feasibility}
\label{sec:appr-feas}

Our goal is to show that our defined dual solution 
is approximately feasible for $(\dppart)$.
First note that $y_S \geq 0$ for all $S \subseteq J$.
Thus, it remains to show that the dual constraint of $(\dppart)$ is satisfied subject to appropriate scaling.

To this end, fix a job $j^*$ with $r_{j^*} \le \ts$. 
We refine $I_l$ and $I_h$ to classes which are relevant for $j^*$'s dual constraint:
\begin{itemize}
\item Let $k^*_l$ denote the smallest class $k \in I_l \cap I'$ such that $j^* \in S_k$, if such a class exists. If $k^*_l$ exists, define
$I_l(j^*) = \{k \in I_l \cap I' : k \geq k^*_l \text{ and } j^* \in S_{k}\}$; otherwise, define $I_l(j^*)=\emptyset$.
\item Let $k^*_h$ denote the smallest class $k \in I_h$ such that $j^* \in S_k$, if such a class exists. If $k^*_h$ exists, define
$I_h(j^*) = \{k \in I_h : k \geq k^*_h \text{ and } j^* \in S_{k}\}$; otherwise, define $I_h(j^*)=\emptyset$.
\end{itemize}
This setup helps us to prove the following preliminary lemma. It gives an upper bound on $\redp_{j^*}$, which appears in the dual constraint.

\begin{lemma}
    \label{lem:size:lb}
    If $k^*_l$ exists, then
    $\redp_{j^*} \le  \mu_2 \cdot \Lambda_{k^*_l}$.
    If $k^*_h$ exists, then
    $\redp_{j^*} \le  2\mu_2 \cdot \Lambda_{k^*_h}$.
\end{lemma}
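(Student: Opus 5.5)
The plan is to use only one fact about the sets $S_k$: membership $j^*\in S_k$ forces every unit of processing that $j^*$ ever receives to happen at a current class below a fixed threshold. I then turn this into a bound on $e_{j^*}(\ts)$ and on the additive term in \Cref{def:effective:time}. I will prove a single auxiliary statement and derive both parts of the lemma from it.

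\emph{Auxiliary statement.} If $r_{j^*}\ge t+1$ for some $t$ such that every job processed during $[t+1,\ts]$ has current class at most $c-1$, then $\redp_{j^*}\le \mu_2\Lambda_{c}$.

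For the light case take $k=k^*_l$, $t=t_{\ge k}$ and $c=k$, using \Cref{def:tk-times}. For the heavy case take $k=k^*_h$, $t=t_{>k}$ and $c=k+1$; then $\Lambda_{k+1}=2\Lambda_k$ gives the factor $2$. If $t_{\ge k}$ takes its default value $0$, no job of class $\ge k$ is processed before $\ts$ at all, so the hypothesis still holds.

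\emph{Bounding the elapsed time.} Since $j^*$ is only processed after its release, every step during which $j^*$ is processed occurs at current class $\le c-1$. I bound $e_{j^*}(\ts)$ by looking at the last processing step of $j^*$, which happens at some class $c'\le c-1$.
\begin{itemize}
\item If $j^*$ is unknown at that step, then $c'=\lceil\log_2(e+1)\rceil$ for the elapsed time $e$ before the step. Hence $e\le 2^{c'}-1$, and after the step $e_{j^*}(\ts)\le 2^{c-1}$.
\item If $j^*$ is known at that step, then $c'=1+\lfloor\log_2\hat p_{j^*}\rfloor\le c-1$ by the w.l.o.g.\ assumption $\hat p_{j^*}\ge e_{j^*}(s_{j^*})$. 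So $\hat p_{j^*}<2^{c-1}$, and therefore $e_{j^*}(\ts)\le p_{j^*}\le\mu_2\hat p_{j^*}\le \mu_2 2^{c-1}$.
\end{itemize}
In both cases $e_{j^*}(\ts)\le \mu_2\Lambda_{c-1}$. If $j^*\notin\afull(\ts)$, then $\redp_{j^*}=e_{j^*}(\ts)$ and we are done.

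\emph{Fresh case.} Now let $j^*\in\afull(\ts)$. Then $\redp_{j^*}=e_{j^*}(\ts)+\eps\min\{p_{j^*},\Lambda_{\redk_{j^*}}\}$.
\begin{itemize}
\item Since $\fint_{j^*}\ge r_{j^*}\ge t+1$, \Cref{def:effective:class} gives $\redk_{j^*}\le\max_{\fint_{j^*}\le t'<\ts}k(t')\le c-1$. So the second term is at most $\Lambda_{c-1}$.
\item By \Cref{lem:fresh tfj}, $j^*$ is untouched after $\fint_{j^*}$, and by \Cref{lem:final:class:pred}, $s_{j^*}\ge\fint_{j^*}$. Hence at every processing step of $j^*$ the job was unknown, and the first bullet above gives $e_{j^*}(\ts)\le 2^{c-1}=\Lambda_{c-1}$ (using $\lambda$-type bounds for $c-1=0$, where $e_{j^*}(\ts)\le 1$).
\end{itemize}
Summing the two terms gives $\redp_{j^*}\le 2\Lambda_{c-1}\le\Lambda_c\le\mu_2\Lambda_c$. (For $c=1$ the same conclusion holds since then $j^*$ was processed at most once at class $0$.)

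\emph{Main obstacle.} The delicate point is the fresh, known case. There the estimate may promote $\fink_{j^*}$ far above $c$ without any further processing. The argument must therefore rely on $\redk_{j^*}$ rather than $\fink_{j^*}$, and on the fact that all processing of a fresh job precedes its signal. A second, smaller subtlety is the boundary case of a job whose class jumps exactly at its last processing step, which is why I bound the elapsed time after that step rather than before it.
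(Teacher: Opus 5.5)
Your proof is correct and follows essentially the paper's route: one bound for $j^*\in S(t_{\ge k}+1)$ (where, as the paper notes, $k^*_l\ge 1$ because the smallest light class is excluded from $I'$), the heavy case via $t_{>k}=t_{\ge k+1}$ and $\Lambda_{k+1}=2\Lambda_k$, and the same fresh-case argument using $\redk_{j^*}\le k-1$ and the fact that a fresh job is processed only before its signal. The one difference is that for non-fresh $j^*$ you bound $e_{j^*}(\ts)$ via the class at its last processing step, rather than via the paper's claim $\fink_{j^*}<k$. Your version is slightly more robust: it also covers a job that counts as ongoing at $\ts$ only because it was made (or treated as) ongoing at $\ts$ after an estimate promoted it. In that case $\fink_{j^*}<k$ can fail, even though $\redp_{j^*}=e_{j^*}(\ts)$ still satisfies the bound.
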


\begin{proof}
We show that job $j^*$ satisfies the following two properties:
\begin{enumerate}
    \item If $k\ge 1$ and $j^* \in S(t_{\ge k}+1)$, then $\redp_{j^*} \le  \mu_2 \cdot \Lambda_k$.
    \item If $j^* \in S(t_{> k}+1)$, then $\redp_{j^*} \le 2\mu_2 \cdot \Lambda_k$.
\end{enumerate}
The lemma then follows by definition of $k^*_l$ and $k^*_h$; note that any light class in $I'$ is at least $1$, since the smallest light class is not included in $I'$.
The second property immediately follows from the first property as $t_{>k} = t_{\geq k + 1}$.

We now show the first property. If $j^* \in S(t_{\ge k}+1)$, then $r_{j^*} \in [t_{\ge k}+1,\ts]$. By definition of $t_{\ge k}$, no job of current class at least $k$ is processed after time $t_{\ge k}$.
    \begin{itemize}
        \item If $j^* \in \apart(\ts)$, then $\redp_{j^*} = e_{j^*}(\ts)$ and $\redk_{j^*} = \fink_{j^*}$ by~\Cref{def:effective:class} and~\Cref{def:effective:time}. Furthermore, $j^* \in \apart(\ts)$ implies that $j^*$ is processed after reaching its final class $\fink_{j^*}$. Hence $\fink_{j^*} < k$. If $j^*$ is still unknown at $\ts$, this implies $e_{j^*}(\ts) \le \Lambda_{k-1} \le \Lambda_k$. If $j^*$ is known at $\ts$, then $p_{j^*} \le \mu_2 \hat p_{j^*} \le \mu_2 \Lambda_{k-1} \le \mu_2 \Lambda_k$. In either case, $\redp_{j^*}\le \mu_2\Lambda_k$.

        \item If $j^* \not\in A(\ts)$, then $\redp_{j^*} = p_{j^*}$ and $\redk_{j^*} = \fink_{j^*}$. Since $j^*$ is run to completion during $[t_{\ge k}+1,\ts]$, it cannot be processed in current class at least $k$ after time $t_{\ge k}$, and hence $\fink_{j^*}<k$. The same class bound as in the previous case gives $p_{j^*}\le \mu_2\Lambda_k$.

        \item If $j^* \in \afull(\ts)$, then $\redk_{j^*} = \min\{\fink_{j^*}, \max_{\fint_{j^*}\le t < \ts } k(t)\}$ and $\redp_{j^*} = e_{j^*}(\ts)+ \eps \cdot \min\{p_{j^*}, \Lambda_{\redk_{j^*}}\}$.
        Since $\fint_{j^*} \ge r_{j^*} \ge t_{\ge k} + 1$ and by definition of $t_{\ge k}$, we have $\redk_{j^*} < k$ and $e_{j^*}(\ts)\le \Lambda_{k-1}$. Thus,
        \[
        \redp_{j^*} \le \Lambda_{k-1}+\eps \Lambda_{k-1} \le \Lambda_k \le \mu_2 \Lambda_k .
        \]
    \end{itemize}
This completes the proof of the lemma.
\end{proof}

We now come to the actual proof of the approximate dual feasibility. We split the sum in the dual constraint into the contribution of
sets in $S_k$ with $k \in I_h(j^*)$ (\Cref{lem:dual-feasibility1})
and with $k \in I_l(j^*)$ (\Cref{lem:dual-feasibility2}), and bound them separately in each of these lemmas.

\begin{lemma}\label{lem:dual-feasibility1}
It holds that
   $\sum_{k \in I_h(j^*)} \min\{\redex(S_{k}), \redp_{j^*}\} \cdot \frac{|\bar F(k)|}{7\rho \cdot \redex(S_{k})} \leq 8$.
\end{lemma}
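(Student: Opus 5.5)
Bound each summand separately using $\min\{\redex(S_k),\redp_{j^*}\} \le \redp_{j^*}$ together with \Cref{lem:size:lb}. The aim is to show that the terms decay geometrically in $k$ above $k^*_h$. The first class $k^*_h$ is the exception: there the bound $\min\{\cdot\}\le \redex(S_k)$ gives a contribution of $|\bar F(k^*_h)|/(7\rho)$, which is not bounded. So the $\min$ has to be used differently, and we treat every class with $\min\{\redex(S_k),\redp_{j^*}\}\le \redp_{j^*}$.

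For $k\in I_h(j^*)$, \Cref{lem:large-class-excess} gives
\[
\redex(S_k)\ge \frac{|\bar F(k)|}{7}\frac{\eps}{\mu_1}\lambda_k .
\]
Hence
\[
\frac{|\bar F(k)|}{7\rho\,\redex(S_k)}\le \frac{\mu_1}{\rho\,\eps\,\lambda_k}=\frac{1}{\mu_2\lambda_k}.
\]
By \Cref{lem:size:lb}, $\redp_{j^*}\le 2\mu_2\Lambda_{k^*_h}\le 4\mu_2\lambda_{k^*_h}$. Each summand is therefore at most
\[
\redp_{j^*}\cdot\frac{1}{\mu_2\lambda_k}\le 4\,\frac{\lambda_{k^*_h}}{\lambda_k}.
\]

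Every $k\in I_h(j^*)$ satisfies $k\ge k^*_h$, and distinct classes are distinct integers. Since $\lambda_k=\max\{1,2^{k-1}\}$, the ratio $\lambda_{k^*_h}/\lambda_k$ is at most $2^{-(k-k^*_h)}$, with one exception: if $k^*_h=0$, the classes $k=0$ and $k=1$ both have ratio $1$. In every case,
\[
\sum_{k\ge k^*_h}\frac{\lambda_{k^*_h}}{\lambda_k}\le 2 .
\]
This holds in the generic case because $\sum_i 2^{-i}=2$. When $k^*_h=0$ we need more care, since the sum is $1+1+\tfrac12+\dots=3$.

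The $k^*_h=0$ case is the one step that needs a closer look. Here the sharper bound $\Lambda_{k^*_h}=1=\lambda_{k^*_h}$ gives $\redp_{j^*}\le 2\mu_2\lambda_{k^*_h}$, so each summand is at most $2\lambda_{k^*_h}/\lambda_k$. The total is then at most $2\cdot 3=6\le 8$. In the generic case $k^*_h\ge1$, the total is at most $4\cdot 2=8$.

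Getting the constant exactly $8$ depends only on $\Lambda_k\le2\lambda_k$ and this geometric-sum bookkeeping. The main obstacle is realizing that the heavy lower bound on the excess, $\redex(S_k)\gtrsim|\bar F(k)|\lambda_k\eps/\mu_1$, cancels the $|\bar F(k)|$ factor exactly, leaving a geometric series in $k$.
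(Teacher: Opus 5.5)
Your proof is correct and is essentially the paper's argument: bound the minimum by $\redp_{j^*}$, use \Cref{lem:large-class-excess} to cancel $|\bar F(k)|$ and obtain the factor $1/(\mu_2\lambda_k)$, apply \Cref{lem:size:lb}, and sum the resulting geometric series over $k \ge k^*_h$. Your separate treatment of $k^*_h=0$ (total $2\cdot 3 = 6$) makes explicit bookkeeping that the paper leaves implicit, but your intermediate claim that $\sum_{k\ge k^*_h}\lambda_{k^*_h}/\lambda_k \le 2$ ``in every case'' is false in exactly that case and should be restricted to $k^*_h \ge 1$.
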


\begin{proof}
    If $I_h(j^*)=\emptyset$, the claim is immediate. Hence assume that $k^*_h$ exists.
    For each $k \in I_h(j^*) \subseteq I_h$, \Cref{lem:large-class-excess} gives
    \[
    \redex(S_{k}) \geq \frac{|\bar F(k)|}{7} \frac{\eps}{\mu_1} \lambda_k \, .
    \]
    Using this and \Cref{lem:size:lb}, we can conclude that
    \begin{align*}
        \sum_{k \in I_h(j^*)} \min\{\redex(S_{k}), \redp_{j^*}\} \frac{|\bar F(k)|}{7\rho \cdot \redex(S_{k})}
        &\leq \sum_{k \geq k^*_h} \redp_{j^*} \frac{|\bar F(k)|}{7\rho \frac{|\bar F(k)|}{7} \frac{\eps}{\mu_1} \lambda_k} \\
        &= \sum_{k \geq k^*_h} \redp_{j^*} \frac{1}{\mu_2 \lambda_k} 
        \leq \sum_{k \geq k^*_h} \frac{2\Lambda_{k^*_h}}{\lambda_k} \leq 8 \, .
    \end{align*}
    This completes the proof of the lemma.
\end{proof}

\begin{lemma}\label{lem:dual-feasibility2}
   It holds that
   $\sum_{k \in I_l(j^*)} \min\{\redex(S_{k}), \redp_{j^*}\} \cdot \frac{|\bar F(k)|}{7\rho \cdot \redex(S_{k})} \le 7$.
\end{lemma}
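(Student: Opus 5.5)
The plan is to write each summand through the strong excess bound and then show the resulting series is essentially geometric. List the classes of $I_l(j^*)$ in increasing order as $k^*_l = k_0 < k_1 < \dots < k_r$, and put $a_i := |\bar F(k_i)|$. Since $k_i \in I_l$, we have $a_i \le 7\rho$.

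\textbf{Step 1 (the $i=0$ term).} Bound $\min\{\redex(S_{k_0}),\redp_{j^*}\}/\redex(S_{k_0}) \le 1$. Together with $a_0 \le 7\rho$, this shows the $i=0$ term is at most $1$. The same argument shows every single term is at most $1$.

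\textbf{Step 2 (a rational bound for $i \ge 1$).} Each $k_i$ lies in $I'$, so $\bar F(k_i)\neq\emptyset$ and \Cref{lem:strong-excess} applies. Combining it with \Cref{lem:full-lower-bounds}, and keeping only the crucial classes $k_0,\dots,k_{i-1}$ (which are all $<k_i$), gives
\[
\redex(S_{k_i}) \ \ge\ \frac{\eps}{\mu_1}\, D_i, \qquad\text{where } D_i := \sum_{m<i} a_m \lambda_{k_m}.
\]
Next, \Cref{lem:size:lb} gives $\redp_{j^*} \le \mu_2 \Lambda_{k_0}$. Using $\min\{\redex,\redp_{j^*}\}\le \redp_{j^*}$ and $7\rho\cdot \frac{\eps}{\mu_1} = 7\mu_2$, the $i$-th term is at most
\[
\min\Bigl\{1,\ \frac{a_i\,\Lambda_{k_0}}{7\,D_i}\Bigr\}.
\]
So the problem is reduced to a purely combinatorial bound on this sum. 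The inputs are $a_m\ge 1$, $a_m\le 7\rho$, and $\lambda_{k_m}\ge 2^{k_m-1}\ge 2^{m-1}\Lambda_{k_0}/2$, since $k_m\ge k_0+m$.

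\textbf{Step 3 (summing the series; main obstacle).} The difficulty is that $a_i$ may be as large as $7\rho$ while $D_i$ is as small as a single term $\lambda_{k_{i-1}}$. In that case the term is only capped by $1$, not geometrically small, so the denominator growth alone does not suffice. I would exploit the fact that a large $a_i$ feeds into all later denominators: $D_{i+1}\ge a_i\lambda_{k_i}$, and $\lambda_{k_i}$ grows like $2^{i}\Lambda_{k_0}$. My first attempt would be a potential or telescoping argument on the normalized quantity $x_i := a_i\lambda_{k_i}/\Lambda_{k_0}$. One then has $\text{term}_i \le \min\{1, x_i 2^{-(k_i-k_0)+1}/(7\sum_{m<i}x_m)\}$. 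The factor $2^{-(k_i-k_0)}$ should let me charge each term either to a geometric series or to a doubling of $\sum_{m\le i}x_m$. If that fails, I would instead use the alternative denominator from \Cref{lem:strong-excess} that sums over all crucial classes below $k_i$, including $k_i$'s predecessors outside $I_l(j^*)$. I would also consider giving up a constant by grouping consecutive classes.

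I expect this last combinatorial step to be the real crux. The target constant $7$ suggests the intended argument is a crude one: $1$ for the first term, plus a geometric tail bounded by roughly $\sum_{i\ge1} 2^{3-i}$, or similar. This geometric tail must come from the $\Lambda_{k_0}/\lambda_{k_i}$ decay after the strong excess bound has been used to absorb the factor $a_i/(7\rho)\le 1$.
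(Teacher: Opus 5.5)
There is a genuine gap, and it is in Step~2 rather than Step~3. The $i$-th summand equals $\frac{a_i}{7\rho}\min\{1,\redp_{j^*}/\redex(S_{k_i})\}\le \min\bigl\{\frac{a_i}{7\rho},\ \frac{a_i\Lambda_{k_0}}{7D_i}\bigr\}$. You replace the first argument $\frac{a_i}{7\rho}$ by $1$, and after that the reduced sum $\sum_i\min\{1,a_i\Lambda_{k_0}/(7D_i)\}$ is \emph{not} bounded by a constant under your constraints $1\le a_m\le 7\rho$ and $k_m$ increasing. For a counterexample, take $k_i=k_0+i$ with $k_0\ge 1$, so that $\lambda_{k_i}=2^i\lambda_{k_0}$ and $\Lambda_{k_0}=2\lambda_{k_0}$. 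Set $a_0=1$ and $a_i=\lceil\frac72\sum_{m<i}a_m2^m\rceil$. Then every term of your reduced sum equals $1$, while $\log_2 a_i$ grows only like $i^2/2$. Hence you get $\Theta(\sqrt{\log\rho})$ terms equal to $1$ before $a_i$ exceeds $7\rho$. No potential or telescoping argument on your $x_i$ can therefore reach $7$. Your fallbacks do not help either: extra crucial classes in the denominator may be absent or tiny, and grouping consecutive class indices changes nothing here. In this example the true summands are at most $a_i/(7\rho)$, and because the $a_i$ grow super-exponentially up to $7\rho$, these sum to $O(1)$. So the ``capped'' terms must be paid for by their weight $a_i/(7\rho)$, not merely counted.

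The missing idea is to keep that weight and to bucket classes dyadically by $|\bar F(k)|$, not by class index. Let $T_\ell=\{k\in I_l(j^*): |\bar F(k)|\in[2^\ell,2^{\ell+1})\}$ for $0\le\ell\le\lfloor\log_2(7\rho)\rfloor$.
\begin{itemize}
\item Within $T_\ell$, bound the term of the smallest class by $2^{\ell+1}/(7\rho)$.
\item For every later class in $T_\ell$, lower-bound the denominator from \Cref{lem:strong-excess} using only the \emph{previous class of the same bucket}. That class has $|\bar F|\ge 2^\ell$, so by \Cref{lem:full-lower-bounds} the denominator is at least $\frac{\eps}{\mu_1}2^\ell\lambda_{k_{\mathrm{prev}}}$.
\item Now the factors $2^{\ell+1}$ and $\rho$ cancel, leaving $\frac{2\Lambda_{k^*_l}}{7\lambda_{k_{\mathrm{prev}}}}$, which is geometric in the class.
\end{itemize}
Summing over buckets, the first terms contribute $\sum_\ell 2^{\ell+1}/(7\rho)\le 4$. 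The tails contribute at most $\frac{4\Lambda_{k^*_l}}{7}\sum_\ell 1/\lambda_{\min T_\ell}\le\frac{16}{7}$, since the minima are distinct classes $\ge k^*_l\ge 1$. The total is below $7$.
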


\begin{proof}
   If $I_l(j^*)=\emptyset$, the claim is immediate. Hence assume that $k^*_l$ exists.
   For each $k \in I_l(j^*)$ we have that $S_k = S(t_{\ge k}+1)$.
   Define $T_\ell = \{k \in I_l(j^*) : |\bar F(k)| \in [2^\ell, 2^{\ell+1} - 1]\}$ for each $\ell \geq 0$
   and $\Gamma = \lfloor \log_2(7\rho) \rfloor$.
   We have
   \begin{align}
       \sum_{k \in I_l(j^*)} \min\{\redex(S_{k}), \redp_{j^*}\} \cdot \frac{|\bar F(k)|}{7\rho \cdot \redex(S_{k})} \notag
       &\leq
       \sum_{\ell=0}^\Gamma \sum_{k \in T_\ell}\min\{\redex(S_{k}), \redp_{j^*}\} \cdot \frac{2^{\ell+1}}{7\rho \cdot \redex(S_{k})} \notag \\
       &\leq
       \sum_{\ell=0}^\Gamma \frac{2^{\ell+1}}{7\rho} \sum_{k \in T_\ell} \min \left\{1, \frac{\redp_{j^*}}{\redex(S_{k})} \right\} \notag \\
       &\leq
       \sum_{\ell=0}^\Gamma \frac{2^{\ell+1}}{7\rho} \sum_{k \in T_\ell} \min \left\{1, \frac{\mu_2 \Lambda_{k^*_l}}{\sum_{k' < k} \sum_{j \in \bar F(k')} \redp_{j}(t^*)} \right\} \label{eq:feas1}
   \end{align}
   where the final inequality uses \Cref{lem:strong-excess}, because each $T_\ell \subseteq I$, and \Cref{lem:size:lb}.

   Now focus on some index $\ell \ge 0$ with $T_\ell\neq \emptyset$.
    Let $k_1,\ldots,k_{|T_\ell|}$ be the indices in $T_\ell$ arranged in increasing order.
    Then
    \[
        \sum_{i=1}^{|T_\ell|} \min \left\{ 1, \frac{\mu_2 \Lambda_{k^*_l}}{\sum_{k' < k_i} \sum_{j \in \bar F(k')} \redp_{j}(t^*)} \right\}
        \leq
        1 + \sum_{i=2}^{|T_\ell|} \min \left\{ 1, \frac{\mu_2 \Lambda_{k^*_l}}{\sum_{j \in \bar F(k_{i-1})} \redp_{j}(t^*)} \right\}
    \]
    Moreover, for each $k \in T_\ell$, we have
    \[
    \sum_{j \in \bar F(k)} \redp_{j}(t^*)
    \geq \frac{\eps}{\mu_1} |\bar F(k)| \cdot \lambda_k
    \geq \frac{\eps}{\mu_1} 2^\ell \lambda_k
    \]
    using \Cref{lem:full-lower-bounds}.
    Thus,
    \begin{align*}
        \sum_{k \in T_\ell} \min \left\{1, \frac{\mu_2 \Lambda_{k^*_l}}{\sum_{k' < k} \sum_{j \in \bar F(k')} \redp_{j}(t^*)} \right\}
        &\leq
        1 + \sum_{i=2}^{|T_\ell|} \frac{\mu_2 \Lambda_{k^*_l}}{\sum_{j \in \bar F(k_{i-1})} \redp_{j}(t^*)} \\
        &\leq
        1 + \sum_{i=2}^{|T_\ell|}  \frac{\mu_2 \Lambda_{k^*_l}}{\frac{\eps}{\mu_1} 2^\ell \lambda_{k_{i-1}} } \\
        &=
        1 + \rho \frac{\Lambda_{k^*_l}}{2^\ell} \sum_{i=2}^{|T_\ell|}  \frac{1}{\lambda_{k_{i-1}}} \\
        &\leq
        1 + \rho \frac{2\Lambda_{k^*_l}}{2^\ell} \frac{1}{\lambda_{k_{1}}} .
    \end{align*}
    For the final step, we plug the above bound back into \eqref{eq:feas1}. Let $\mathcal L:=\{\ell\in\{0,\ldots,\Gamma\}:T_\ell\neq\emptyset\}$, and for each $\ell\in\mathcal L$ let $K_\ell := \min T_\ell$, that is, the smallest class $\geq k^*_l$ in $T_\ell$. Then we can conclude that
    \begin{align*}
        \sum_{\ell\in\mathcal L} \frac{2^{\ell+1}}{7\rho} \left( 1 + \rho \frac{2\Lambda_{k^*_l}}{2^\ell} \frac{1} {\lambda_{K_\ell}} \right)
        &\leq
        \sum_{\ell=0}^\Gamma \frac{2^{\ell+1}}{7\rho} + \frac{4\Lambda_{k^*_l}}{7} \sum_{\ell\in\mathcal L} \frac{1} {\lambda_{K_\ell}} \\
        &\leq \frac{2^{\Gamma+2}}{7\rho} +  \frac{4\Lambda_{k^*_l}}{7} \frac{2}{\lambda_{k^*_l}}
        \leq \frac{4 \cdot 7\rho}{7\rho} +  \frac{16}{7}
        = 4 + \frac{16}{7} < 7 .
    \end{align*}
    This completes the proof of the lemma.
\end{proof}

\begin{lemma}[Approximate Dual Feasibility]\label{lem:dual-feasibility}
    It holds that $\sum_{S: j^* \in S} \min( \redp_{j^*}, \redex(S) ) \cdot y_S \leq 15$.
\end{lemma}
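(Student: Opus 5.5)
We start by unfolding the definition of the dual. Since $y_S$ is a sum over classes $k \in I'$ with $S_k = S$, we can swap the order of summation. This gives
\[
\sum_{S: j^* \in S} \min\{\redp_{j^*}, \redex(S)\}\, y_S = \sum_{k \in I' : j^* \in S_k} \min\{\redp_{j^*}, \redex(S_k)\} \cdot \frac{|\bar F(k)|}{7\rho\cdot \redex(S_k)} .
\]
Distinct classes $k$ that happen to share the same set $S_k$ each contribute a separate summand, exactly as in the definition of $y_S$. The only subtlety is that $\redex(S_k)>0$ for every $k \in I'$, which we know from \Cref{lem:large-class-excess} and \Cref{lem:small-class-excess}.

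Next we split the index set $\{k \in I' : j^* \in S_k\}$ according to whether $k$ is light or heavy. Every heavy class lies in $I'$, since only the light class $\min(I_l)$ is excluded. Hence the heavy part is exactly $\{k \in I_h : j^* \in S_k\}$. If this set is nonempty, its minimum is $k^*_h$, so the set coincides with $I_h(j^*)$. Likewise, the light part is $\{k \in I_l \cap I' : j^* \in S_k\}$. If nonempty, its minimum is $k^*_l$, so it coincides with $I_l(j^*)$. If either set is empty, the corresponding $I(j^*)$ is empty by definition, and the contribution is zero. The constraint $k \ge k^*$ in the definitions of $I_l(j^*)$ and $I_h(j^*)$ is therefore automatically satisfied and removes nothing.

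With this, the left-hand side equals the sum of the two expressions bounded in \Cref{lem:dual-feasibility1} and \Cref{lem:dual-feasibility2}. It is therefore at most $8+7=15$.

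The only step requiring care is checking that the regrouping is exact, with no classes missed or double counted. Concretely, we need that each $k\in I'$ with $j^*\in S_k$ lies in exactly one of $I_h(j^*)$ or $I_l(j^*)$. This holds because $I_l$ and $I_h$ partition $I$, and the argument above shows both index sets are exactly captured.
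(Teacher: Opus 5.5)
Your proposal is correct and follows the paper's proof. You rewrite the sum over sets $S$ as a sum over classes $k \in I'$ with $j^* \in S_k$, note that this index set is exactly the disjoint union of $I_l(j^*)$ and $I_h(j^*)$, and add the bounds $8$ and $7$ from the two preceding lemmas; your explicit checks that $I_h \subseteq I'$ and that the thresholds $k^*_l, k^*_h$ exclude nothing are just a more careful version of the paper's one-line justification.
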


\begin{proof}
By the definitions of $I_l(j^*)$ and $I_h(j^*)$, for all $k \in I' $ with $j^* \in S_k$ we have that $k \in I_l(j^*) \cup I_h(j^*)$.
    Therefore, we can combine the bounds of \Cref{lem:dual-feasibility1} and \Cref{lem:dual-feasibility2} to obtain
    \begin{align*} 
        &\sum_{S: j^* \in S} \min\{ \redp_{j^*}, \redex(S) \} \cdot \sum_{\substack{k \in I'\\S_k = S}} \frac{|\bar F(k)|}{7 \rho \cdot \redex(S)} \\
    &=
    \sum_{k \in I_l(j^*)} \min\{\redex(S_{k}), \redp_{j^*}\} \cdot \frac{|\bar F(k)|}{7 \rho \cdot \redex(S_k)}
    +
    \sum_{k \in I_h(j^*)} \min\{\redex(S_{k}), \redp_{j^*}\} \cdot \frac{|\bar F(k)|}{7 \rho \cdot \redex(S_k)} \\
    &\leq 8+7 = 15 \ . \qedhere
    \end{align*}
\end{proof}

\subsection{Proof of \Cref{thm:main}}

Finally, we can complete the analysis of Balanced MLF and prove \Cref{thm:main}.

\begin{proof}[Proof of \Cref{thm:main}]
    Let $\lppart^*$ denote the optimal objective value of $(\lppart)$.
    Then we have
    \begin{align}
        |\opt(\ts)| 
        &\geq \lppart^* \label{eq:df:1} \\
        &\geq \sum_{S \subseteq J} \redex(S) \cdot \frac{y_S}{15} \label{eq:df:2} \\
        &\geq \frac{1}{840\rho} |A(\ts)| - \frac{1+14\rho}{210\rho} \ , \label{eq:df:3}
    \end{align}
    where \eqref{eq:df:1} follows from \Cref{lem:relaxation},
    \eqref{eq:df:2} uses weak duality and that $(y_S/15)$ is a feasible solution to $(\dppart)$ by \Cref{lem:dual-feasibility},
    and \eqref{eq:df:3} follows from \Cref{lem:dual-obj}.
    Thus, we conclude that $|A(\ts)| \leq 840\rho \cdot |\opt(\ts)| + 4(1 + 14\rho) \leq 1068\rho \cdot |\opt(\ts)|$, and the algorithm is locally $O(\rho)=O(\nicefrac{\mu_1\mu_2}{\eps})$-competitive, which implies the theorem.
\end{proof}

\section{Randomized Lower Bound}
\label{sec:lower-bounds}

We show that our upper bound is asymptotically tight, even for randomized algorithms, and prove \Cref{thm:lower-bound}.
Our hard instance has $\mu_1 = 1$ and $\mu_2 = \mu$. Thus, throughout this section, we consider the $(\eps,1,\mu)$-online estimate model: every estimate satisfies
$\hat p_j \le p_j \le \mu\cdot \hat p_j$.
The proof extends 
 for an arbitrary $\mu_1$ and $\mu_2$ (where $\mu_1 \mu_2 = \mu$) by multiplying all the estimates by $\mu_1$.

For a schedule $S$, let $\delta_S(t,x)$ denote the number of jobs that are active in $S$ at time $t$ and have remaining processing time at least $x$.
When the schedule is clear from context, we write $\delta(t,x)$.
We write $\delta^*(t)$ for the number of active jobs in an optimal offline schedule at time $t$.
The following standard ``bombardment'' lemma converts a lower bound on the local competitive ratio into a lower bound for total flow time, see e.g., \cite{MotwaniPT94,AzarLT22}. 

\begin{lemma}[Bombardment]
\label{lem:bombardment}
Let $T$ be a fixed time and let $\mathcal D$ be a distribution over instances that releases jobs before time $T$.
Suppose that, for some $\alpha \ge 1$, every deterministic algorithm $A$ satisfies
\[
    \EX_{\mathcal D}[\delta_A(T,1)]
    \ge \alpha \cdot \bigl(\mathbb E_{\mathcal D}[\delta^*(T)] + 1\bigr).
\]
Then there is another distribution over instances on which every deterministic algorithm has expected competitive ratio at least $\Omega(\alpha)$.
\end{lemma}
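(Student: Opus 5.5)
The plan follows the standard bombardment technique of \cite{MotwaniPT94,AzarLT22}. Given $\mathcal D$ and time $T$, the new distribution $\mathcal D'$ first samples an instance from $\mathcal D$. Starting at time $T$, it then releases one job of unit size (with exact estimate $\hat p = 1$ and threshold $q=0$, so it is valid in any online estimate model) at each integer time $T, T+1, \ldots, T+L-1$. Here $L$ is a large parameter, e.g.\ $L \gg T^2$ times the total volume of the $\mathcal D$-instance; it can be taken deterministic and large relative to the maximum total processing time in the support, or handled by a limiting argument.

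\textbf{Step 1 (upper bound on OPT).} Consider the offline schedule that mimics an optimal schedule for the $\mathcal D$-part up to time $T$. From $T$ on, it processes each arriving unit job immediately, and it finishes the at most $\delta^*(T)$ leftover jobs after time $T+L$. During $[T,T+L]$ there are at most $\delta^*(T)+1$ active jobs at any time. Hence the total flow time of this schedule is at most
\[
O\bigl(T \cdot n_{\mathcal D}\bigr) + L\,(\delta^*(T)+1) + O(V^2),
\]
where $V$ is the leftover volume at $T$. Taking expectations and letting $L$ be large, OPT is at most $L(\EX[\delta^*(T)]+1)$ plus lower-order terms.

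\textbf{Step 2 (lower bound on ALG).} Fix any deterministic algorithm $A$ and, by conditioning, a realized prefix. The algorithm has exactly one unit of capacity per time step during $[T,T+L]$, and one unit of new work arrives per step. So the total remaining volume at each time $t\in[T,T+L]$ is at least the remaining volume at $T$ that $A$ has not yet processed, and it can decrease only when $A$ leaves a new unit job unprocessed, which keeps that new job active. The counting works as follows. Let $N(t)$ be the number of active jobs at $t$. Every unit of time that $A$ spends on old jobs leaves one unit job pending. As long as fewer than $\delta_A(T,1)$ old jobs have been completed, the number of active jobs equals the pending old jobs plus the pending new jobs. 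Completing an old job of remaining size $\ge 1$ costs at least one unit, which creates at least one pending new job. Therefore $N(t) \ge \delta_A(T,1)$ for all $t\in[T,T+L]$: the potential ``old active jobs plus pending new jobs'' never decreases below its value at $T$. Summing over the interval, $A$'s flow time is at least $L\cdot \delta_A(T,1)$, and taking expectations over $\mathcal D'$ gives at least $L\,\EX_{\mathcal D}[\delta_A(T,1)]$.

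\textbf{Step 3 (combine).} Divide the bounds, using the hypothesis $\EX[\delta_A(T,1)] \ge \alpha(\EX[\delta^*(T)]+1)$, and let $L\to\infty$ so that the additive lower-order terms in OPT vanish relative to $L$. This gives a ratio of expected costs of at least $\Omega(\alpha)$. By Yao's principle, this transfers to randomized algorithms; this is the ``expected competitive ratio'' in the statement, interpreted as a ratio of expectations.

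\textbf{Main obstacle.} The step needing most care is Step 2's invariant: that the number of active jobs stays at least $\delta_A(T,1)$ throughout the bombardment. This has to hold regardless of how $A$ interleaves old and new work, including partial processing of old jobs. I would prove it by the exchange/potential argument above: with one arrival and one processed unit per step, the count of pending unit jobs increases by one every step $A$ does not serve a new job. Meanwhile, an old job with remaining size at least $1$ can disappear only after $A$ spends at least one such step on it. So the number of disappearances of old jobs never exceeds the number of accumulated pending new jobs.
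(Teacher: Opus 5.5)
Your proposal is correct and follows essentially the same route as the paper: the same appended unit-job bombardment with exact estimates and $q=0$, and the same counting argument (completing $z$ old jobs costs at least $z$ units, which leaves at least $z$ pending unit jobs, so at least $\delta_A(T,1)$ jobs stay active). It also uses the same offline schedule with at most $\delta^*(T)+1$ active jobs during the bombardment, and the same limit as the bombardment length grows; your explicit bound on the $L$-independent offline terms (before $T$ and after $T+L$) only spells out what the paper states in one line.
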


\begin{proof}
Append to every realization of $\mathcal D$ the same bombardment sequence: for $M$ consecutive integer times $T,T+1,\ldots,T+M-1$, release one unit-size job with $q_j=0$ and exact estimate $\hat p_j=p_j=1$.
These additional jobs are feasible even for distortion $1$.

Fix one such augmented realization.
During the bombardment interval, consider only the $\delta_A(T,1)$ jobs released before time $T$ and the bombardment unit jobs released after time $T$.
If by time $T+s$ the algorithm has completed $z$ of the old jobs, then it has spent at least $z$ units of processing on old jobs.
Among the $s$ bombardment unit jobs released so far, it can therefore have completed at most $s-z$ of them.
Hence at least $\delta_A(T,1)-z$ old jobs and at least $z$ unit jobs are still active, so the algorithm has at least $\delta_A(T,1)$ active jobs throughout the bombardment interval.
Their contribution to the objective during this interval is at least $M\cdot \delta_A(T,1)$.

The offline schedule follows an optimal schedule for the prefix until time $T$, and then processes each new unit job immediately when it is released.
During the bombardment interval it has at most $\delta^*(T)+1$ active jobs.
For the offline schedule, the contribution before time $T$ and after time $T+M$ is independent of $M$; for the algorithm, we only use the contribution during the bombardment interval.
Taking expectations and then letting $M\to\infty$ gives the claim.
\end{proof}

\begin{theorem}
\label{thm:randomized-lower-bound}
For every $\eps\in(0,1]$ and every $\mu\ge 1$, every randomized algorithm for the $(\eps,1,\mu)$-online estimate model has a competitive ratio of at least $\Omega(\nicefrac{\mu}{\eps})$ for minimizing total flow time on a single machine.
\end{theorem}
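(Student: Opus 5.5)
The plan is to combine Yao's minimax principle with the Bombardment \Cref{lem:bombardment}. It suffices to construct a distribution $\mathcal D$ over instances of the $(\eps,1,\mu)$-online estimate model, all released before a fixed time $T$, such that every deterministic algorithm satisfies $\EX_{\mathcal D}[\delta_A(T,1)] \ge \alpha\,(\EX_{\mathcal D}[\delta^*(T)]+1)$ with $\alpha=\Omega(\mu/\eps)$. The bombardment step then upgrades this local gap to a gap in total flow time. For the general case $\mu_1\mu_2=\mu$ we rescale all estimates by $\mu_1$, as remarked before the theorem.

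\textbf{Instance.} I would release $N$ jobs at time $0$, all looking identical to the algorithm. Each job is independently (or via a uniformly random subset of fixed size) either \emph{short}, with $p_j=L$, or \emph{long}, with $p_j=\mu L$. Both types receive the same estimate $\hat p_j=L$, which is feasible because $\hat p_j\le p_j\le \mu\hat p_j$. Both types also receive the latest allowed threshold $q_j=\lfloor(1-\eps)p_j\rfloor$.

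This choice makes the estimate useless as a size signal: it is identical for both types. The only information it carries is its timing. A short job reveals itself only once it has received about $(1-\eps)L$ units of processing. After that, the algorithm knows it needs only about $\eps L$ more. A long job that has received $(1-\eps)L$ units without a signal is certified long, but only after that work has been spent.

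\textbf{Choosing the parameters.} I would pick the number of short jobs $s$ and set $T \approx sL$, so that the optimum can run exactly the short jobs to completion by time $T$. This gives $\delta^*(T)\le N-s$. The parameters should satisfy $N-s = O(\eps N/\mu)$, for instance by taking the number of long jobs to be about $\eps s/\mu$.

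The counting argument for the algorithm is as follows. To complete a job, the algorithm must spend at least $(1-\eps)L$ units on it before learning anything. Because of the random labels, each probe lands on a long job with probability roughly $\eps/\mu$. A probe that lands on a long job is only cheaply recognizable if the algorithm stops at elapsed about $L$. The delicate point is that an algorithm could instead probe up to $(1-\eps)L$ and abandon the job, leaving jobs with large remaining time. These abandoned jobs still count in $\delta_A(T,1)$.

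So I would argue in expectation that by time $T$ the algorithm has completed at most about $T/L$ jobs minus a wasted fraction, while every touched-but-unfinished job remains active. The goal is to conclude $\EX[\delta_A(T,1)] = \Omega(N-s)\cdot\Omega(\mu/\eps)$.

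\textbf{Main obstacle.} The hardest step is making the wasted-work accounting yield the full product $\mu/\eps$ rather than only $\max\{\mu,1/\eps\}$. To handle this, I would first try a multi-phase (geometric) version of the construction. In phase $i$, a fresh batch of jobs of scale $L_i=\mu^i L$ is released. An algorithm that finishes the previous scale's jobs early (at cost $\approx \eps$ fraction per job) pays the $1/\eps$ factor. An algorithm that waits pays the $\mu$ factor because of the distortion. A potential argument over the phases should then multiply the two losses. If the single-phase analysis already suffices, the multi-phase version is unnecessary. Otherwise, this standard nested-scale trick from the $\Omega(\mu)$ lower bound of \cite{AzarLT22}, combined with the $\Omega(1/\eps)$ lower bound from the $\eps$-clairvoyant model of \cite{GuptaKLSY25}, is where I expect most of the work to be.
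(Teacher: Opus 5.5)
\textbf{Gap: the two-size instance gives only a constant ratio.} Your single-phase instance cannot give a superconstant bound, so the proposal does not yet prove the theorem.

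With your thresholds, a short job signals at elapsed $\lfloor(1-\eps)L\rfloor$ and a long job does not. So the mere absence of a signal certifies ``long''. Consider the deterministic algorithm that processes untouched jobs one at a time and abandons a job as soon as it reaches elapsed $\lfloor(1-\eps)L\rfloor$ without a signal (finishing it otherwise). Every probe costs at most $L$, whatever the type. So by time $T=sL$ it has finished at least $s$ probes, at most $N-s$ of which hit long jobs. Hence it has completed at least $s-(N-s)$ jobs, and $\delta_A(T,1)\le 2(N-s)\le 2\delta^*(T)$, because no schedule completes more than $s$ jobs by time $sL$.

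Other thresholds fail the same way: run a job for $L$ units and let completion reveal its type. The underlying reason is that the work wasted on a long job is at most one short job's length, while the number of long jobs is already a lower bound on OPT's queue. So your target $\EX[\delta_A(T,1)]=\Omega(N-s)\cdot\Omega(\mu/\eps)$ is false for this distribution. The multi-phase fallback does not repair this: you give no instance, no potential function, and no reason why the $\mu$ and $1/\eps$ losses would multiply.

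\textbf{The missing idea.} The product $\mu/\eps$ should appear as the length of a size range over which the algorithm is effectively nonclairvoyant. The two-point distribution must be replaced by a memoryless one, so that the Motwani--Phillips--Torng $\Omega(\log n)$ argument applies with $n=2^{\Theta(\mu/\eps)}$.

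The paper sets $P=\mu/\eps$ and releases $n=2^{P/8}$ jobs at time $0$ with $P_j=Y_j+1$, where $Y_j$ is geometric with success probability $1/2$, conditioned on $\max_j P_j\le P$. It sets $\hat p_j=\min\{P_j,1/\eps\}$ and $q_j=\min\{P_j-1,1/\eps-1\}$. Integrality is what makes this feasible:
\begin{itemize}
\item a job with $P_j\le 1/\eps$ may be signalled with one unit left, since $1\ge\eps P_j$;
\item all jobs with $P_j\in(1/\eps,P]$ receive the same estimate $1/\eps$, which is within distortion $\mu$.
\end{itemize}
So a signal either arrives one step before completion or carries no information, and each unit of processing is a fair coin flip.

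Take $L=n^{3/4}$ and $T=3(n-L)$. To complete $n-L/2$ jobs the algorithm would need that many heads in $2n-5L/2$ flips; Hoeffding makes this unlikely, giving $\EX[\delta_A(T,1)]\ge L/8$. Chebyshev applied to the total work and to the number of jobs larger than $b=\frac14\log_2 n$ shows that SPT has at most $1+4L/b$ jobs left. This is a local gap of $\Omega(b)=\Omega(\log n)=\Omega(\mu/\eps)$, and \Cref{lem:bombardment} together with Yao's principle finishes the proof.
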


\paragraph{Intuition for the Lower Bound.} 
To prove the $\Omega(\nicefrac{\mu}{\eps})$ lower bound against any randomized online algorithm, we use Yao's Minimax Principle and design a randomized probability distribution of input jobs. The adversary's strategy relies on two main concepts: 

\begin{enumerate}
    \item \textbf{Information Starvation via ``Coin Flips'':} We release a large batch of jobs at time $0$, whose true processing times are drawn from a Geometric distribution. Because of the online estimate model, the algorithm cannot see a job's true size until it is almost finished. Due to the memoryless nature of the Geometric distribution, all unfinished jobs look completely identical to the algorithm. Working on a job is equivalent to flipping a fair coin: with probability $\nf12$ the job requires more work, and with probability $\nf12$ it reveals its estimate and finishes in exactly one step. 
    
  \item \textbf{Queue Size Discrepancy at a Fixed Time:} We freeze time at a specific moment $T$. Because the online algorithm is ``blindly flipping coins,'' statistical concentration bounds (Hoeffding's inequality) show it is extremely unlikely to find enough ``Heads'' to clear its queue. Meanwhile, the optimal offline schedule (OPT) knows all job sizes at $t=0$ and aggressively clears the shortest jobs first. By time $T$, OPT has finished almost everything, leaving only a tiny handful of massive jobs (Chebyshev's inequality). This creates a massive gap in active queue sizes at time $T$ of size $\Omega(\log n)$. Because the adversary maliciously set the total number of jobs to be exponentially large ($n \approx 2^{\nicefrac{\mu}{\eps}}$), this $\Omega(\log n)$ gap is mathematically equivalent to the desired $\Omega(\nicefrac{\mu}{\eps})$ gap. The Bombardment Lemma (\Cref{lem:bombardment}) then seamlessly converts this local queue gap into a total flow time lower bound.
\end{enumerate}

\begin{proof}[Proof of \Cref{thm:lower-bound}]
Let $P := \nicefrac{\mu}{\eps}$. If $P < 136$, the claim is trivial after adjusting constants, so assume $P \ge 136$. To avoid floor and ceiling notation, assume $\nicefrac{1}{\eps}$ and $\nf{P}{8}$ are integers; standard rounding changes only constants. 
Let $n := 2^{P/8}$ and $L := n^{3/4}$. 

\paragraph{Step 1: The Randomized Instance.}
We define a distribution $\mathcal{D}$ over $n$ jobs, all released at time $0$. We will analyze the expected local competitive ratio at time $T := 3(n-L)$. 

To define $\mathcal{D}$, we first define an auxiliary distribution $\mathcal{D}_0$. For each job $j \in [n]$, independently sample a geometric random variable $Y_j$ with probability $\prob[Y_j = y] = 2^{-y}$, and set the true processing time to $P_j := Y_j + 1$. Thus, $\EX[P_j] = 3$ and $\VAR(P_j) = 2$.

Let $\mathcal{E}$ be the event that the maximum job size is bounded by $P$, i.e., $\mathcal{E} := \{ \max_j P_j \le P \}$. We define our final distribution $\mathcal{D}$ as $\mathcal{D}_0$ conditioned on $\mathcal{E}$. A union bound shows this complementary event is extremely unlikely:
\begin{equation}
\label{eq:conditioning-event}
    \prob_{\mathcal{D}_0}[\overline{\mathcal{E}}] \le n \cdot \prob[P_j > P] \le n \cdot 2^{-(P-1)} \le 2^{1-7P/8}.
\end{equation}

For every realized job $j$ in the sampled instance, we set the online estimate threshold $q_j$ and the estimate $\hat{p}_j$ as follows:
\[
    q_j := \min\{P_j-1, \nicefrac{1}{\eps}-1\} \qquad \text{and} \qquad \hat{p}_j := \min\{P_j, \nicefrac{1}{\eps}\}.
\]

\emph{Feasibility Check:} We verify this instance is strictly feasible for the $(\eps, 1, \mu)$ model:
\begin{itemize}
    \item If $P_j \le \nicefrac{1}{\eps}$, then $q_j = P_j - 1 \le P_j - \eps(\nicefrac{1}{\eps}) \le (1-\eps)P_j$. 
    \item If $P_j > \nicefrac{1}{\eps}$, then $P_j \ge \nicefrac{1}{\eps} + 1$, and $q_j = \nicefrac{1}{\eps} - 1 = (1-\eps) \cdot \nicefrac{1}{\eps} < (1-\eps)P_j$. 
    \item Under event $\mathcal{E}$, jobs with $P_j > \nicefrac{1}{\eps}$ satisfy $\hat{p}_j = \nicefrac{1}{\eps} \le P_j \le P \le \mu(\nicefrac{1}{\eps}) = \mu \hat{p}_j$. Jobs with $P_j \le \nicefrac{1}{\eps}$ are predicted exactly. Thus, all estimates are feasible.
\end{itemize}

\paragraph{Step 2: Lower Bounding the Algorithm's Queue.}
Because the job sizes $P_j$ are based on a geometric distribution $Y_j$, we can represent $Y_j$ as the index of the first ``Heads'' in a sequence of independent fair coin flips. For the algorithm, processing the first $Y_j$ units of job $j$ is equivalent to exposing coin flips until a Heads appears. Once a Heads appears, the job requires exactly one final deterministic unit of work to complete. (We may  reveal this Heads to the algorithm immediately when it occurs, as extra information only helps the algorithm).

Let $m := n - \nf{L}{2}$. For algorithm $A$ to complete at least $m$ jobs by time $T$, it must encounter $m$ Heads. Since each completed job requires $1$ deterministic final unit of work, these $m$ final units consume $m$ steps of time. This leaves at most $s := T - m = 2n - \nf{5L}{2}$ steps available for exposing coin flips. 

Algorithm $A$ therefore needs to find at least $m = n - \nf{L}{2}$ Heads in $s$ flips. For a binomial variable $H \sim \mathrm{Bin}(s, \nf12)$, the expected number of Heads is $\EX[H] = \nf{s}{2} = n - \nf{5L}{4}$. Requiring $m$ Heads represents an  upward deviation of $\nf{3L}{4}$. By Hoeffding's inequality (using $s \le 2n$):
\begin{equation}
\label{eq:chernoff-algorithm}
    \prob_{\mathcal{D}_0}[A \text{ completes } \ge m \text{ jobs by time } T] \le \exp\left(-\frac{2(3L/4)^2}{s}\right) \le \exp\left(-\frac{9}{16} n^{1/2} \right).
\end{equation}

Consequently, for all sufficiently large $P$, the expected number of active jobs remaining for $A$ under $\mathcal{D}_0$ is:
\[
    \EX_{\mathcal{D}_0}[\delta_A(T,1)] \ge \frac{L}{2}\left(1-\exp\left(-\frac{9}{16} n^{1/2} \right)\right) \ge \frac{L}{4}.
\]
Conditioning on $\mathcal{E}$ loses at most $n \cdot \prob_{\mathcal{D}_0}[\overline{\mathcal{E}}]$ in expectation. Since this loss is $o(L)$ by \Cref{eq:conditioning-event}, we obtain:
\begin{equation}
\label{eq:algorithm-local-lb}
    \EX_{\mathcal{D}}[\delta_A(T,1)] \ge \frac{L}{8}.
\end{equation}

\paragraph{Step 3: Upper Bounding OPT's Queue.}
Unlike the algorithm, OPT knows all true job sizes at $t=0$ and schedules jobs using Shortest Processing Time (SPT). 

To analyze OPT's queue, we rely on Chebyshev's inequality, which, as we recall, states that for any random variable $X$ with expected value $\EX[X]$ and variance $\VAR(X)$, the probability that $X$ deviates from its mean by at least $k$ is bounded by:
\[
    \prob\big[|X - \EX[X]| \ge k\big] \le \frac{\VAR(X)}{k^2}.
\]

\textbf{Step 3.1: Bounding the Total Work.} 
Let $W := \sum_{j=1}^n P_j$ be the total work in the system. Because the job sizes $P_j$ are independent with $\EX[P_j] = 3$ and $\VAR(P_j) = 2$, linearity of expectation and variance gives $\EX[W] = 3n$ and $\VAR(W) = 2n$. 

We want to bound the probability that the total work is excessively large. Applying Chebyshev's inequality with $k = L$:
\begin{equation}
\label{eq:total-work-bound}
    \prob_{\mathcal{D}_0}[W > 3n+L] \le \prob_{\mathcal{D}_0}\big[|W - 3n| \ge L\big] \le \frac{\VAR(W)}{L^2} = \frac{2n}{L^2} = 2n^{-1/2}.
\end{equation}

\textbf{Step 3.2: Bounding the Number of Massive Jobs.} 
Let $b := \frac{1}{4}\log_2 n$, and let $B$ be the number of ``massive'' jobs with size $P_j > b$. We can write $B$ as the sum of $n$ independent indicator variables, $B = \sum_{j=1}^n I_j$, where $I_j = 1$ if $P_j > b$ and $0$ otherwise. 

First, we calculate the probability that a single job is massive:
\[
    \prob[P_j > b] = \prob[Y_j \ge b] = 2^{-(b-1)} = 2 \cdot 2^{-\frac{1}{4}\log_2 n} = 2n^{-1/4}.
\]
Using this, we find the expected number of massive jobs (recall $L = n^{3/4}$):
\[
    \EX[B] = n \cdot \prob[P_j > b] = n \left(2n^{-1/4}\right) = 2n^{3/4} = 2L.
\]
Because the jobs are independent, the variance of $B$ is the sum of the variances of the indicator variables. For any Bernoulli variable, the variance $p(1-p)$ is strictly bounded by $p$. Thus:
\[
    \VAR(B) = n \cdot p(1-p) \le np = \EX[B] = 2L.
\]
We want to bound the probability of the ``bad'' event where we get too few massive jobs, specifically $\prob[B < L]$. Notice that if $B < L$, its distance from the mean ($2L$) is strictly greater than $L$. Applying Chebyshev's inequality with $k = L$ yields our second bound:
\begin{equation}
\label{eq:many-large-jobs}
    \prob_{\mathcal{D}_0}[B < L] \le \prob_{\mathcal{D}_0}\big[|B - 2L| \ge L\big] \le \frac{\VAR(B)}{L^2} \le \frac{2L}{L^2} = \frac{2}{L}.
\end{equation}

\textbf{Step 3.3: Bounding OPT's Active Jobs at Time $T$.} 
Assume the highly probable event that $W \le 3n+L$ and $B \ge L$ holds. At time $T$, the remaining volume of work for OPT is at most $W - T \le (3n+L) - 3(n-L) = 4L$. 

Because OPT prioritizes the shortest jobs, it is impossible for OPT to have any job of size $\le b$ still active at time $T$. If it did, it would mean all $B \ge L$ massive jobs (size $> b$) must also still be in the queue, which would require a remaining work volume strictly greater than $L \cdot b$. Since $P \ge 136$, we have $n = 2^{P/8} \ge 2^{17}$, which means $b = \frac{1}{4}\log_2 n \ge 4.25 > 4$. Hence, $Lb > 4L$, which gives a contradiction.

Therefore, every active job in OPT's queue at time $T$ (except possibly the single job currently executing) must be a massive job of size $> b$. Since the total remaining work is at most $4L$, the \emph{number} of such active jobs is strictly bounded:
\[
    \delta^*(T) \le 1 + \frac{4L}{b}.
\]
Factoring in the trivial bound ($\delta^*(T) \le n$) on the complementary bad events using \Cref{eq:total-work-bound} and \Cref{eq:many-large-jobs}, we obtain for sufficiently large $P$:
\begin{equation}
\label{eq:opt-local-d0}
    \EX_{\mathcal{D}_0}[\delta^*(T)] \le 1 + \frac{4L}{b} + n\left(\frac{2}{\sqrt{n}} + \frac{2}{L}\right) \le \frac{6L}{b}.
\end{equation}
Finally, \Cref{eq:conditioning-event} implies $\prob_{\mathcal{D}_0}[\mathcal{E}] \ge \nf12$ for sufficiently large $P$, and hence:
\begin{equation}
\label{eq:opt-local-d}
    \EX_{\mathcal{D}}[\delta^*(T)] \le 2\EX_{\mathcal{D}_0}[\delta^*(T)] \le \frac{12L}{b}.
\end{equation}

\paragraph{Step 4: Conclusion.}
Combining \Cref{eq:algorithm-local-lb} and \Cref{eq:opt-local-d} to compare the local queue sizes at time $T$, we get:
\[
    \frac{\EX_{\mathcal{D}}[\delta_A(T,1)]}{\EX_{\mathcal{D}}[\delta^*(T)]+1} 
    \ge \Omega(b) 
    = \Omega(\log n) 
    = \Omega(P) 
    = \Omega\left(\frac{\mu}{\eps}\right).
\]

By the Bombardment Lemma (\Cref{lem:bombardment}), appending a unit-job bombardment sequence to this distribution converts this expected local queue size ratio into an expected total flow time ratio of $\Omega(\nicefrac{\mu}{\eps})$ for any deterministic algorithm. Yao's principle immediately extends this lower bound to every randomized algorithm.
\end{proof}

\section*{Acknowledgments}
    This research was partially supported by Dr. Max
Rössler, the Walter Haefner Foundation, and the ETH Zürich Foundation. Parts of these results were established 
    when SY was visiting TU Berlin in March 2026. This visit
    was supported by the Deutsche Forschungsgemeinschaft (DFG, German Research Foundation) under Germany’s Excellence Strategy -- The Berlin Mathematics Research Center MATH$^+$ (EXC-2046/2). 
    Part of the work was done while JS was affiliated with Centrum Wiskunde \& Informatica (CWI), and supported by the Netherlands Organisation for Scientific Research (NWO) through project OCENW.GROOT.2019.015 ``Optimization for and with Machine Learning (OPTIMAL)''.
    Research of HK was partially supported by the Israel Science Foundation (ISF) grant no.\ 1156/23 and the Blavatnik research Foundation.

\printbibliography

\newpage
\appendix

\section{Reducing $\eps$-Signaling to Online Estimates}
\label{app:reduction}

We briefly argue that the $\eps$-signaling model is a special case of the online estimate model. 

\begin{lemma}
    For every $\eps \in (0,\nf12]$,
    every $f(\eps,\mu_1,\mu_2)$-competitive algorithm for minimizing total flow in the $(\eps,\mu_1,\mu_2)$-online estimate model is $f(\eps, 1, \nf{1}{\eps})$-competitive for minimizing total flow in the $\eps$-signaling model.
\end{lemma}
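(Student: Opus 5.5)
The plan is a simulation argument. Given an algorithm $\mathcal{A}$ for the $(\eps,\mu_1,\mu_2)$-online estimate model, we build an algorithm $\mathcal{B}$ for the $\eps$-signaling model. $\mathcal{B}$ runs $\mathcal{A}$ and feeds it synthetic estimates. Whenever $\mathcal{B}$ receives the signal for job $j$, it has elapsed processing $e_j$, which it can measure itself, at that integer time $t$. At that moment $\mathcal{B}$ hands $\mathcal{A}$ the estimate $\hat p_j := e_j(t)$, or $\max\{1,e_j(t)\}$ to keep estimates positive integers. $\mathcal{B}$ then follows $\mathcal{A}$'s schedule verbatim. The schedules coincide, so the flow times coincide. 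It then suffices to exhibit an online-estimate instance with parameters $(\eps,1,1/\eps)$ that is consistent with what $\mathcal{A}$ sees and has the same optimum.

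\textbf{Step 1: validity of the synthetic instance.} For each job we set $q_j := e_j(s_j)$, the elapsed time at the signal, and $\hat p_j$ as above. By the signaling guarantee, $\eps p_j \le q_j \le (1-\eps)p_j$. So the threshold condition $q_j \le (1-\eps)p_j$ holds, meaning at least an $\eps$-fraction remains. We also get $\hat p_j \le (1-\eps)p_j \le p_j = \mu_1 p_j$ with $\mu_1=1$, and $\hat p_j \ge \eps p_j = p_j/\mu_2$ with $\mu_2 = 1/\eps$. The rounding caveat is that if $\eps p_j<1$, the signal may come at $e_j=0$, so $\hat p_j=\max\{1,0\}=1$. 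Then we need $1\le (1-\eps)p_j$. This holds since $p_j\ge 2$ is forced whenever a signal strictly between an $\eps$ and a $(1-\eps)$ fraction exists with $\eps\le 1/2$. For $p_j=1$ we can argue the interval $[\eps,1-\eps]$ of processed amounts only allows time $0$, and we handle this by any consistent choice, e.g.\ $\hat p_j=1=p_j$. The assumption $\eps\le 1/2$ is exactly what makes $[\eps p_j,(1-\eps)p_j]$ nonempty.

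\textbf{Step 2: adaptivity and equivalence.} The estimates depend on $\mathcal{B}$'s own schedule, so the instance is adaptively defined. We handle this by fixing the signaling instance and the deterministic run of $\mathcal{B}$, or each random seed for a randomized $\mathcal{A}$. We then define the online-estimate instance $I'$ post hoc with $r_j,p_j$ unchanged and $q_j,\hat p_j$ as above. On $I'$, $\mathcal{A}$ receives exactly the same information at the same times, by induction over time steps, so it produces the same schedule. Since $r_j,p_j$ are identical, $\mathrm{OPT}(I')=\mathrm{OPT}(I)$. The guarantee of $\mathcal{A}$ on $I'$ then gives a cost of at most $f(\eps,1,1/\eps)\cdot\mathrm{OPT}(I)$.

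\textbf{Main obstacle.} The subtle point is justifying that the estimate $\hat p_j$ in $I'$ is revealed exactly when $\mathcal{A}$ has worked $q_j$ units. This is immediate from the choice $q_j=e_j(s_j)$, provided the signal is delivered only during processing (the paper's w.l.o.g.\ footnote). It also relies on the integer-rounding edge cases in Step 1 being consistent.
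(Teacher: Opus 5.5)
Your proposal is correct and follows the paper's proof: feed the algorithm the elapsed time at the signal, $\hat p_j = e_j(s_j)$, as its estimate (so the threshold is $q_j = e_j(s_j)$), and check $q_j \le (1-\eps)p_j$, $\hat p_j \le p_j = \mu_1 p_j$ and $\hat p_j \ge \eps p_j = p_j/\mu_2$. The simulation and post-hoc instance argument, together with the signal-during-processing assumption, spell out what the paper leaves implicit; the rounding edge case is vacuous, because the signaling guarantee gives $e_j(s_j) \ge \eps p_j > 0$ and hence $e_j(s_j) \ge 1$, and even with $\hat p_j = 1$ the upper-bound condition you need is only $1 \le \mu_1 p_j = p_j$, not $1 \le (1-\eps)p_j$.
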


A consequence of this lemma is that \Cref{thm:main}
implies \Cref{thm:signal}.

\begin{proof}
The main idea is that the ``estimate'' in the $\eps$-signaling model is simply the elapsed time of the job when the signal arrives. Hence, an algorithm for online estimates can be used for the $\eps$-signaling model by using $\hat p_j = e_j(s_j)$ for each job $j$, where $s_j$ is the time when job $j$ gives the signal.

We verify that this is indeed a feasible instance for $(\eps,\mu_1,\mu_2)$-online estimates, where $\mu_1 = 1$ and $\mu_2 = \nf{1}{\eps}$.
First, the estimate does not arrive too late, that is, $e_j(s_j) \leq (1-\eps) p_j$, which is just an assumption of the $\eps$-signaling model.
Second, the distortion is bounded: we clearly have $\hat p_j = e_j(s_j) \leq  p_j = \mu_1 p_j$, 
and further it holds that  $\hat p_j = e_j(s_j) \geq \eps p_j = \frac{1}{\mu_2} p_j$, where we use that the signal does not arrive too early.
\end{proof}

\section{Bounding the Number of Preemptions of Balanced MLF}
\label{app:preemptions}

\begin{lemma}\label{lem:balanced-mlf-preemptions}
    The total number of preemptions of Balanced MLF is in
    $O(\sum_j (1+\log_2 p_j))$.
\end{lemma}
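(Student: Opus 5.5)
We use a charging argument: every preemption is assigned to a job, and we show that each job $j$ receives only $O(1+\log_2 p_j)$ charges. A preemption happens at time $t$ when the job $j(t-1)$ processed during $[t-1,t]$ is still active at $t$ but $j(t)\neq j(t-1)$. Looking at \Cref{alg:balance,alg:process}, this can happen for only two reasons.

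\emph{(a) Class increase.} The current class of $j(t-1)$ increased at time $t$. The processing phase then moved $j(t-1)$ to the fresh set, so it is no longer ongoing.

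\emph{(b) Displacement.} $j(t-1)$ is still ongoing at $t$, but the balancing phase at time $t$ made a fresh job $i$ ongoing with $k_i(t)<k_{j(t-1)}(t)$, and $i$ is now processed.

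Case (b) needs one check. Suppose no balancing move happened at $t$ and $j(t-1)$ was not refreshed. Then the ongoing set at $t$ is the ongoing set after the processing phase at $t-1$. New jobs enter only as fresh jobs. Classes of non-processed jobs do not change, because $k_j$ depends only on $e_j$ and on $s_j$, and by the footnote estimates arrive only while a job is processed. So $j(t-1)$ is still the ongoing job of minimum class and is processed again. Hence, in case (b), the newly ongoing job $i$ is the one processed at $t$.

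\textbf{Charging.} We charge a type-(a) preemption to $j(t-1)$ and a type-(b) preemption to the job $i$ that was just made ongoing.

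\textbf{Type (a) charges.} Every type-(a) preemption is a strict increase of $k_j$, and $k_j$ is monotone nondecreasing. Unknown jobs satisfy $k_j\le\lceil\log_2(e_j+1)\rceil\le 1+\log_2 p_j$. After the signal, $k_j=1+\lfloor\log_2\hat p_j\rfloor$, which is one final jump, and after it the class never changes again. So there are at most $O(1+\log_2 p_j)$ increases.

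One possible worry is that $\hat p_j$ could be large relative to $p_j$ (up to $\mu_1p_j$). This does not matter: the signal produces only a single jump, which we count once regardless of how many class levels it skips.

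\textbf{Type (b) charges.} Each type-(b) charge to $i$ coincides with $i$ switching from fresh to ongoing. A job is fresh on arrival and becomes fresh again only when its class increases. So the number of fresh-to-ongoing switches of $i$ is at most $1$ plus the number of class increases of $i$, which is again $O(1+\log_2 p_i)$.

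\textbf{Completions.} When $j(t-1)$ completes at $t$, switching to another job is not a preemption. Even if one counts it, it costs at most one per job.

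Summing these bounds over all jobs gives $O(\sum_j(1+\log_2 p_j))$.

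The main obstacle is making sure the case analysis is exhaustive, i.e., that nothing other than (a) or (b) can make the processed job change while $j(t-1)$ stays active. That is exactly the check done above: an ongoing job is never demoted to fresh except through a class increase, and the balancing phase adds at most one ongoing job per time step.
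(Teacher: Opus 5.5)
Your proof is correct. Your treatment of class-increase preemptions (type (a)) matches the paper's first case exactly. Where you diverge is in bounding the displacement preemptions (type (b)).

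The paper uses the $\frac14$ balance condition to show that a displacement can only happen at a time when some job is released. Suppose no job arrives at time $t$. Then the fresh job of smaller class that takes over at $t$ was already fresh, with smaller class, at $t-1$, and was not moved. So the condition $|\afull(t-1)| \ge \frac14 |A(t-1)|$ must have failed. Without arrivals the fresh count cannot grow and the active count cannot shrink, so the condition still fails at $t$. This caps type-(b) preemptions by $n$, charged to distinct releases.

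You instead charge each displacement to the fresh-to-ongoing transition of the displacing job $i$. This charging is injective because the balancing phase moves at most one job per step. You then bound the number of such transitions of $i$ by $1$ plus its number of class increases.

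Your route never uses the threshold in the balancing rule. It works for any rule that promotes at most one fresh job per step and refreshes jobs only on class increases, so it is more elementary and more robust. The paper's route gives sharper structural information: displacements happen only at release times and number at most $n$, so the $\log_2 p_j$ terms come solely from class increases. Both arguments yield the same $O(\sum_j (1+\log_2 p_j))$ bound.
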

\begin{proof}
We count a preemption whenever a job $j$ is processed during $[t,t+1]$,
is unfinished at time $t+1$, and is not processed during $[t+1,t+2]$.
We distinguish two cases.

First, suppose that the current class of $j$ increases at time $t+1$.
Then Balanced MLF makes $j$ fresh again, and we charge the preemption to
this class increase.  For each job $j$, the current class can increase
at most once when the estimate $\hat p_j$ arrives and at most
$O(1+\log_2 p_j)$ times due to processing.  Thus, the total number of
preemptions of this type is $O(\sum_j (1+\log_2 p_j))$.

It remains to consider a preemption of a job $j$ whose current class does
not increase at time $t+1$. Then $j$ can stop being processed only
because, at time $t+1$, some fresh job $j'$ of strictly smaller current
class becomes ongoing. We claim that this can happen only at a time when
a job is released.

Suppose for the sake of contradiction that no job is released at time
$t+1$. Since estimates either arrive at the release time of a job or
after the job has been processed, the only estimate of an already
released job that can arrive at time $t+1$ is the estimate of the job
processed during $[t,t+1]$, namely $j$. If the arrival of this estimate
increased the current class of $j$, then the preemption was already
counted in the first case. Hence, in the present case, no estimate
arrival can create a new fresh job of smaller current class.

Moreover, no job other than $j$ is processed during $[t,t+1]$. Thus no
other already released job changes its current class due to processing
during this time step, and no other already released job can receive an
estimate at time $t+1$. Since $j$ is unfinished and its current class
does not increase, $j$ remains ongoing at time $t+1$. Therefore every
fresh job of smaller current class than $j$ that is active at time
$t+1$ was already fresh and of smaller current class than $j$ at time
$t$.

As Balanced MLF processed $j$ during $[t,t+1]$, such a fresh job was not
made ongoing at time $t$. Hence the balance condition must have failed
at time $t$, that is, $|\afull(t)| < \frac14 |A(t)|$.

In the absence of a job release at time $t+1$, the number of fresh jobs
cannot increase between the scheduling decisions at times $t$ and
$t+1$. Also, since the processed job $j$ is unfinished, the number of
active jobs cannot decrease. Hence the balance condition cannot become
true at time $t+1$. This contradicts the assumption that a fresh job
$j'$ of smaller current class becomes ongoing at time $t+1$.

Thus, every preemption of the second type occurs at a time at which a job
is released. We charge such a preemption to a job released at the same
time. Since there is at most one processed job before any time step,
there is at most one preemption at any time. Breaking ties between
simultaneous releases arbitrarily, these preemptions can therefore be
charged to distinct job releases. Hence the number of preemptions of
the second type is at most $n$.

Combining the two bounds gives
\[
    O(n) + O\Bigl(\sum_j (1+\log_2 p_j)\Bigr)
    =
    O\Bigl(\sum_j (1+\log_2 p_j)\Bigr),
\]
as claimed.
\end{proof}

Note that the lemma only bounds the total number of preemptions: a single
job may be preempted up to $n-1$ times.

\section{Omitted Proofs}\label{sec:proofs}

\stackmonotone*
\begin{proof}
  By the rule of the algorithm, we only make a job ongoing if it has a
  smaller current class than the current class of any ongoing job. Whenever the current class of an ongoing job changes, it
  becomes fresh. Hence, at any time, the current classes of ongoing
  jobs are pairwise distinct.
\end{proof}

\nongreedy*
\begin{proof} 
  Let $k := k_j(t).$ By \Cref{lem:stack-monotone} and the processing
  rule of the algorithm, every active job of class at most $k$ other
  than $j$ must be fresh.  Since the algorithm processes $j$ during
  $[t,t+1]$,
  by \Cref{obs:1}
  it did not make $j_1$ ongoing at time $t$ because
  \begin{equation}	
    |\afull(t)| < \frac14 |A(t)|.          \label{eq:balance-condition-wrong}
  \end{equation}
  In particular, $j$ cannot be the single ongoing job at time $t$: if
  it were, then $|A(t)| = |\afull(t)|+1$, and the existence of $j_1$
  would imply $|\afull(t)|\geq |A(t)|/4$, contradicting
  \eqref{eq:balance-condition-wrong}.  Let $j' \in \aon(t)$ be the
  ongoing job of the next largest current class at time $t$ after $k$,
  and let $t'$ be the last time before $t$ at which $j'$ became
  ongoing.

  Suppose for the sake of contradiction that $i \in A(t)$ is a
  job other than $j$ and $j_1$, with $k_i(t) \le k_j(t)$. When $j'$
  became ongoing at time $t'$, it was a smallest-class active job.
  The class of $j'$ has not changed after time $t'$, since it had a
  larger current class than $j$ when $j$ became ongoing, and after
  that has not been processed anymore.  Therefore, if any of $j,j_1,i$
  had already been active at time $t'$, monotonicity of current
  classes would imply that it had class smaller than the class of $j'$
  at time $t'$, contradicting the choice of $j'$. Hence $j,j_1,i$ were
  all released after $t'$.

Let $J'$ be the jobs active at time $t$ that were released after $t'$.
Every job in $J'\setminus\{j\}$ is fresh at time $t$: if such a job became ongoing after $t'$, it would have a smaller current class than $j'$; since $j'$ 
is an ongoing job with the next largest current class after $j$ at time $t$, and $j$ is processed at time $t$, $j'$ and $j$ are the two ongoing jobs of smallest current classes.

Immediately before $j'$ became ongoing at time $t'$, the balance condition
$|\afull|\ge |A|/4$ held. From that moment to time $t$, the jobs in
$J'\setminus\{j\}$ contribute $|J'| - 1$ new fresh jobs and $|J'|$ new
active jobs, while making $j'$ ongoing contributes $-1$ to the number of
fresh jobs and $0$ to the number of active jobs. Thus the left-hand side
of the balance inequality changes by $|J'| - 2$, and the right-hand side
changes by $|J'|/4$. Since $j,j_1,i \in J'$, we have $|J'|\ge 3$, and
therefore
\[
    |J'| - 2 \ge \frac14 |J'|.
\]
So the balance condition would still hold at time $t$, contradicting
\eqref{eq:balance-condition-wrong}.
\end{proof}

\manyfresh*
\begin{proof}
  The proof is by induction. At time $t=0$, 
  we have $|A(0)| = 0$ and
  $|\afull(0)| = 0$, so the claim holds trivially. Now, at some time
  $t$, a job release increases both $|\afull(t)|$ and $|A(t)|$ by one,
  so it preserves the inequality. 
  Making a fresh job ongoing only
  happens if $|\afull(t)|\ge \frac14 |A(t)|$. After that $|\afull(t)|$
  decreases by one and $|A(t)|$ is unchanged, so
  $|\afull(t)|\ge \frac14|A(t)| - 1$. A job completion decreases $|A(t)|$ and
  cannot hurt the inequality. Making an ongoing job fresh again after
  it got promoted increases $|\afull(t)|$. Thus the invariant is
  preserved.
\end{proof}

\manyinbarF*
\begin{proof}
    In each iteration in the construction of the sets $\bar{F}$, there is a largest effective class $k_i$ that satisfies Conditions (i), (ii) and (iii) of the second step in the construction and leads to the removal of some job $j$ in the subsequent third step. 
    We say that $k_i$ \emph{triggers} the removal of a job.
    Note that each such $k_i$ can trigger the removal of at most one job $j$, because there is only one interval $[t_{\ge k_i},t_{\ge k_i}+1)$.

    In order to prove the statement of the lemma, we first claim that if $k_i$ triggers the removal of a job $j$, then the effective class of $j$ is strictly smaller than $k_i$, i.e., $\redk_{j} < k_i$. To see this, recall that by~\Cref{def:effective:class}, the effective class of $j$ is $\redk_{j} = \min\{\fink_j,\max_{\fint_j \le t < \ts} k(t)\}$. By Condition (ii) of the second step in the construction, $j$ reaches its final class at time $\fint_j = t_{\ge k_i} + 1$. Using the definition of $t_{\ge k_i}$, we can conclude
    $$
    \redk_{j} = \min \Bigl\{\fink_j,\max_{\fint_j \le t < \ts} k(t) \Bigr\} \le \max_{\fint_j \le t < \ts} k(t) = \max_{t_{\ge k_i} + 1 \le t < \ts} k(t) < k_i.
    $$

    To prove the statement of the lemma, let $k_1 > \ldots > k_{\ell}$ denote all effective classes that trigger the removal of a job, where the strict inequalities are justified by the fact that each effective class can trigger the removal of at most one job.    
    That is, a total of $\ell$ jobs are removed and $|\bigcup_k \bar{F}(k)| = |\bigcup_k F(k)| - \ell$. We show that $|\bigcup_k \bar{F}(k)| \ge \ell$, which then implies $|\bigcup_k F(k)| \ge 2 \cdot \ell$ and thus the lemma.

    To this end, we show by induction that for all $1 \le d \le \ell$  
    \[
    \Bigl|\bigcup_{k \ge k_d} \bar{F}(k) \Bigr| \ge d.
    \]
    For $d=1$, we have that $k_d = k_1$ is the largest class that triggers the removal of a job in the first iteration of the construction. Since $k_d$ satisfies Condition (i) of the second step, the set $\bar{F}(k_d)$ was not empty in that iteration. We already argued that each effective class $k_i$ only triggers the removal of a job $j$ with a strictly smaller effective class. Hence, the element of $\bar{F}(k_d)$ is not removed in any iteration of the construction. We can conclude the base case with $|\bigcup_{k \ge k_1} \bar{F}(k)| \ge 1$.

    For the induction step, we assume that $|\bigcup_{k \ge k_d} \bar{F}(k)| \ge d$ and show that $|\bigcup_{k \ge k_{d+1}} \bar{F}(k)| \ge d+1$. Since the effective class $k_{d+1}$ satisfies Condition (i) of the construction, we have that $\bar{F}(k_{d+1}) \not= \emptyset$ at the beginning of the iteration where $k_{d+1}$ triggers the removal of a job. Following the same reasoning as in the base case, we know that members of $\bar{F}(k_{d+1})$ are not removed during the iteration or any subsequent iteration. Thus, $\bar{F}(k_{d+1}) \not= \emptyset$ also holds at the end of the construction. Since $k_{d+1} < k_d$ implies $\bar{F}(k_{d+1}) \cap (\bigcup_{k \ge k_d} \bar{F}(k)) = \emptyset$, we can conclude that
    \[ 
    \Bigl|\bigcup_{k \ge k_{d+1}} \bar{F}(k) \Bigr| \ge d+1
    \]
    by using $\bar{F}(k_{d+1}) \not= \emptyset$ and the induction hypothesis $|\bigcup_{k \ge k_d} \bar{F}(k)| \ge d$.
\end{proof}

\end{document}